\pdfoutput=1
\documentclass{article}
\usepackage{amsmath}
\usepackage{amsthm}
\usepackage{mathptmx}
\usepackage[libertine]{newtxmath}
\usepackage[T1]{fontenc}
\usepackage{geometry}
\geometry{verbose,tmargin=2cm,bmargin=2cm,lmargin=2cm,rmargin=2cm}
\setlength{\parskip}{\smallskipamount}
\setlength{\parindent}{0pt}
\usepackage{color}
\usepackage{array}
\usepackage{verbatim}
\usepackage{pifont}
\usepackage{float}
\usepackage{graphicx}
\PassOptionsToPackage{normalem}{ulem}
\usepackage{ulem}
\usepackage{nameref}

\makeatletter

\providecommand{\tabularnewline}{\\}

\usepackage{etoolbox}
\usepackage{setspace}
\usepackage[bottom,hang]{footmisc}
\setlength\footnotemargin{1em}
\setlength{\skip\footins}{\baselineskip}
\usepackage[medium,compact]{titlesec}	
\usepackage{xcolor}
\usepackage{hyperref}	
\hypersetup
{
  unicode=true,
  colorlinks=true,
  linkcolor=[rgb]{0 0 .8},
  citecolor=[rgb]{0 .8 0},
  urlcolor=[rgb]{.8 0 .8},
  breaklinks=true,  
  pdfstartview=,
  bookmarksopen=true,
}
\usepackage[numbered]{bookmark}
\usepackage[all]{hypcap}
\usepackage[nameinlink]{cleveref}	

\AtBeginDocument{\let\ref\cref}	
\crefname{section}{Section}{Sections}
\crefname{figure}{Figure}{Figures}
\crefname{table}{Table}{Tables}
\crefname{example}{Example}{Examples}
\crefname{footnote}{}{}
\crefformat{footnote}{#2\footnotemark[#1]#3}
\makeatletter\g@addto@macro{\UrlBreaks}{\UrlOrds}\makeatother	
\usepackage{graphicx}
\let\oldsqrt\sqrt
\renewcommand{\sqrt}[2][\ \,\,]{{\!\!\oldsqrt[\raisebox{.1em}{\scalebox{.7}{$#1$}}]{#2}\,}}
  \let\originalleft\left
  \let\originalright\right
  \renewcommand{\left}{\mathopen{}\mathclose\bgroup\originalleft}
  \renewcommand{\right}{\aftergroup\egroup\originalright}
  \delimitershortfall=.5em
  \delimiterfactor=700
\binoppenalty=\maxdimen
\relpenalty=14
\fontdimen3\font=.7em	
\fontdimen4\font=.1em	
\emergencystretch=1.0\linewidth
\DeclareMathSizes{10}{10}{8.5}{7}
\DeclareMathSizes{10.95}{10.95}{9}{7}	
\DeclareMathSizes{12}{12}{10}{8}
\makeatletter
\AtBeginDocument
{
  \check@mathfonts
  \fontdimen13\textfont2 = \fontdimen13\textfont2		
  \fontdimen14\textfont2 = \fontdimen13\textfont2		
  \fontdimen15\textfont2 = \fontdimen13\textfont2		
  \fontdimen16\textfont2 = 1.3\fontdimen16\textfont2	
  \fontdimen17\textfont2 = \fontdimen16\textfont2		
  \fontdimen14\scriptfont2 = \fontdimen13\scriptfont2	
  \fontdimen15\scriptfont2 = \fontdimen13\scriptfont2
  \fontdimen16\scriptfont2 = 1.3\fontdimen16\scriptfont2
  \fontdimen17\scriptfont2 = \fontdimen16\scriptfont2
  \fontdimen14\scriptscriptfont2 = \fontdimen13\scriptscriptfont2	
  \fontdimen15\scriptscriptfont2 = \fontdimen13\scriptscriptfont2
  \fontdimen16\scriptscriptfont2 = 1.3\fontdimen16\scriptscriptfont2
  \fontdimen17\scriptscriptfont2 = \fontdimen16\scriptscriptfont2
}
\makeatother
\setlength{\floatsep}{.7em plus .2em minus .1em}
\setlength{\textfloatsep}{\floatsep}
\setlength{\intextsep}{.4em plus .1em minus .1em}
\setlength{\dblfloatsep}{1em plus .3em minus .2em}
\setlength{\dbltextfloatsep}{\dblfloatsep}
\usepackage[skip=.5em]{caption}
\captionsetup{labelfont=bf}
\newcommand{\setmulength}[2]{#1=#2\relax}	
\setmulength{\thinmuskip}{2mu plus 1mu minus 1mu}
\setmulength{\medmuskip}{2mu plus 1mu minus 1mu}
\setmulength{\thickmuskip}{4mu plus 1mu minus 2mu}
\AtBeginEnvironment{pmatrix}
{
  \setlength{\arraycolsep}{.4em}
  
}
\usepackage{microtype}
\DisableLigatures{encoding=OT1,family=ntxtlf}
\DeclareSymbolFont{unspacedletters}{OT1}{ntxtlf}{m}{it}
\SetSymbolFont{unspacedletters}{bold}{OT1}{ntxtlf}{b}{it}
\DeclareMathSymbol{A}{\mathalpha}{unspacedletters}{`A}
\DeclareMathSymbol{B}{\mathalpha}{unspacedletters}{`B}
\DeclareMathSymbol{C}{\mathalpha}{unspacedletters}{`C}
\DeclareMathSymbol{D}{\mathalpha}{unspacedletters}{`D}
\DeclareMathSymbol{E}{\mathalpha}{unspacedletters}{`E}
\DeclareMathSymbol{F}{\mathalpha}{unspacedletters}{`F}
\DeclareMathSymbol{G}{\mathalpha}{unspacedletters}{`G}
\DeclareMathSymbol{H}{\mathalpha}{unspacedletters}{`H}
\DeclareMathSymbol{I}{\mathalpha}{unspacedletters}{`I}
\DeclareMathSymbol{J}{\mathalpha}{unspacedletters}{`J}
\DeclareMathSymbol{K}{\mathalpha}{unspacedletters}{`K}
\DeclareMathSymbol{L}{\mathalpha}{unspacedletters}{`L}
\DeclareMathSymbol{M}{\mathalpha}{unspacedletters}{`M}
\DeclareMathSymbol{N}{\mathalpha}{unspacedletters}{`N}
\DeclareMathSymbol{O}{\mathalpha}{unspacedletters}{`O}
\DeclareMathSymbol{P}{\mathalpha}{unspacedletters}{`P}
\DeclareMathSymbol{Q}{\mathalpha}{unspacedletters}{`Q}
\DeclareMathSymbol{R}{\mathalpha}{unspacedletters}{`R}
\DeclareMathSymbol{S}{\mathalpha}{unspacedletters}{`S}
\DeclareMathSymbol{T}{\mathalpha}{unspacedletters}{`T}
\DeclareMathSymbol{U}{\mathalpha}{unspacedletters}{`U}
\DeclareMathSymbol{V}{\mathalpha}{unspacedletters}{`V}
\DeclareMathSymbol{W}{\mathalpha}{unspacedletters}{`W}
\DeclareMathSymbol{X}{\mathalpha}{unspacedletters}{`X}
\DeclareMathSymbol{Y}{\mathalpha}{unspacedletters}{`Y}
\DeclareMathSymbol{Z}{\mathalpha}{unspacedletters}{`Z}
\DeclareMathSymbol{a}{\mathalpha}{unspacedletters}{`a}
\DeclareMathSymbol{b}{\mathalpha}{unspacedletters}{`b}
\DeclareMathSymbol{c}{\mathalpha}{unspacedletters}{`c}
\DeclareMathSymbol{d}{\mathalpha}{unspacedletters}{`d}
\DeclareMathSymbol{e}{\mathalpha}{unspacedletters}{`e}
\DeclareMathSymbol{f}{\mathalpha}{unspacedletters}{`f}
\DeclareMathSymbol{g}{\mathalpha}{unspacedletters}{`g}
\DeclareMathSymbol{h}{\mathalpha}{unspacedletters}{`h}
\DeclareMathSymbol{i}{\mathalpha}{unspacedletters}{`i}
\DeclareMathSymbol{j}{\mathalpha}{unspacedletters}{`j}
\DeclareMathSymbol{k}{\mathalpha}{unspacedletters}{`k}
\DeclareMathSymbol{l}{\mathalpha}{unspacedletters}{`l}
\DeclareMathSymbol{m}{\mathalpha}{unspacedletters}{`m}
\DeclareMathSymbol{n}{\mathalpha}{unspacedletters}{`n}
\DeclareMathSymbol{o}{\mathalpha}{unspacedletters}{`o}
\DeclareMathSymbol{p}{\mathalpha}{unspacedletters}{`p}
\DeclareMathSymbol{q}{\mathalpha}{unspacedletters}{`q}
\DeclareMathSymbol{r}{\mathalpha}{unspacedletters}{`r}
\DeclareMathSymbol{s}{\mathalpha}{unspacedletters}{`s}
\DeclareMathSymbol{t}{\mathalpha}{unspacedletters}{`t}
\DeclareMathSymbol{u}{\mathalpha}{unspacedletters}{`u}
\DeclareMathSymbol{v}{\mathalpha}{unspacedletters}{`v}
\DeclareMathSymbol{w}{\mathalpha}{unspacedletters}{`w}
\DeclareMathSymbol{x}{\mathalpha}{unspacedletters}{`x}
\DeclareMathSymbol{y}{\mathalpha}{unspacedletters}{`y}
\DeclareMathSymbol{z}{\mathalpha}{unspacedletters}{`z}
\lefthyphenmin=14
\righthyphenmin=14
\pretocmd{\section}{\addvspace{0em plus 2em}\penalty-4000}{}{}
\pretocmd{\subsection}{\addvspace{0em plus 1.5em}\penalty-2000}{}{}
\pretocmd{\subsubsection}{\addvspace{0em plus 1em}\penalty-1000}{}{}
\interlinepenalty=1000
\widowpenalty=1000
\clubpenalty=1000
  \usepackage{enumitem}
  \setlistdepth{14}
  \makeatletter
  \renewcommand\p@enumii{\theenumi}
  \renewcommand\p@enumiii{\theenumi\theenumii}
  \renewcommand\p@enumiv{\theenumi\theenumii\theenumiii}
  \makeatother
  \newlength{\listspace}\setlength{\listspace}{.1em plus .1em}	
  \setlist{topsep=\listspace,itemsep=\listspace,parsep=0em,partopsep=0em}
  \makeatletter
  \@itempenalty=3000
  \makeatother
  \setlist[itemize]{leftmargin=1.5em,beginpenalty=4000,endpenalty=2000}
  \setlist[itemize,1]{leftmargin=1.5em,beginpenalty=2000,endpenalty=0}
  \setlist[enumerate]{leftmargin=2em,beginpenalty=4000,endpenalty=2000}
  \setlist[enumerate,1]{leftmargin=2em,beginpenalty=2000,endpenalty=0}
  \AtBeginEnvironment{itemize}{\interlinepenalty=4000}
  \AtBeginEnvironment{enumerate}{\interlinepenalty=4000}
\makeatletter\newcommand{\justified}{\rightskip\z@skip\leftskip\z@skip}\makeatother
\date{}
\usepackage{upgreek}
\usepackage{calc}	
\newlength{\linespace}
\newcommand\makelinespace{\setlength{\linespace}{\baselineskip-1em}\vspace{\linespace}}
\newlength{\parspace}
\newcommand\makeparspace{\setlength{\parspace}{\parskip+\baselineskip-1em}\vspace{\parspace}}
\newcommand{\forcefontspace}{\par\vspace{-\baselineskip}\vphantom{ABCDEgjpqy}}	
\let\incgraphics\includegraphics
\newsavebox{\imagebox}
\newlength{\imagerule}
\newcommand{\imagescale}{1}
\newcommand{\scalegraphics}[1]{\renewcommand{\imagescale}{#1}}
\renewcommand{\includegraphics}[2][]
{%
\def\image{\scalebox{\imagescale}{\incgraphics[#1]{#2}}}%
\savebox{\imagebox}{\image}%
\setlength{\imagerule}{\ht\imagebox+\baselineskip-.7em}%
\ifvmode{\forcefontspace}\fi\rule[0em]{0em}{\imagerule}\image%
}
\let\oldfigure\figure
\let\oldtable\table
\makeatletter
\def\beginfloat{\centering\vspace{.1em}\makeparspace}
\def\figure@i[#1]{\oldfigure[#1]\beginfloat}
\def\figure@ii{\oldfigure\beginfloat}
\def\figure{\@ifnextchar[\figure@i \figure@ii}
\def\table@i[#1]{\oldtable[#1]\beginfloat}
\def\table@ii{\oldtable\beginfloat}
\def\table{\@ifnextchar[\table@i \table@ii}
\makeatother
\newcommand\beforefloat{\forcefontspace\vspace{.02em}}
\newcommand\afterfloat{\vspace{.02em}}
\BeforeBeginEnvironment{figure}{\beforefloat}
\BeforeBeginEnvironment{table}{\beforefloat}
\AfterEndEnvironment{figure}{\afterfloat}
\AfterEndEnvironment{table}{\afterfloat}
\newlength{\parskipcopy}\setlength{\parskipcopy}{\parskip}
\makeatletter
\def\@minipagerestore{\setlength{\intextsep}{0em}\setlength{\parskip}{\parskipcopy}\vphantom{ABCDEgjpqy}\vspace{-\baselineskip}\vspace{-\parskip}}
\AtEndEnvironment{minipage}{\forcefontspace}
\AfterEndEnvironment{minipage}{\makelinespace}
\let\oldfbox\fbox
\renewcommand{\fbox}[1]{\vspace{.05em}\setlength{\fboxsep}{0em}\oldfbox{#1}\vspace{.2em}\ifvmode{\makelinespace\ensurelinespace}\fi}
\renewcommand{\boxed}[1]{\oldfbox{\m@th$#1$}}	
\makeatother
\everymath{\colormath}
\newcommand{\colormath}{}
\newcommand{\setmathcolor}[1]{\renewcommand{\colormath}{\texorpdfstring{\color{#1}}{}}}
\newcommand{\resetmathcolor}{\renewcommand{\colormath}{}}
\newcommand\colorfnmark{}

\makeatletter
\let\oldfnmark\@makefnmark
\def\@makefnmark{{\everymath{\colorfnmark}\oldfnmark}}
\makeatother

\newtheoremstyle{lwq}
  {0em}	
  {0em}	
  {\normalfont}	
  {0em}	
  {\bfseries}	
  {.}	
  {.3em plus .2em}	
  {\thmname{#1}\thmnumber{ #2}\thmnote{ (#3)}}	
\newtheoremstyle{lwqprf}
  {0em}	
  {0em}	
  {\normalfont}	
  {0em}	
  {\itshape}	
  {.}	
  {.3em plus .2em}	
  {\thmname{#1}\thmnote{ (#3)}}	
\newlength{\thmspace}\setlength{\thmspace}{.1em plus .1em minus .1em}
\newlength{\prfspace}\setlength{\prfspace}{.1em plus .1em minus .1em}
\newcommand\thmbegin{\par\addvspace{\thmspace}\vspace{\parskip}\addpenalty{-300}}
\newcommand\prfbegin{\par\addvspace{\prfspace}\vspace{\parskip}}
\newcommand\thmend{\par\addvspace{\thmspace}\addpenalty{-150}}	
\newcommand\prfend{\par\addvspace{\prfspace}\addpenalty{-200}}	
\newcommand{\theoremname}{Theorem}
\theoremstyle{lwq}\newtheorem{thm}{\protect\theoremname}
\AtBeginEnvironment{thm}{\thmbegin} \AtEndEnvironment{thm}{\thmend}
\newcommand{\definitionname}{Definition}
\theoremstyle{lwq}\newtheorem{defn}[thm]{\protect\definitionname}
\AtBeginEnvironment{defn}{\thmbegin} \AtEndEnvironment{defn}{\thmend}
\newcommand{\setblockspace}
{
  \setlength{\topsep}{0em}
  \setlength{\itemsep}{\listspace}
  \setlength{\parsep}{0em}
  \setlength{\partopsep}{\listspace}
  \setlength{\leftmargin}{1.5em}
}
\newcommand{\setblockflush}{}
\makeatletter
\newcommand{\setblockpenalties}
  {\interlinepenalty=3000
  \ifdef{\insideblock}
    {\@beginparpenalty=4000\@endparpenalty=2000}
    {\@beginparpenalty=2000\@endparpenalty=0\def\insideblock{}}}
\makeatother
\newenvironment{block}
  {\par\nobreak\setblockpenalties\list{}{\setblockspace}\setstretch{1}\setblockflush}
  {\endlist}
\renewcommand{\proofname}{Proof}
\theoremstyle{lwqprf}\newtheorem{prf}{\protect\proofname}
\renewenvironment{proof}[1][]{\prfbegin\begin{prf}[#1]\pushQED{\qed}}{\popQED\end{prf}\prfend}
\renewcommand{\qed}{\hfill{}\hspace{2em minus 1em}\qedsymbol}
\newenvironment{centerbox}
{\par\begin{centering}}
{\par\end{centering}}
\newcommand{\lemmaname}{Lemma}
\theoremstyle{lwq}\newtheorem{lem}[thm]{\protect\lemmaname}
\AtBeginEnvironment{lem}{\thmbegin} \AtEndEnvironment{lem}{\thmend}
\newcommand{\remarkname}{Remark}
\theoremstyle{lwq}
\AtBeginEnvironment{rem}{\thmbegin} \AtEndEnvironment{rem}{\thmend}
\theoremstyle{lwq}\newtheorem*{rem*}{\protect\remarkname}
\AtBeginEnvironment{rem*}{\thmbegin} \AtEndEnvironment{rem*}{\thmend}
\usepackage[framemethod=TikZ]{mdframed}
\mdfdefinestyle{mdroundedboxinfloat}
{
  skipabove=0em,
  skipbelow=0em,
  roundcorner=.3em,
  innertopmargin=.3em,
  innerbottommargin=.3em,
  innerrightmargin=.3em,
  innerleftmargin=.3em,
  backgroundcolor=none,
  linecolor=gray,
}
\newenvironment{roundedboxinfloat}
{\vspace{-.3em}\begin{mdframed}[style=mdroundedboxinfloat]\vspace{.1em}\forcefontspace}
{\forcefontspace\end{mdframed}\unskip\vspace{-.24em}}

\AtBeginDocument{\fontsize{10.5}{13}\selectfont}
\setlength{\parskip}{.4em plus .2em minus .2em}
\AtBeginEnvironment{block}{\fontsize{10.5}{12}\selectfont}
\BeforeBeginEnvironment{tabular}{\makeparspace}
\AfterEndEnvironment{tabular}{\makeparspace}
\usepackage{url}
\definecolor{green}{rgb}{.3,.7,0}

\AtBeginDocument{

}

\makeatother

\begin{document}

\renewcommand{\qed}{\hfill{}\hspace{2em minus 1em}\scalebox{1.4}{$\diamond$}}

\newcommand\br{\addpenalty{-1000}}

\newcommand\customref[2]{{\crefname{enumi}{#1}{}\ref{#2}}}
\newcommand\step[1]{\customref{step}}
\newcommand\point[1]{\customref{point}}
\makeatletter\newcommand\clabel[1]{\phantomsection\def\@currentlabelname{#1}}\makeatother

\global\long\def\nn{\mathbb{N}}
\global\long\def\zz{\mathbb{Z}}
\global\long\def\qq{\mathbb{Q}}
\global\long\def\rr{\mathbb{R}}

\global\long\def\wi{\subseteq}
\global\long\def\co{\supseteq}
\global\long\def\nwi{\nsubseteq}
\global\long\def\nco{\nsupseteq}
\global\long\def\none{\varnothing}
\global\long\def\less{\smallsetminus}

\global\long\def\ii{\mathbf{1}}
\global\long\def\pp{\mathbf{P}}
\global\long\def\ee{\mathbf{E}}
\global\long\def\vv{\mathbf{Var}}
\global\long\def\cv{\mathbf{Cov}}

\global\long\def\floor#1{\left\lfloor #1\right\rfloor }
\global\long\def\ceil#1{\left\lceil #1\right\rceil }

\global\long\def\cond#1#2#3{\left(\vphantom{#1#2#3}\right.\,#1\mathrel{\,?\,}\allowbreak#2\mathrel{\,:\,}\allowbreak#3\,\left.\vphantom{#1#2#3}\right)}

\global\long\def\f#1{\operatorname{#1}}

\global\long\def\a#1{.\mathrm{#1}}

\global\long\def\e{\upvarepsilon}

\global\long\def\tbox#1{\boxed{\text{#1}}}

\global\long\def\stbox#1{\scalebox{0.8}{\boxed{\text{#1}}}}

\global\long\def\smbox#1{\scalebox{0.8}{\boxed{#1}}}

\global\long\def\grey#1{\textcolor{gray}{#1}}

\global\long\def\ubrace#1#2{\underbrace{\underset{}{{\strut}#1{\strut}}}_{#2}}

\global\long\def\vsp#1{\vspace{#1em}}

\global\long\def\scale#1#2{\scalebox{#2}{\ensuremath{#1}}}

\global\long\def\lift#1#2{\raisebox{#2}{\ensuremath{#1}}}

\global\long\def\tilt#1#2{\rotatebox{#2}{\ensuremath{#1}}}

\global\long\def\tr{\mathbb{T}}

\scalegraphics{.7}

\newcommand\aligntop[1]
{
\savebox{\imagebox}{#1}
\raisebox{.5em-\ht\imagebox}{#1}
}
\newcommand\alignmid[1]
{
\savebox{\imagebox}{#1}
\raisebox{(\baselineskip-\ht\imagebox)/2}{#1}
}
\newcommand\overdown[1]
{
\savebox{\imagebox}{#1}
\smash{\raisebox{\baselineskip-\ht\imagebox}{#1}}
}
\newcommand\overmid[1]
{
\savebox{\imagebox}{#1}
\smash{\raisebox{(\baselineskip-\ht\imagebox)/2}{#1}}
}
\newcommand\overup[1]
{
\smash{#1}
}

\noindent \begin{center}
\textbf{\huge{}Optimal Multithreaded Batch-Parallel 2-3 Trees}
\par\end{center}{\huge \par}

\noindent \begin{center}
\begin{tabular}{>{\centering}p{0.3\textwidth}}
\textbf{\large{}Wei Quan Lim}{\large \par}

National University of Singapore\tabularnewline
\end{tabular}
\par\end{center}

\section*{Keywords}

Parallel data structures, pointer machine, multithreading, dictionaries,
2-3 trees.

\section*{Abstract}

This paper presents a \textbf{batch-parallel 2-3 tree $\tr$} in
an \textbf{\textit{asynchronous dynamic multithreading model}} that
supports searches, insertions and deletions in sorted batches and
has essentially optimal parallelism, even under the restrictive \textbf{\textit{QRMW
(queued read-modify-write) memory contention model}} where concurrent
accesses to the same memory location are queued and serviced one by
one.

Specifically, if $\tr$ has $n$ items, then performing an item-sorted
batch (given as a leaf-based balanced binary tree) of $b$ operations
on $\tr$ takes $O\left(b\cdot\log\left(\frac{n}{b}+1\right)+b\right)$
work and $O(\log b+\log n)$ span (in the worst case as $b,n\to\infty$).
This is information-theoretically \textbf{\textit{work-optimal}} for
$b\le n$, and also \textbf{\textit{span-optimal}} for pointer-based
structures. Moreover, it is easy to support \textbf{\textit{optimal}}
intersection, union and difference of instances of $\tr$ with sizes
$m\le n$, namely within $O\left(m\cdot\log\left(\frac{n}{m}+1\right)\right)$
work and $O(\log m+\log n)$ span. Furthermore, $\tr$ supports other
batch operations that make it a very useful building block for parallel
data structures.

To the author's knowledge, $\tr$ is the first \textbf{\textit{parallel
sorted-set data structure}} that can be used in an \textbf{\textit{asynchronous}}
multi-processor machine under a memory model with \textbf{\textit{queued
contention}} and yet have asymptotically optimal work and span. In
fact, $\tr$ is designed to have \textbf{\textit{bounded contention}}
and satisfy the claimed work and span bounds regardless of the execution
schedule.

Since all data structures and algorithms in this paper fit into the
dynamic multithreading paradigm, all their performance bounds are
directly \textbf{\textit{composable}} with those of other data structures
and algorithms in the same model. Finally, the \textbf{\textit{pipelining
techniques}} in this paper are also likely to be very useful in asynchronous
parallelization of other recursive data structures.

\section*{Acknowledgements}

I am exceedingly grateful to my family and friends for their unfailing
support, and to all others who have given me valuable comments and
advice. In particular, I would like to specially thank my brother
Wei Zhong Lim and my supervisor Seth Gilbert for very helpful discussions
and feedback. This research was partly supported by Singapore MOE
AcRF Tier 2 project MOE2018-T2-1-160.

\section{Introduction}

The \textbf{dynamic multithreading paradigm} (see~\cite{CormenLeRi09}
chap.~27) is a common parallel programming model underlying many
parallel languages and libraries such as Java~\cite{javaconcurrency},
OpenMP~\cite{OpenMP}, Cilk dialects~\cite{Cilk,IntelCilkPlus13},
Intel Thread Building Blocks~\cite{TBB} and the Microsoft Task Parallel
Library~\cite{TPL}. In this paradigm, programs can use programming
primitives such as threads, fork/join (also spawn/sync), parallel
loops and synchronization primitives, but cannot stipulate how the
subcomputations are scheduled for execution on the processors.

We consider a multithreaded procedure (which can be an algorithm or
a data structure operation) to be correct if and only if it has the
desired behaviour regardless of the execution schedule. Moreover,
we wish to obtain good bounds on the work and span of the procedure,
preferably independent of the execution schedule.

Unfortunately, many data structures and algorithms are designed in
theoretical computation models with synchronous processors, such as
the (synchronous) PRAM models, and so they can be difficult or impossible
to implement in dynamic multithreading with the same asymptotic work/time
bounds once we take memory contention into account, such as under
the QRMW (queued read-modify-write) contention model described in
\ref{sec:model}, which captures both the \textbf{\textit{asynchronocity}}
and \textbf{\textit{contention costs}} inherent in running multithreaded
procedures on most real multi-processor machines. Thus it is desirable
to have as many useful algorithms and data structures as possible
designed in computation models compatible with dynamic multithreading.

One indispensable data structure is the \textbf{map} (or \textbf{dictionary})
data structure, which supports searches/updates, inserts and deletes
(collectively referred to as \textbf{accesses}) of items from a linearly
ordered set. Balanced binary trees such as the AVL tree or the red-black
tree are commonly used to implement a sequential map, taking $O(\log n)$
worst-case cost (in the comparison model) per access for a tree with
$n$ items. A related data structure is the \textbf{sorted-set} data
structure, which supports intersection, union and difference of any
two sets. Using maps based on balanced binary trees to implement sorted-sets
yields $O(\min(m,n)\cdot\log\max(m,n))$ worst-case cost of each set
operation where $m,n$ are the sizes of the input sets.

The obvious question is whether we can have an efficient multithreaded
parallel map, or an efficient multithreaded sorted-set, or both. In
this paper we describe a pointer-based \textbf{multithreaded batch-parallel
2-3 tree} $\tr$ that is both information-theoretically \textbf{\textit{work-optimal}}
and \textbf{\textit{span-optimal}} even under the QRMW contention
model. Here the input batch is given as a leaf-based balanced binary
tree. Specifically, performing a sorted batch of $b$ accesses on
an instance of $\tr$ with $n$ items takes $O\left(b\cdot\log\left(\frac{n}{b}+1\right)+b\right)$
work and $O(\log b+\log n)$ span. This is superior to the work and
span bounds of the PVW 2-3 tree~\cite{paul1983paralleldict} despite
not having the luxury of lock-step synchronous processors.

Furthermore, since $\tr$ is a multithreaded data structure whose
performance bounds are independent of the schedule, it is trivially
\textbf{\textit{composable}}, which means that we can use $\tr$ as
a black-box data structure in any multithreaded algorithm and easily
obtain composable performance bounds. Indeed, the parallel working-set
map in \cite{OPWM} and the parallel finger structure in \cite{OPFS}
both rely such a parallel 2-3 tree as a key building block.

\section{Related Work}

To illustrate the difficulty of converting data structures designed
in the PRAM models to efficient multithreaded implementations, consider
the PVW 2-3 tree~\cite{paul1983paralleldict} that supports performing
an item-sorted batch of searches, insertions or deletions, which was
designed in the EREW PRAM model.

Performing a sorted batch of searches in the PVW 2-3 tree involves
splitting the batch into contiguous subbatches and pushing them down
the tree in non-overlapping waves. This is easy with synchronous processors,
but it is non-trivial to translate that pipelining technique to an
asynchronous setting with queued memory contention, since a naive
use of locking to prevent overlapping waves would cause the worst-case
span to increase from $O(\log b+\log n)$ to $O\left(\log b\cdot\log n\right)$.
\ref{sub:pipelined-splitting} shows how this can be done. But performing
a sorted batch of $b$ insertions or deletions on the PVW 2-3 tree
with $n$ items involves spawning $O(\log b)$ non-overlapping waves
of structural changes from the bottom of the 2-3 tree upwards to the
root, and for this there does not seem any way to eliminate the reliance
on the processors' lock-step synchronicity.

Other map data structures in the PRAM models include parallel B-trees
by Higham et al.~\cite{higham1991parallelbtree}, parallel red-black
trees by Park et al.~\cite{park2001parallelrbtree} and parallel
$(a,b)$-trees by Akhremtsev et al.~\cite{AkhremtsevS16}, all of
which crucially rely on lock-step synchronous processors as well.

A different approach of pipelining using futures by Blelloch et al.~\cite{BlellochRe97}
yields an implementation of insertion into a variant of PVW 2-3 trees
that requires not only a CREW/EREW PRAM but also a unit-time plus-scan
(all-prefix-sums) operation. The multithreaded parallel sorted-sets
presented by Blelloch et al. in \cite{BlellochFS16} take $O\left(b\cdot\left(\log\frac{n}{b}+1\right)\right)$
work but up to $\Theta(\log b\cdot\log n)$ span per operation between
two sets of sizes $n,b$ where $n\ge b$. The span was reduced to
$O(\log b+\log n)$ by Blelloch et al. in \cite{blelloch2019forkjoinalgo},
but that algorithm as written relies on $O(1)$-time concurrent reads
and may take $\Omega(\sqrt{n})$ span in a queued memory contention
model.

This paper shows that, using special pipelining schemes, it is actually
possible to design a \textbf{\textit{multithreaded}} batch-parallel
2-3 tree that takes $O\left(b\cdot\left(\log\frac{n}{b}+1\right)+b\right)$
work and $O(\log b+\log n)$ span, even if only \textbf{\textit{bounded
memory contention}} is permitted. The techniques shown here can likely
be adapted to pipeline top-down operations on many other tree-based
data structures.

\section{Main Results}

\label{sec:main}

This paper presents, to the author's best knowledge, the first \textbf{\textit{multithreaded
sorted-set data structure}} that can be run on an \textbf{\textit{asynchronous}}
parallel pointer machine and achieves \textbf{\textit{optimal}} work
and span bounds even under \textbf{\textit{queued memory contention}}.

Specifically, the underlying data structure $\tr$ is a pointer-based
batch-parallel 2-3 tree that works in the QRMW contention model (see
\ref{sec:model}) and supports performing an item-sorted batch of
$b$ accesses within $O\left(b\cdot\log\left(\frac{n}{b}+1\right)+b\right)$
work and $O(\log b+\log n)$ span (in the worst case as $b,n\to\infty$)
where $n$ is the current size of $\tr$ (\ref{sub:P23T-normal}).
Here we of course assume that we are given an $O(1)$-step comparison
function on pairs of items (i.e.~the comparison model), but there
is no loss of generality. This is information-theoretically \textbf{\textit{work-optimal}}
for $b\le n$, and also \textbf{\textit{span-optimal}} in the parallel
pointer machine model. Furthermore, the input batch can be any balanced
binary tree, including even another instance of $\tr$, and hence
$\tr$ can be used to implement \textbf{\textit{optimal}} persistent
sorted sets supporting intersection, union and difference of sets
with sizes $m\le n$ in $O\left(m\cdot\log\left(\frac{n}{m}+1\right)\right)$
work and $O(\log m+\log n)$ span (\ref{sec:PSS}).

$\tr$ also supports performing an unsorted batch of $b$ searches
within $O(b\cdot\log n)$ work and $O(\log b\cdot\log n)$ span (\ref{sub:P23T-unsorted}),
or an unsorted batch of $b$ accesses within $O(b\cdot\log\max(n,n'))$
work and $O\left((\log b)^{2}+\log n\right)$ span where $n'$ is
the size of $\tr$ after the batch operation (\ref{sub:P23T-full-access}).
These are useful when $b\gg n$. Additionally, $\tr$ supports performing
a reverse-indexing on an unsorted batch of $b$ direct pointers to
distinct items in it, which yields a sorted batch of those items within
$O\left(b\cdot\log\frac{n}{b}+b\right)$ work and $O(\log n)$ span
(\ref{sub:P23T-reverse}).

Actually, $\tr$ is designed to have \textbf{\textit{bounded contention}},
meaning that there is some constant $c$ such that every operation
on $\tr$ never makes more than $c$ concurrent accesses to the same
memory location.

\section{Key Ideas}

$\tr$ uses a \textbf{\textit{pipelined splitting scheme}} to partition
the 2-3 tree itself around the operations in the input batch, and
then performs each operation on its associated part, and then uses
a \textbf{\textit{pipelined joining scheme}} to join the parts back
up. Both pipelining schemes are top-down. The splitting scheme is
similar to the search in the PVW 2-3 tree, except that we push the
2-3 tree down the input batch, rather than the input batch down the
2-3 tree. But the joining scheme is completely different from the
bottom-up restructuring in the PVW 2-3 tree.

The main difficulty in both the splitting phase and joining phase
is in finding a top-down procedure that can be decomposed into `independent'
local procedures each of which runs in $O(1)$ span, which can then
be pipelined. This is not so hard for the splitting phase, but for
the joining phase this seems to require using integers to maintain
the structure of the spine nodes over a sequence of joins but without
actually performing the joins (see the outline in \ref{sec:P23T}).

\clearpage{}

\section{Parallel Computation Model}

\label{sec:model}

In this section, we describe the programming model as well as the
underlying memory model chosen in this paper.

\subsection{Programming Model}

\label{sub:prog-model}

\global\long\def\new#1{\f{new}\ \text{#1}}

We shall work within a high-level object-oriented dynamic multithreading
model that supports procedures and threads with standard multithreading
primitives (\textbf{\textit{terminate}}, \textbf{\textit{suspend}},
\textbf{\textit{fork}}, \textbf{\textit{join}}, \textbf{\textit{resume}}).
In this model, a thread $\tau$ can \textbf{\textit{terminate}} itself,
or \textbf{\textit{fork}} a new thread (obtaining a reference to it),
or \textbf{\textit{join}} to another thread $\upsilon$ (i.e.~wait
until $\upsilon$ terminates). Or $\tau$ can \textbf{\textit{suspend}}
itself (i.e.~temporarily stop running), and another thread (with
a reference to $\tau$) can \textbf{\textit{resume}} $\tau$ (i.e.~make
it continue running after the suspension). $\tau$ can also obtain
a reference to itself. Each of these takes $O(1)$ steps.

This programming model also supports the standard \textbf{RMW (read-modify-write)}
operations (including read, write, test-and-set, fetch-and-add, compare-and-swap),
as in almost all modern architectures. To capture contention costs,
we adopt the \textbf{QRMW (queued read-modify-write)} contention model,
as described in \cite{dwork1997contention}, in which RMW operations
on each memory location are FIFO-queued to be serviced, with only
one RMW operation on that memory location serviced per time step.
The thread making each memory request is blocked until the request
has been serviced. Additionally, we require each memory location to
be a named field of an object, denoted by ``$x\a d$'' where $x$
is a reference to the object and ``$\text{d}$'' is the name of
the field.

The actual complete execution of a multithreaded computation is captured
by its \textbf{execution DAG} $E$ (which may be schedule-dependent),
in which each node is a primitive instruction weighted by the time
taken to execute it, and the directed edges represent the computation
dependencies. Specifically, each thread $\tau$ executes a sequence
of instructions, the first one depending on the \textbf{\textit{fork}}
instruction that forked $\tau$, and every subsequent instruction
depending on the one just before it. The first instruction after a
\textbf{\textit{join}} instruction to join with thread $\upsilon$
depends also on the last instruction executed by $\upsilon$. And
the first instruction after a \textbf{\textit{suspend}} instruction
executed by $\tau$ depends also on the \textbf{\textit{resume}} instruction
that resumed $\tau$. But there are no dependencies between concurrent
accesses to the same memory location, even though they are linearized
during the actual execution under the QRMW contention model.

We can view $E$ as being dynamically generated as the computation
proceeds, where at any point during execution, a \textbf{ready node}
in $E$ (i.e.~node whose parents have been executed) corresponds
to a running thread, and a \textbf{scheduler} is used to assign ready
nodes to available processors (i.e.~processors that are not executing
any nodes) for execution. A \textbf{greedy scheduler} on each step
assigns as many unassigned \textbf{ready nodes} as possible to available
processors for execution.

We can now define work and span of a (terminating) multithreaded computation.
This allows us to capture the intrinsic costs incurred by the computation
itself, separate from the costs of any multithreaded program using
it.
\begin{defn}[Subcomputation Work/Span]
\label{def:work-span} Take any multithreaded computation on $p$
processors with execution DAG $E$, and any subcomputation $C$ (identified
with a subset of the nodes in $E$). The \textbf{work} taken by $C$
is the total weight of $C$. The \textbf{span} taken by $C$ is the
maximum possible total weight of the nodes in $C$ that lie on any
(directed) path in $E$.
\end{defn}
Note that work/span is \textbf{\textit{subadditive}} across subcomputations,
so performance bounds for algorithms and data structures in this model
are \textbf{\textit{composable}}. Moreover, all the algorithms and
data structures in this paper achieve the stated work/span bounds
independent of the scheduler.

In our analysis of a computation with execution DAG $E$, we will
often reason about procedure run \textbf{fragments}, meaning a contiguous
sequence of instructions executed during some procedure run. We will
also reason about the relations between events that happened during
the computation, where each such event $X$ is associated with some
set $\boldsymbol{\Gamma}_{X}$ of nodes in $E$ whose execution was
required for $X$ to happen, where $\Gamma_{X}$ is closed under ancestors.
Events are often described in terms of the start or end of certain
fragments, such as the start or end of a procedure run. For this we
define the following notion of span between events.
\begin{defn}[Inter-Event Span]
 Take any events $X,Y$ in a multithreaded computation with execution
DAG $E$. We say that $Y$ happens \textbf{within $s$ span after}
$X$ iff $\Gamma_{Y}\less\Gamma_{X}$ has at most $s$ span in $E$.
\end{defn}
For convenience, we shall also say that a fragment $C$\textbf{ }takes
$s$ \textbf{steps} iff $C$ has no \textbf{\textit{suspend}} node
and takes $s$ work, in which case $C$ also ends within at most $s$
span after it starts.

\subsection{Memory Model}

\label{sub:mem-model}

The QRMW contention model was chosen because the synchronous PRAM
model makes unrealistic assumptions including lock-step synchronicity
of processors and lack of collision on concurrent memory accesses
to the same locations \cite{gibbons1996asyncqrqw,gibbons1998qrqw,rauber2013parallel}.
For example, the load latency in the Cray XMT increases roughly linearly
with number of concurrent accesses to the same address but stays roughly
constant when the concurrent accesses are to random addresses \cite{secchi2011contentionxmt}.

In the QRMW PPM model, generalizing the PPM model in \cite{goodrich1996parallelsort}
to cater to the QRMW contention model, processors are asynchronous
and all accesses to shared memory are done via pointers, which can
be locally stored or tested for equality (but no pointer arithmetic).
More precisely, each pointer (if not $null$) is to a memory node,
which has a fixed number of memory cells. Each memory cell can hold
a single field, which is either an integer or a pointer. Each processor
also has a fixed number of local registers, each of which can hold
a single field. At each step, each processor (that has finished its
previous operation) can start any one of the following operations,
which except for RMW operations finishes in one step:
\begin{enumerate}
\item Perform a basic arithmetic operation~\footnote{In this paper we use only integer addition, subtraction, multiplication,
modulo and equality.} on integers in its registers, storing the result in another register.
\item Perform an equality-test between pointers in its registers, storing
the result ($0$ or $1$) in an integer register.
\item Perform an RMW operation on a memory cell via a pointer to the memory
node that it belongs to.
\item Create a new memory node, storing a pointer to it in a register.
\end{enumerate}
It is easy to see that the high-level programming model in \ref{sub:prog-model}
can be implemented in the QRMW PPM model if we have a greedy scheduler.
It also turns out that a greedy scheduler can be approximated in the
QRMW PPM model (with local RAM of size $p$) by a suitable work-stealing
scheduler such as the one in \cite{OPFS}, in the sense that any multithreaded
computation that takes $w$ work and $s$ span on $p$ processors
takes $O\left(\frac{w}{p}+s\right)$ expected time when run using
that work-stealing scheduler. All these results hold in the QRMW PRAM
model (i.e.~asynchronous PRAM with the QRMW contention model) as
well, though it is worth noting that the QRMW PPM model requires more
sophisticated techniques because we cannot use pointer arithmetic.

\section{Higher Synchronization Primitives}

\label{sec:high-sync}

In this model we can implement the following synchronization primitives:
\begin{enumerate}
\item \textbf{Non-blocking locks}: A lock $L$ can be \textbf{\textit{acquired}}
and \textbf{\textit{released}} over time, and is said to be \textbf{\textit{held}}
by a thread $\tau$ iff it has been acquired by $\tau$ but not yet
released. A non-blocking lock $L$ is implemented as a boolean field.
Operations on $L$ appear to be serialized (i.e.~queued and performed
one by one). $\f{\textbf{TryLock}}(L)$ attempts to acquire $L$ and
succeeds if $L$ is not currently held but fails otherwise, and returns
a boolean indicating whether it is successful. $\f{\textbf{Unlock}}(L)$
releases $L$. If at most $O(1)$ threads concurrently access $L$
(via any operation), then each access takes $O(1)$ steps (hence ``\textbf{\textit{non-blocking}}'').
\item \textbf{Barriers}: A single thread can \textbf{\textit{wait}} at a
barrier $B$ until another thread has \textbf{\textit{notified}} $B$
to allow any waiting thread to continue. Notification takes $O(1)$
steps, and waiting at $B$ takes $O(1)$ work and will return within
$O(1)$ span after the wait has started and $B$ has been notified.
There must be only one waiting thread and one notifying thread for
each barrier.
\item \textbf{Reactivation calls}: A procedure $P$ with no input/output
can be encapsulated by a reactivation wrapper, in which it can be
run only via \textbf{\textit{reactivations}}. If there are always
at most $O(1)$ concurrent reactivations of $P$, then the following
appear to hold (see \ref{thm:reactivation-prop} for more precise
guarantees):

\begin{enumerate}
\item Each reactivation call $C$ takes $O(1)$ steps and occurs at a single
point during $C$, called a \textbf{reactivation point}. All reactivation
points are distinct. At each reactivation point, if $P$ is not currently
running then it starts running (in another thread), otherwise it will
run again after its current run finishes.
\item Each run of $P$ starts within $O(1)$ span after \textbf{\textit{either}}
the end of the previous run of $P$ \textbf{\textit{or}} after the
start of a reactivation call with reactivation point after all previous
runs.
\item The total work done by the reactivation wrapper is $O(1)$ times the
number of reactivations.
\end{enumerate}
\end{enumerate}

All data structures and algorithms in this paper will be expressed
in procedural form using only the standard multithreading primitives,
these synchronization primitives and read/write (i.e.~we will not
directly use any other RMW operations).

We now explain how to implement the higher synchronization primitives
with proofs of their correctness and cost bounds.

\subsection{Non-Blocking Lock}

The \textbf{non-blocking lock} is trivially implemented using test-and-set
on a boolean field used to store the lock's state.
\begin{defn}[Non-Blocking Lock Operations]
\label{def:try-lock}~
\begin{block}
\item \textbf{TryLock( Bool Field $x$ ):}

\begin{block}
\item Return $\neg\f{TestAndSet}(x)$.
\end{block}
\item \textbf{Unlock( Bool Field $x$ ):}

\begin{block}
\item Set $x:=false$.
\end{block}
\end{block}
\end{defn}
If we wish, we can of course have a non-blocking lock \textbf{\textit{object}}
with a boolean field for its state, and implement the lock operations
on it in the same manner.

\subsection{Wait-Notify Barrier}

Using threads and test-and-set, we can also implement a \textbf{barrier},
where a condition $C$ is initially false, and one thread can wait
for $C$ to become true (getting blocked if it is not), and another
thread can notify that $C$ has become true.~\footnote{This is different from Java's semantics, because in Java a notify()
before a wait() would fail to trigger the waiting thread.}
\begin{defn}[Barrier]
\label{def:barrier}~
\begin{block}
\item Private Bool $C:=false$.
\item Private Bool $pass:=false$.
\item Private Thread $t:=null$.
\item \textbf{Public Wait():}

\begin{block}
\item Set $t:=\f{CurrentThread}$.
\item If $\neg\f{TestAndSet}(pass)$, suspend (current thread). 
\end{block}
\item \textbf{Public Notify():}

\begin{block}
\item Set $C:=true$.
\item If $\f{TestAndSet}(pass)$, resume $t$.
\end{block}
\item \textbf{Public Notified():}

\begin{block}
\item Return $C$.
\end{block}
\end{block}
\end{defn}
\begin{thm}[Barrier Properties]
\label{thm:barrier-prop} If there is only one call to Wait() and
one call to Notify(), then the following hold:
\begin{enumerate}
\item Wait() will not return before Notify() is called.
\item Wait() takes $O(1)$ work and will return within $O(1)$ span after
both Wait() and Notify() have been called.
\item Notify() takes $O(1)$ steps, after which Notified() will always return
$true$.
\end{enumerate}
\end{thm}
\begin{proof}
It is easy to see that expression ``$\f{TestAndSet}(pass)$'' is
evaluated exactly twice, once by Wait() and once by Notify(), the
first time to $false$ and the second time to $true$. If Wait() evaluates
that expression to $false$, then it suspends the thread calling it
after having stored (a pointer to) that thread in $t$, and so the
Notify() would subsequently evaluate that expression to $true$ and
resume that thread, within $O(1)$ span after Wait() and Notify()
have been called. If instead Wait() evaluates that expression to $true$,
Notify() must have already evaluated that expression to $false$,
and Wait() would return within $O(1)$ span.
\end{proof}

\subsection{Reactivation Wrapper}

Next is the \textbf{reactivation} wrapper for an inputless procedure
$P$, which guarantees roughly that the runs of $P$ will never overlap
if $P$ is run only via reactivations (i.e.~by calling Reactivate()),
and that there is always a complete run of $P$ as soon as possible
after each reactivation of $P$, and moreover each reactivation triggers
at most one run of $P$.
\begin{defn}[Reactivation Wrapper]
\label{def:reactivation}~
\begin{block}
\item Private Procedure $P$.\quad{}// $P$ is the procedure to be guarded
by the wrapper.
\item Private Int $count:=0$.
\item \textbf{Public Reactivate():}

\begin{block}
\item If $\f{FetchAndAdd}(count,1)=0$:

\begin{block}
\item Fork:

\begin{block}
\item Do:

\begin{block}
\item Set $count:=1$.
\item Call $P()$.
\end{block}
\item While $\f{FetchAndAdd}(count,-1)>1$.
\end{block}
\end{block}
\end{block}
\end{block}
\end{defn}
We now give the precise guarantees of the reactivation wrapper along
with an illustrated example.
\begin{centerbox}
\begin{figure}[H]
\begin{roundedboxinfloat}
\begin{centerbox}
\setmathcolor{brown}

\global\long\def\reactpoint{\lift{\downarrow}{.3em}}

\global\long\def\initpoint{\lift{\tilt{\Lsh}{180}}{.7em}}

\begin{tabular}{lcccccccccccccccccccccc}
\cline{3-6} \cline{8-22} 
Reactivation & \multicolumn{1}{c|}{} & $C_{1}$ &  &  & \multicolumn{1}{c|}{} & \multicolumn{1}{c|}{} & $C_{5}$ &  &  &  &  &  &  &  &  &  &  &  &  &  & \multicolumn{1}{c|}{} & \tabularnewline
\cline{3-6} \cline{8-22} 
calls: &  & $\boldsymbol{\reactpoint}$ &  &  &  &  &  &  &  &  &  &  &  &  & $\boldsymbol{\reactpoint}$ &  &  &  &  &  &  & \tabularnewline
\cline{5-5} \cline{7-8} \cline{11-11} \cline{15-17} 
 &  &  & \multicolumn{1}{c|}{} & \multicolumn{1}{c|}{$C_{2}$} & \multicolumn{1}{c|}{} & $C_{3}$ & \multicolumn{1}{c|}{} &  & \multicolumn{1}{c|}{} & \multicolumn{1}{c|}{$C_{4}$} &  &  & \multicolumn{1}{c|}{} & $C_{6}$ &  & \multicolumn{1}{c|}{} &  &  &  &  &  & \tabularnewline
\cline{5-5} \cline{7-8} \cline{11-11} \cline{15-17} 
 &  &  &  & $\boldsymbol{\reactpoint}$ &  &  & $\reactpoint$ &  &  & $\boldsymbol{\reactpoint}$ &  &  &  &  &  & $\reactpoint$ &  &  &  &  &  & \tabularnewline
$\vsp{-.5}$ &  &  &  &  &  &  &  &  &  &  &  &  &  &  &  &  &  &  &  &  &  & \tabularnewline
\cline{5-8} \cline{10-12} \cline{14-15} \cline{19-21} 
Runs of $P$: &  &  & \multicolumn{1}{c|}{$\initpoint$} & $R_{1}$ &  &  & \multicolumn{1}{c|}{} & \multicolumn{1}{c|}{$\initpoint$} & $R_{2}$ &  & \multicolumn{1}{c|}{} & \multicolumn{1}{c|}{$\initpoint\ $} & $R_{3}$ & \multicolumn{1}{c|}{} &  &  & \multicolumn{1}{c|}{$\initpoint$} & $R_{4}$ &  & \multicolumn{1}{c|}{} &  & \tabularnewline
\cline{5-8} \cline{10-12} \cline{14-15} \cline{19-21} 
\end{tabular}

$\vsp{.5}$

\resetmathcolor

\textit{\small{}Each rectangle represents the time interval for a
reactivation call $C_{i}$ or a run $R_{i}$ of $P$.}\\
\textit{\small{}``$\reactpoint$'' represents a reactivation point,
and ``$\initpoint$'' represents an initiation point.}\\
\textit{\small{}$R_{1},R_{2},R_{3,}R_{4}$ are triggered by $C_{1},C_{2},C_{4},C_{5}$
respectively.}\\
\textit{\small{}$R_{1}$ must start within $O(1)$ span after the
start of $C_{1}$.}\\
\textit{\small{}$R_{2},R_{3}$ must each start within $O(1)$ span
after the end of the previous run.}\\
\textit{\small{}$R_{4}$ starts within $O(1)$ span after the previous
run or after the start of $C_{5}$.}{\small \par}
\end{centerbox}
\caption{\label{fig:reactivation} Example subcomputation generated by the
reactivation wrapper for $P$}
\end{roundedboxinfloat}
\end{figure}
\end{centerbox}
\begin{thm}[Reactivation Wrapper Properties]
\label{thm:reactivation-prop} Consider only the reactivation calls
to $P$ (i.e.~calls to Reactivate()), and only the runs of $P$ initiated
by them (i.e.~via calls of $P()$ from Reactivate()). Then the following
hold if $P$ is reactivated by at most $k$ threads concurrently (see
\ref{fig:reactivation}):
\begin{enumerate}
\item No two (such) runs of $P$ overlap (in time).
\item Each reactivation call $C$ takes $O(k)$ span.
\item Each reactivation call $C$ can be associated with some \textbf{reactivation
point} during $C$, and each run $R$ of $P$ can be associated with
some \textbf{initiation point} before $R$ but after any previous
run of $P$, such that the following hold:

\begin{enumerate}
\item For every reactivation call with reactivation point $t$, there is
some run of $P$ with initiation point after $t$.
\item For every run $R$ of $P$ with initiation point $t$, there is some
reactivation call $C$ with reactivation point $s$ before $t$, such
that $s$ is the earliest reactivation point after all initiation
points before $t$, and we say that $C$ is the reactivation call
that \textbf{triggers} $R$. Furthermore, $R$ starts within $O(k)$
span after \textbf{\textit{either}} the end of the previous run of
$P$ \textbf{\textit{or}} after the start of $C$, and in the latter
case $s$ is after the end of any previous run of $P$.
\end{enumerate}
\item The reactivation wrapper does $O(k)$ work per reactivation call.
\end{enumerate}
\end{thm}

\begin{proof}
At any time, we say that a thread is at a line of code iff the thread
is currently executing that line (not yet finished) or will execute
that line next. For the sake of this analysis, we consider the execution
of a line involving a $\f{FetchAndAdd}$ operation to be finished
once the $\f{FetchAndAdd}$ operation itself has modified $count$
(i.e.~after the $\f{FetchAndAdd}$ operation has modified $count$,
the thread will immediately be at the appropriate next line). And
we say that a thread is in the \textbf{critical section} iff it is
at some line after ``If $\f{FetchAndAdd}(count,1)=0$:''.

We shall establish the invariant that $count\ge0$ and $count\ne0$
iff there is a thread in the critical section, and that there is at
most one such thread. Note that any thread that executes the ``Fork:''
line exits the critical section immediately after that, at the same
time as the forked thread enters the critical section at the ``Do:''
line. Note also that a thread that enters or exits the critical section
does so immediately after executing a line involving ``$\f{FetchAndAdd}(count,1)$''
or ``$\f{FetchAndAdd}(count,-1)$''. Thus it suffices to show that
the invariant is preserved whenever $count$ is changed (i.e.~a thread
finishes executing some line involving $count$):
\begin{itemize}
\item Whenever $count$ is changed from zero: By the invariant, there was
no thread in the critical section before that change, Thus it must
be due to some thread $\tau$ executing ``$\f{FetchAndAdd}(count,1)$''
resulting in $count$ being changed to $1$, upon which $\tau$ immediately
enters the critical section, preserving the invariant.
\item Whenever $count$ is changed from nonzero: By the invariant, before
that change $count>0$ and there was a unique thread $\tau$ in the
critical section.

\begin{itemize}
\item If the change was due to $\tau$ executing ``$\f{FetchAndAdd}(count,-1)$'':
After that change, either $count=0$ and $\tau$ exits the critical
section, or $count>0$ and $\tau$ remains in the critical section,
in either case preserving the invariant.
\item If the change was due to another thread executing ``$\f{FetchAndAdd}(count,1)$'':
That thread does not enter the critical section, and after that change
$count>0$ still, and $\tau$ is still in the critical section, preserving
the invariant.
\item If the change was due to $\tau$ executing ``$count:=1$'', then
after the change we have that $count=1$ and $\tau$ is still in the
critical section, preserving the invariant.
\end{itemize}
\end{itemize}

We can now use the invariant to prove the desired properties. Note
that there are at most $k$ concurrent evaluations of ``$\f{FetchAndAdd}(count,1)$''
and by the invariant at most one thread in the critical section modifying
$count$, and so every access to $count$ takes $O(k)$ work/span. 

Property~1 is satisfied, since the only calls to $P$ are from the
critical section. Property~2 is also satisfied, since it accesses
$count$ once and performs at most one fork.

We now turn to Property~3. Define the \textbf{reactivation point}
for a reactivation call $C$ to be when it finishes evaluating ``$\f{FetchAndAdd}(count,1)$'',
and the \textbf{initiation point} for a run $R$ of $P$ due to executing
``Call $P()$.'' to be when that instruction is reached (i.e.~at
the end of the execution of the previous instruction ``$count:=1$'').
Clearly the reactivation point for $C$ is during $C$, and the initiation
point for $R$ is before $R$ but after any previous run of $P$.

To verify Property~3a, consider any thread $\tau$ that makes a reactivation
call $C$ and finishes evaluating ``$\f{FetchAndAdd}(count,1)$''
at time $t$. The result of that evaluation satisfies $r\ge0$ by
the invariant, so there are two cases:
\begin{itemize}
\item If $r=0$: $\tau$ enters the critical section and reaches ``Call
$P()$.'' after time $t$.
\item If $r>0$: By the invariant, there is a unique thread $\sigma$ in
the critical section just before time $t$, and $count>1$ at time
$t$. If $\sigma$ was at a line before ``Call $P()$.'' at time
$t$, then it reaches ``Call $P()$.'' later. If not, then $\sigma$
was at a line after ``$count:=1$'' at time $t$, so no thread can
decrease $count$ after time $t$ until $\sigma$ evaluates ``$\f{FetchAndAdd}(count,-1)$'',
upon which it remains in the critical section and reaches ``Call
$P()$.'' again later.
\end{itemize}
To verify Property~3b, note that the reactivation call $C_{0}$ with
the earliest reactivation point evaluates ``$\f{FetchAndAdd}(count,-1)$''
to $0$ and the thread that executes $C_{0}$ is the first to enter
the critical section, so the first run $R_{0}$ of $P$ starts within
$O(k)$ span after $C_{0}$ starts, and hence Property~3b holds for
$R_{0}$. Thus it suffices to verify Property~3b for any run $R$
that has a previous run $R'$. Let $t,t'$ be the initiation points
for $R,R'$ respectively, and let $\tau$ be the thread that executed
$R'$. Since $count=1$ at both time $t'$ and time $t$, and $\tau$
decreases $count$ via ``$\f{FetchAndAdd}(count,-1)$'' at some
time $u$ between $t'$ and $t$, and by the invariant $count$ can
only be increased by an evaluation of ``$\f{FetchAndAdd}(count,1)$'',
there must be some reactivation point between $t'$ and $t$. Let
$C$ be the reactivation call with the earliest reactivation point
$s$ between $t'$ and $t$. There are two cases:
\begin{itemize}
\item If $s$ is before $u$: $count>1$ just before $u$, so $\tau$ remains
in the critical section at time $u$, and hence $R$ starts within
$O(k)$ span after $R'$.
\item If $s$ is after $u$: $s$ is after the end of $R'$, and $C$ evaluates
``$\f{FetchAndAdd}(count,1)$'' to $0$ at time $s$, so $R$ starts
within $O(k)$ span after the start of $C$.
\end{itemize}
Property~4 follows from Property~3b, since each reactivation call
(excluding the forked subcomputations) takes at most $O(k)$ work,
and the total work done by the resulting forked subcomputations excluding
the work done by the calls to $P$ is at most $O(k)$ times the number
of runs of $P$.
\end{proof}

\clearpage{}

\section{Basic Parallel Batch Operations}

In this section we shall show how to do some basic operations on batches
in the QRMW PPM model:
\begin{itemize}
\item \textbf{Filter} a batch of $n$ items based on an $O(1)$-time condition,
within $O(n)$ work and $O(\log n)$ span.
\item \textbf{Partition} a sorted batch of $n$ items around a sorted batch
of $k$ pivots, within $O\left(k\cdot\log\left(\frac{n}{k}+1\right)+k\right)$
work and $O(\log n+\log k)$ span.
\item \textbf{Join} a batch of $b$ batches of items, with $n$ items in
total, within $O(b+n)$ work and $O(\log b+\log n)$ span.
\item \textbf{Merge} two sorted batches of items, with $n$ items in total,
within $O(n)$ work and $O(\log n)$ span.
\item \textbf{Sort} a batch of $n$ items within $O(n\cdot\log n)$ work
and $O\left((\log n)^{2}\right)$ span.
\end{itemize}
We will always store any (non-empty) \textbf{batch} of items in a
\textbf{BBT}, namely a leaf-based height-balanced binary tree (i.e.~with
the items only at its leaves). Each binary tree $T$ is identified
with its root node $T\a{\textbf{root}}$, and each node $v$ of $T$
stores the following:
\begin{itemize}
\item $v\a{\textbf{left}}$ and $v\a{\textbf{right}}$ are its left and
right child nodes respectively.
\item $v\a{\textbf{height}}$ and $v\a{\textbf{size}}$ are the height and
number of leaves respectively of the subtree at $v$.
\item $v\a{\textbf{first}}$ and $v\a{\textbf{last}}$ are the first item
and last item respective in the subtree at $v$.
\end{itemize}
Note that the item at each leaf $v$ of $T$ is stored in $v\a{first}=v\a{last}$.
For convenience, we shall also use $\boldsymbol{V}(T)$ to denote
the nodes of $T$, and $\boldsymbol{L}(T)$ to denote the leaves of
$T$, and $\boldsymbol{H}(T,h)$ to denote the nodes of $T$ with
subtree height $h$ (i.e.~$v\in H(T,h)$ iff $v\a{height}=h$). Also,
for any leaves $v,w$ of $T$ we shall write ``$v\boldsymbol{\preceq}_{T}w$''
to mean that $v$ is before or equal to $w$ in $T$, and we shall
drop the subscript if it is clear from the context.

Each node of the BBT for a batch has additional fields that store
various values and pointers to objects needed for the parallelization
technique, which will be specified in the description of each parallel
batch operation.

\subsection{Pipelined Splitting}

\label{sub:pipelined-splitting}

The \textbf{pipelined splitting scheme} is the key technique employed
here for these parallel batch operations. This scheme is used to solve
the problem of efficiently distributing the leaves of a binary tree
$T$ to the leaves of another binary tree $U$ in a weak-order-preserving
manner, meaning that each leaf $v$ of $T$ is sent to some leaf $f(v)$
of $U$, and for every leaves $v,w$ of $T$ such that $v\preceq_{T}w$
we have $f(v)\preceq_{U}f(w)$. This scheme can be used as long as
we can always determine within $O(1)$ span whether $f(v)\preceq_{U}w$
given any leaf $v$ of $T$ and any leaf $w$ of $U$.

The basic idea is that we can \textbf{push $T$ down} $U$ in a \textbf{\textit{pipelined}}
fashion: We first push the entire tree $T$ to the root of $U$, and
whenever a subtree $B$ of $T$ arrives at an internal node $v$ of
$U$, we push the whole $B$ down to $v\a{left}$ if $f(B\a{last})\preceq v\a{left}\a{last}$,
or to $v\a{right}$ if $f(B\a{first})\npreceq v\a{left}\a{last}$,
but if neither holds then we \textbf{split} $B$, each time pushing
the appropriate half (i.e.~left or right subtree) down to a child
of $v$ and continuing to split the other half. Note that the subtrees
that arrive at each node of $U$ form a slice of $T$ (\ref{def:bt-slice}),
and at most one of them will be split.
\begin{defn}[Binary Tree Slice]
\label{def:bt-slice} A \textbf{slice} of a binary tree $T$ is a
sequence of disjoint non-sibling subtrees of $T$ that contain a set
of consecutive leaves of $T$. An \textbf{ordered slice} of $T$ is
a slice of $T$ that has the subtrees listed in rightward order in
$T$.
\end{defn}
It turns out that we can use queues $v\a{queue}[1]$ and $v\a{queue}[2]$
to store the unprocessed subtrees of $T$ at each node $v$ of $U$,
and maintain the \textbf{splitting invariant} that $\f{reverse}(v\a{queue}[1])+v\a{queue}[2]$
forms an ordered slice of $T$. To do so, when we process a subtree
of $T$ from $v\a{queue}[i]$, if we push it down whole to a child
$w$ of $v$ then we push it onto $w\a{queue}[i]$, otherwise if we
are splitting it then we always push the split subtrees onto $v\a{left}\a{queue}[2]$
or $v\a{right}\a{queue}[1]$. \ref{fig:slice-queues} illustrates
this.
\begin{centerbox}
\begin{figure}[H]
\begin{centerbox}
\begin{roundedboxinfloat}
\begin{centerbox}~

\includegraphics[bb=0bp 0bp 140bp 115bp,clip]{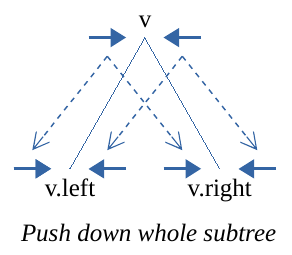}\quad{}\quad{}\includegraphics[bb=0bp 0bp 140bp 115bp,clip]{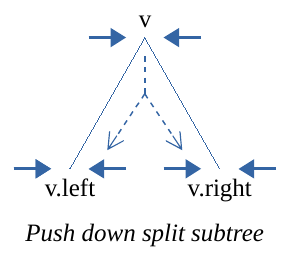}

\textit{\small{}``$\rightarrow v\leftarrow$'' represents node $v$
with $v\a{queue}[1]$ on its left and $v\a{queue}[2]$ on its right.}{\small \par}

\textit{\small{}Dotted arrows indicate which queue a subtree is pushed
to.}{\small \par}
\end{centerbox}
\caption{\label{fig:slice-queues} Splitting scheme to maintain the invariant
that the queued subtrees at each node form an ordered slice}
\end{roundedboxinfloat}
\end{centerbox}
\end{figure}

\end{centerbox}
This scheme can be carried out using a procedure $v\a{pushdown}[i]$
for each $v\a{queue}[i]$ that is run only via reactivation calls
(see \ref{sec:high-sync}) and is reactivated whenever a subtree is
pushed onto $v\a{queue}[i]$. We start by pushing $T$ onto $U\a{root}\a{queue}[1]$.
Each run of $v\a{pushdown}[i]$ processes one subtree $B$ from $v\a{queue}[i]$
(if any) and then reactivates itself if $v\a{queue}[i]$ was non-empty.
Processing $B$ means either pushing it down whole or forking a separate
thread to split it (repeatedly pushing down the appropriate half and
continuing to split the other half). Since at most one subtree that
arrives at $v$ is split, the splitting invariant guarantees that
every subtree that arrives at $v$ after that will only be pushed
onto the outer queues $v\a{left}\a{queue}[1]$ or $v\a{right}\a{queue}[2]$.
Therefore no concurrent pushes or concurrent pops are ever performed
on any queue.

Thus each queue can be implemented using a \textbf{dedicated queue},
which is a wait-free single-producer single-consumer queue implemented
by a linked list $L$ in the following manner. $L$ maintains pointers
to both the first node $L\a{head}$ and the last node $L\a{tail}$,
and every node $v$ in $L$ except $L\a{tail}$ stores an item $v\a{value}$
and a pointer to the next node $v\a{next}$, and $L\a{tail}\a{next}=null$.
Initially $L\a{head}=L\a{tail}$. The dedicated queue operations are
implemented as follows:
\begin{itemize}
\item \textbf{Push( $x$ ):} Create Node $w$ with $w\a{next}:=null$. Set
$L\a{tail}\a{value}:=x$. Set $L\a{tail}\a{next}:=w$. Set $L\a{tail}=w$.
\item \textbf{Pop():} Set $h:=L\a{head}$. If $h\a{next}\ne null$, set
$L\a{head}:=h\a{next}$. Return $h\a{value}$.
\end{itemize}
It can be shown that the total span is just $O(T\a{height}+U\a{height})$
(see \ref{thm:par-filter-cost}). But we must know what exactly the
pipelined splitting scheme is used for to get a good bound on the
total work.

\subsection{Parallel Filtering}

\label{sub:par-filter}

Parallel filtering an (unsorted) batch $T$ according to a condition
$C$, without changing the order in the batch, is done in 3 phases:
\begin{enumerate}
\item \textbf{\uline{Preprocessing phase}}\textbf{:} Each item in $T$
that satisfies $C$ has a rank in the sublist of $T$ that satisfies
$C$, which we shall call its filtered-rank. Recursively compute for
each node $v$ in $T$ the number $v\a{count}$ of filtered items
(i.e.~items that satisfy $C$) in the subtree at $v$, as well as
the range $v\a{range}$ of filtered-ranks of the filtered items in
the subtree at $v$. Then construct a blank batch $U$ of size $T\a{root}\a{count}$
that is a complete BBT (i.e.~every level is full except perhaps the
last), and compute for each node $w$ of $U$ the number $w\a{count}$
of leaves in its subtree and the range $w\a{range}$ of their ranks
in $U$. And place a barrier $w\a{done}$ at each leaf $w$ of $U$.
\item \textbf{\uline{Push-down phase}}\textbf{:} Use the pipelined splitting
scheme (\ref{sub:pipelined-splitting}) to push $U$ down $T$, where
a subtree $B$ of $U$ is pushed down whole to a node $v$ of $T$
iff $B\a{range}\wi v\a{range}$. Then clearly each leaf of $U$ will
be pushed down to a unique leaf of $T$ that has an item satisfying
$C$, and the order of those leaves in $U$ is the same as the order
of those items in $T$. Thus when a leaf $w$ of $U$ reaches a leaf
$v$ of $T$, we can simply copy the item from $v$ to $w$ and then
notify $w\a{done}$.
\item \textbf{\uline{Collating phase}}\textbf{:} After initiating the
push-down phase, wait on $w\a{done}$ for each leaf $w$ of $U$,
before returning $U$. Then clearly $U$ is only returned after the
push-down phase has finished.
\end{enumerate}
We shall now give the technical details, including the specific push-down
phase obtained by applying the pipelined-splitting scheme here, and
the splitting invariant involved.
\begin{defn}[Parallel Filtering]
 Parallel filtering an (unsorted) batch (or more generally a leaf-based
binary tree) $T$ according to a condition $C$, without changing
the order of items in the batch, is done via the following procedure:
\begin{block}
\item First \textbf{preprocess} the input batch and prepare the output batch
$U$:

\begin{enumerate}
\item Recursively for each node $v$ of $T$, compute the number $v\a{count}$
of items in the subtree at $v$ that satisfy $C$. Then recursively
compute $v\a{rstart}$ and $v\a{rend}$ defined by $T\a{root}\a{rstart}=0$
and $v\a{left}\a{rstart}=v\a{rstart}$ and $v\a{right}\a{rstart}=v\a{rstart}+v\a{left}\a{count}$
and $v\a{rend}=v\a{rstart}+v\a{count}$ for each internal node $v$
of $T$. 
\item If $T\a{root}\a{count}=0$, return a blank output batch (skipping
all the other phases).
\item Construct a blank complete BBT $U$ of size $T\a{root}\a{count}$,
and compute for each node $v$ of $U$ the number $v\a{count}$ of
leaves in its subtree, and $v\a{rstart}$ and $v\a{rend}$ defined
in exactly the same way as for $T$. 
\item In parallel place at each leaf $v$ of $U$ a Barrier $v\a{done}$.\quad{}//
see \ref{def:barrier}
\end{enumerate}
\item Then \textbf{push $U$ down $T$} to the appropriate leaf nodes via
a \textbf{pipelined splitting scheme}:

\begin{enumerate}
\item In parallel place at each node $v$ of $T$ two newly created (empty)
DedicatedQueues $v\a{queue}[1..2]$.
\item Define \textbf{feeding} $B$ to $v\a{queue}[i]$ to be pushing $B$
onto $v\a{queue}[i]$ and then reactivating $v\a{pushdown}[i]$.
\item Start by feeding $U\a{root}$ to $T\a{root}\a{queue}[1]$.
\item Whenever $v\a{pushdown}[i]$ is reactivated for some node $v$ of
$T$, it does the following:

\begin{block}
\item Pop subtree $B$ off $v\a{queue}[i]$. If $B=null$ (i.e.~$v\a{queue}[i]$
was empty), return.
\item Reactivate $v\a{pushdown}[i]$.
\item If $v$ is a leaf, copy the item from $v$ into $B$ and then call
$B\a{done}\a{notify()}$ and return.
\item If $B\a{rend}\le v\a{left}\a{rend}$, feed $B$ to $v\a{left}\a{queue}[i]$
and return.
\item If $B\a{rstart}\ge v\a{right}\a{rstart}$, feed $B$ to $v\a{right}\a{queue}[i]$
and return.
\item \clabel{$\ast$}\label{enu:splitter} Fork the following \textbf{splitting
process}:\hfill{}($\ast$)

\begin{block}
\item While $B$ is not a leaf:

\begin{block}
\item If $B\a{left}\a{rend}\le v\a{left}\a{rend}$:

\begin{block}
\item Feed $B\a{left}$ to $v\a{left}\a{queue}[2]$ and set $B:=B\a{right}$.
\end{block}
\item Otherwise:

\begin{block}
\item Feed $B\a{right}$ to $v\a{right}\a{queue}[1]$ and set $B:=B\a{left}$.
\end{block}
\end{block}
\item If $B\a{rend}\le v\a{left}\a{rend}$, feed $B$ to $v\a{left}\a{queue}[2]$,
otherwise feed $B$ to $v\a{right}\a{queue}[1]$.
\end{block}
\end{block}
\end{enumerate}
\item And (after starting the push-down phase) \textbf{collate} the results
by the following steps:

\begin{enumerate}
\item Call $v\a{done}\a{wait}()$ for each leaf $v$ of $U$ in parallel
(and wait for all to finish).
\item Recursively update $v\a{first}$ and $v\a{last}$ for each node $v$
of $U$.
\item Return $U$.
\end{enumerate}
\end{block}
\end{defn}

In our subsequent analysis we shall frequently use the fact that runs
of $v\a{pushdown}[i]$ do not overlap (by \ref{thm:reactivation-prop}
Property~1), and so we shall not mention it.
\begin{lem}[Parallel Filtering Invariants]
\label{lem:par-filter-inv} \nameref{sub:par-filter} satisfies the
following for each node $v$ of $T$:
\begin{enumerate}
\item The subtrees fed to $v$ (i.e.~to either $v\a{queue}[1]$ or $v\a{queue}[2]$)
form a slice of $U$.
\item The subtrees fed to $v\a{queue}[1]$ are (strictly) on the left in
$U$ of those fed to $v\a{queue}[2]$.
\item The subtrees fed to $v\a{queue}[1]$ are in (strictly) leftward order
and increasing depth in $U$.
\item The subtrees fed to $v\a{queue}[2]$ are in (strictly) rightward order
and increasing depth in $U$.
\item At most one subtree fed to $v$ is split, after which only the splitting
process (\nameref{enu:splitter}) will ever push onto $v\a{left}\a{queue}[2]$
or $v\a{right}\a{queue}[1]$.
\end{enumerate}
\end{lem}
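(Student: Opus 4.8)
The plan is to prove the five invariants simultaneously by induction on the order in which subtrees are fed to nodes of $T$, i.e. along the execution DAG, using the fact that a subtree $B$ is fed to a child $w$ of $v$ only as a consequence of $v\a{pushdown}$ processing some subtree at $v$ (either $B$ itself pushed down whole, or a child of a subtree split at $v$). The base case is the single feed of $U\a{root}$ to $T\a{root}\a{queue}[1]$: trivially $\{U\a{root}\}$ is an ordered slice of $U$ on the left of the empty $T\a{root}\a{queue}[2]$, and the ordering/depth conditions hold vacuously. For the inductive step I would fix a node $v$ and assume all five invariants hold for every feed that causally precedes the feed under consideration (in particular for all feeds to $v$ itself and to its parent).

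First I would establish (5) for $v$, since (1)–(4) lean on it. The test at \step\ref{enu:splitter} is reached for a popped subtree $B$ only when $B\a{rend}>v\a{left}\a{rend}$ and $B\a{rstart}<v\a{right}\a{rstart}$, i.e. when $B$ straddles the split point $v\a{left}\a{rend}=v\a{right}\a{rstart}$ of $v$. By the slice property (1) applied to the feeds to $v$, together with the ordering properties (3)–(4), at most one subtree ever fed to $v$ can straddle this point: the feeds to $v\a{queue}[1]$ are consecutive rightward-ordered subtrees and those to $v\a{queue}[2]$ continue rightward from where queue $1$ ends, so only the unique subtree in the combined ordered slice that contains the leaf immediately left of the split point (equivalently, whose range contains $v\a{left}\a{rend}$ as an interior point) can trigger the fork. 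Hence the splitting process runs at most once. That it is then the only thread pushing onto $v\a{left}\a{queue}[2]$ or $v\a{right}\a{queue}[1]$ follows because the only other way a subtree is fed to those queues is via \step(iv) (feeding $B$ whole to $v\a{left}\a{queue}[i]$) or \step(v) (feeding $B$ whole to $v\a{right}\a{queue}[i]$) with $i=1$ or $i=2$ — but \step(iv) only feeds to $v\a{left}\a{queue}[i]$ with the same $i$ as the queue $B$ was popped from, and \step(v) to $v\a{right}\a{queue}[i]$ likewise; a whole push lands on the \emph{inner} queue $v\a{left}\a{queue}[2]$ only when $i=2$, and on $v\a{right}\a{queue}[1]$ only when $i=1$. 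One checks that such whole pushes onto the inner queues occur only for subtrees strictly on the correct side of the split point, and the straddling subtree — the only one that could be popped while feeds on both sides are still pending — is precisely the one that is split, so once the fork happens no later-popped subtree at $v$ lands on those two inner queues; and no earlier-popped subtree could have, since they all lie strictly on one side. (This is the step I expect to be the main obstacle: carefully ruling out, using (1)–(4) for $v$ and the reactivation/FIFO discipline, that a whole push ever competes with the splitting process on an inner queue.)

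Next, given (5), I would verify (1)–(4) for the children $v\a{left}$ and $v\a{right}$ of $v$, which is what the inductive step actually needs to propagate. Consider the sequence of feeds emanating from $v$. While $v\a{pushdown}$ processes queue $i$ in FIFO order, a whole push-down sends $B$ to $v\a{left}\a{queue}[i]$ or $v\a{right}\a{queue}[i]$, preserving side and preserving the queue index $i$; combined with the induction hypotheses (3)–(4) at $v$, this keeps the subtrees fed to each child queue in strictly leftward (resp. rightward) order. Depth strictly increases because a subtree is pushed down whole with unchanged depth only from parent to child in $T$, which does not change its depth in $U$ — wait, that is wrong; rather, I must track depth in $U$: the key point is that each subtree fed to a child of $v$ is either one that was fed to $v$ (same depth in $U$, but now appearing among a rightward-ordered sequence where strict depth-monotonicity is maintained by the splitting mechanics) or a proper child in $U$ of the split subtree (strictly greater depth). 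The splitting process walks down $B$ in $U$, at each step feeding the off-side child to the inner queue of the appropriate side and recursing on the on-side child; so it feeds to $v\a{left}\a{queue}[2]$ a strictly-increasing-depth, strictly-leftward sequence of subtrees of $U$ all lying left of the split point, and symmetrically to $v\a{right}\a{queue}[1]$, and these splice correctly after the whole-pushed subtrees that preceded them. The slice property (1) for each child then follows because the disjoint union of (the subtrees fed to $v\a{left}$) and (those fed to $v\a{right}$) equals the slice fed to $v$ with the single straddling subtree replaced by its fringe decomposition — still a set of consecutive leaves of $U$ — and the split point of $v$ cleanly separates the two halves, giving (2) for each child as well. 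Finally I would note that for leaves $v$ of $T$ there are no child queues, so (1)–(5) there only assert the slice/ordering facts about the subtrees arriving at $v$, which are inherited verbatim from the parent's (1)–(4); and the leaf case of the algorithm (copy item, notify $v\a{done}$) does not feed anything further, so nothing more is required.
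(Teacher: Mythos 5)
Your proof is correct and follows essentially the same route as the paper's: induction down $T$ (the invariants at $v$ inherited from those at its parent), with Invariant~5 derived from Invariants~1--4 at $v$ by noting that a slice contains at most one subtree straddling the split point and that the ordering invariants prevent any whole push onto the inner queues once the splitting process has started. The only blemishes are cosmetic: the directions in ``strictly-leftward sequence to $v\a{left}\a{queue}[2]$'' are swapped (queue $[2]$ receives a rightward-ordered, increasing-depth sequence and queue $[1]$ a leftward one), and the step you flag as the main remaining obstacle is in substance already the argument the paper itself gives.
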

\begin{proof}
We use structural induction on $T$. Invariants~1,2 follow from themselves
for $v\a{parent}$ (i.e.~the parent of $v$). Invariants~3,4 follow
from themselves and Invariants~2,5 for $v\a{parent}$. To establish
Invariant~5, observe that by Invariant~1 at most one subtree fed
to $v$ is not pushed down whole, and if it was a subtree $B$ from
$v\a{queue}[1]$ then the following hold at that point:
\begin{itemize}
\item Every subtree from $v\a{queue}[2]$ is on the right of $B$, by Invariant~2,
and so none of it gets fed to $v\a{left}\a{queue}[2]$. Hence from
then on only the splitting process feeds to $v\a{left}\a{queue}[2]$.
\item Every subsequent subtree from $v\a{queue}[1]$ is on the left of $B$,
by Invariant~3, and so none of it gets fed to $v\a{right}\a{queue}[1]$.
Also, every preceding subtree from $v\a{queue}[1]$ had already been
fed to $v\a{right}\a{queue}[1]$ in a preceding run of $v\a{pushdown}[1]$.
Hence from then on only the splitting process feeds to $v\a{right}\a{queue}[1]$.
\end{itemize}
Likewise if it was a subtree $B$ from $v\a{queue}[2]$, by symmetry.
Therefore Invariant~5 holds.
\end{proof}
Using these invariants we can prove both the correctness and costs
of parallel filtering. In particular, by Invariant~5 there are no
concurrent pushes performed on the dedicated queues, as required,
and hence the pipelined splitting scheme runs correctly. And now we
shall bound the parallel filtering costs.
\begin{lem}[Tree Path Length Sum]
\label{lem:tree-path-sum} Given any tree $T$ with $e$ edges, for
each node $v$ of $T$ let $m(v)$ be the length of the shortest path
from $v$ to a leaf if $v$ has at least $2$ children, but $0$ otherwise.
Then $\sum_{v\in V(T)}m(v)\le e$.\end{lem}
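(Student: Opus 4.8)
The plan is to prove a \emph{strengthened} statement by structural induction on $T$, because the bare inequality $\sum_v m(v)\le e$ does not carry through a recursion directly: a long, thin subtree contributes a large $m$-value at its root, and the edges hanging immediately below that root cannot pay for it on their own — we must borrow slack from deeper inside the subtrees. Root $T$ arbitrarily (the notions ``child'' and ``leaf'' already presuppose a root), and for a node $v$ let $\mu(v)$ be the distance from $v$ to the nearest leaf in the subtree $T_v$ at $v$, so that $m(v)=\mu(v)$ when $v$ has at least two children and $m(v)=0$ otherwise (in particular $m(v)\le\mu(v)$ always; if ``leaf'' is read as any leaf of $T$ rather than of $T_v$, the quantities only shrink and the bound follows a fortiori, so it suffices to treat this case). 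Writing $r$ for the root and $e(T)=|V(T)|-1$, the claim I would prove is
$$\sum_{v\in V(T)}m(v)\;\le\;e(T)-\mu(r),$$
which implies the lemma since $\mu(r)\ge 0$. (This strengthened bound is in fact tight, e.g.\ on a complete binary tree, which is a good sanity check that it is the right invariant.)

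The induction itself is short. The base case $|V(T)|=1$ is immediate: the root is a leaf, and $0\le 0-0$. For the inductive step let $r$ have children $c_1,\dots,c_k$ with subtrees $T_i$, and set $e_i=e(T_i)$ and $\mu_i=\mu(c_i)$, so that $e(T)=k+\sum_i e_i$ and $\mu(r)=1+\min_i\mu_i$. If $k=1$ then $m(r)=0$, and applying the inductive hypothesis to $T_1$ gives $\sum_{v\in V(T)}m(v)=\sum_{v\in V(T_1)}m(v)\le e_1-\mu_1=(e(T)-1)-(\mu(r)-1)=e(T)-\mu(r)$. If $k\ge 2$ then $m(r)=\mu(r)$, and summing the inductive hypotheses over the $T_i$ reduces the target inequality to $\mu(r)+\sum_i e_i-\sum_i\mu_i\le k+\sum_i e_i-\mu(r)$, i.e.\ to $2\mu(r)\le k+\sum_i\mu_i$.

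The one genuine point — the ``main obstacle'', such as it is — is verifying this last inequality, and this is exactly where the hypothesis ``$v$ has at least two children'' is used, since it forces $k\ge 2$ at a branching node. Because $k\ge 2$ and every $\mu_i$ is at least $\min_j\mu_j\ge 0$, we have $\sum_i\mu_i\ge k\min_j\mu_j\ge 2\min_j\mu_j$, so $k+\sum_i\mu_i\ge 2+2\min_j\mu_j=2(1+\min_j\mu_j)=2\mu(r)$, as required. This closes the induction, and the lemma follows.
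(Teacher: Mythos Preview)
Your proof is correct, but the paper takes a quite different route. Instead of induction, the paper gives a direct charging argument: for each node $v$ with at least two children, it defines a specific downward path $P(v)$ to a leaf by first taking the leftmost edge out of $v$ and thereafter always taking the rightmost edge. Clearly $|P(v)|\ge m(v)$. The point is that for any two branching nodes $v,w$ the paths $P(v),P(w)$ are edge-disjoint: if they share a node $x$, one of them must start at $x$ and leave via the leftmost edge, while the other passes through $x$ and leaves via the rightmost edge, so they diverge. Hence $\sum_v m(v)\le\sum_v|P(v)|\le e$.

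The paper's argument is a single-shot combinatorial assignment, short and visual, but it implicitly uses an ordering on the children (``leftmost'' and ``rightmost''). Your induction with the strengthened invariant $\sum_v m(v)\le e(T)-\mu(r)$ needs no such ordering, is the natural thing to try once one notices that the bare inequality does not recurse, and in fact yields a slightly sharper inequality. Both proofs hinge on the same place --- the hypothesis $k\ge2$ at branching nodes --- but exploit it differently: you use it to get $k+\sum_i\mu_i\ge 2+2\min_i\mu_i$, while the paper uses it to guarantee that ``leftmost'' and ``rightmost'' children are distinct so the paths separate.
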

\begin{proof}
Let $C$ be the nodes of $T$ with at least $2$ children. For each
node $v$ of $T$, let $P(v)$ be the downward path from $v$ that
first takes the leftmost edge and then takes rightmost edges all the
way to a leaf. Observe that $P(v)$ and $P(w)$ have disjoint edges
for any distinct $v,w$ in $C$, since if they pass through a common
node $x$ then it must be that one of them starts at $x$, so that
path takes the leftmost edge, which is not the rightmost edge taken
by the other path. Therefore $\sum_{v\in V(T)}m(v)\le\sum_{v\in C}P(v)\a{length}\le e$.\end{proof}
\begin{defn}[Log-Splitting Property]
\label{def:log-splitting} We say that a binary tree $T$ is \textbf{$c$-log-splitting}
if every slice of $T$ containing $k$ leaves of $T$ has at most
$c\cdot\log_{2}(k+1)$ subtrees of $T$. Note that every BBT is $4$-log-splitting.
\end{defn}

\begin{thm}[Parallel Filtering Costs]
\label{thm:par-filter-cost} Parallel filtering a batch $T$ of size
$n$ according to a condition $C$ takes $O(n\cdot w)$ work and $O(\log n+s)$
span if every evaluation of $C$ takes $O(w)$ work and $O(s)$ span.\end{thm}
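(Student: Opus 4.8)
I would bound the three phases of the \nameref{sub:par-filter} procedure separately, using that work and span are subadditive across subcomputations. The \emph{preprocessing phase} is a handful of bottom-up and top-down passes over $T$ and over the freshly built complete BBT $U$: computing $v\a{count}$, then $v\a{rstart}$ and $v\a{rend}$, then $U$ with its analogous fields, then placing a Barrier $v\a{done}$ at each leaf of $U$. Every node is touched $O(1)$ times, and the only non-constant step is evaluating $C$ once per leaf of $T$ while computing $v\a{count}$; since $T$ and $U$ both have height $O(\log n)$ and at most one $C$-evaluation lies on any root-to-leaf path, this phase costs $O(n\cdot w)$ work and $O(\log n+s)$ span (each barrier costs $O(1)$ to create).

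For the \emph{push-down phase} I would first invoke \ref{lem:par-filter-inv}: Invariant~5 says each node's splitting process runs at most once and is then the unique pusher onto the two queues it targets, so no dedicated queue ever sees concurrent pushes or pops and the pipelined-splitting machinery behaves as intended; Invariants~1--4 then force every leaf of $U$ to be delivered to the leaf of $T$ with the matching filtered-rank, making the copy-and-notify step correct. The main obstacle is the \emph{work} bound. The key point is that by Invariant~1 the set of \emph{all} subtrees ever fed to a node $w$ of $T$ is a single slice of $U$, and by correctness this slice covers exactly the $w\a{count}$ leaves of $U$ destined for the subtree of $T$ at $w$; since $U$ is a BBT and hence $4$-log-splitting (\ref{def:log-splitting}), at most $4\log(w\a{count}+1)$ subtrees are ever fed to $w$. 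Hence $w\a{pushdown}$ is reactivated only $O(\log(w\a{count}+1))$ times (plus $O(1)$ self-reactivations per run; since $w\a{pushdown}$ has only $O(1)$ concurrent reactivations the wrapper's guarantees apply), each run does $O(1)$ work, and $\sum_{w\in V(T)}\log(w\a{count}+1)=O(n)$ because $w\a{count}\le w\a{size}$ and grouping the nodes of $T$ by height makes this a geometric sum. The forked splitting processes are handled separately: at most one is forked per node of $T$ (Invariant~5), and such a process walks one root-to-leaf path of its split subtree $B$ in $U$, feeding out one off-path sibling subtree per step; a node of $U$ can be such an off-path sibling at most once during the whole run (once peeled off it only heads the remainder of its own subtree, and when split it is the \emph{root} of that split), so these processes do $O(|V(U)|)=O(n)$ work in total — alternatively one charges this against edges via \ref{lem:tree-path-sum} applied to the tree of split subtrees. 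Adding the $O(1)$ cost of each queue operation and each barrier notification, the push-down phase does $O(n)$ work.

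For the \emph{span} of the push-down phase I would use the dependency structure pointed out in \ref{sub:pipelined-splitting}: attach to each event ``process or feed subtree $B$ at node $v$ of $T$'' the potential $\f{depth}_T(v)+\f{depth}_U(B)$, and note that such an event depends only on the same event at $v\a{parent}$ (potential one smaller) and on the processing at $v$ of the preceding subtree in $v$'s queue (strictly smaller depth in $U$), while each step of a splitting process moves one level deeper in $T$ \emph{and} strictly deeper in $U$. So the potential is nondecreasing along every dependency edge and strictly increases after $O(1)$ events (reactivation and fork overheads being $O(1)$); since the potential never exceeds $\f{depth}_T(\text{a leaf})+\f{depth}_U(\text{a leaf})=O(\log n)$, the push-down span is $O(\log n)$, with no $s$ term since $C$ is not re-evaluated here.

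Finally the \emph{collating phase} recursively waits at every $w\a{done}$ and then refreshes $v\a{first}$ and $v\a{last}$ over $U$: this is $O(|V(U)|)=O(n)$ work, and its span is $O(\log n)$ because each leaf barrier of $U$ is notified by the time that leaf's delivery finishes, so the critical path is at most the push-down span plus $O(\log n)$ for the two recursions over $U$. Summing the three phases yields $O(n\cdot w)$ work and $O(\log n+s)$ span. The one genuinely delicate point is the push-down work bound, where the slice invariant together with log-splitting of $U$ keeps the total number of reactivations linear; the rest is routine tree recursion plus the standard pipelining-span argument.
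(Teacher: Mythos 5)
Your proof is correct and follows essentially the same route as the paper: the same three-phase decomposition, the same use of \ref{lem:par-filter-inv} to charge all push-down work to feedings bounded via the $4$-log-splitting property of $U$, and the same fragment-dependency induction for the span. The only deviations are cosmetic: you bound $\sum_{v}\log(v\a{size}+1)$ by a geometric sum over heights (in the spirit of \ref{lem:bbt-log-sum}) where the paper routes through \ref{lem:tree-path-sum}, and you account for the splitting processes separately where the paper simply counts every while-loop iteration as a feeding; both choices are valid.
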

\begin{proof}
The preprocessing phase clearly takes $O(n\cdot w)$ work and $O(\log n+s)$
span. And the collating phase clearly takes $O(n)$ work and ends
within $O(\log n)$ span after it has started and the push-down phase
has ended. So it only remains to show that the push-down phase takes
$O(n)$ work and $O(\log n)$ span. Clearly initializing the nodes
of $T$ takes $O(n)$ work and $O(T\a{height})\wi O(\log n)$ span.

Now observe that the work taken by the push-down phase is $O(1)$
times the number of feedings, since every self-reactivation of $v\a{pushdown}[i]$
corresponds to a unique subtree that had been popped off $v\a{queue}[1]$
or $v\a{queue}[2]$, and every while-loop iteration in the splitting
process at $v$ feeds a subtree to a child of $v$. Each node $v$
of $T$ has $O(\log(v\a{count}+1))\wi O(\log v\a{size}+1)$ subtrees
fed to it (by \ref{lem:par-filter-inv} Invariant~1 since $U$ is
$4$-log-splitting), and $\log v\a{size}$ is $O(1)$ times the length
of the shortest path from $v$ to a leaf (since $T$ is a BBT). Thus
the number of feedings is $O(1)$ times the number of edges of $T$
(by \ref{lem:tree-path-sum}), which is $O(n)$.

To bound the span taken by the push-down phase, we set forth some
convenient definitions. Call a run of $v\a{pushdown}[i]$ a \textbf{$(v,i)$-run},
and a reactivation of $v\a{pushdown}[i]$ a \textbf{$(v,i)$-reactivation}.
Call a $(v,i)$-run \textbf{effective} iff it process some subtree
$X$ (i.e.~it pops $X$ off $v\a{queue}[i]$ and $X\ne null$). Call
a $(v,i)$-run a \textbf{$(v,i,X)$-run} iff it processes the subtree
$X$. Partition each run of the splitting process (\nameref{enu:splitter})
into fragments around the while-loop boundaries, and call each such
fragment a \textbf{$(v,X)$-splitter} iff it splits the subtree $X$
(i.e.~$B=X$ at the start of the loop). And for each subtree $X$
of $U$ let $d(X)$ be its depth in $U$, and observe that $d(X')<d(X)$
for any $(v,i,X')$-run that precedes a $(v,i,X)$-run (by \ref{lem:par-filter-inv}
Invariants~3,4).

Note that each $(v,i)$-run performs a $(v,i)$-reactivation iff it
is effective. Also, there cannot be three consecutive ineffective
$(v,i)$-runs $R_{1},R_{2},R_{3}$ in that order, otherwise (by \ref{thm:reactivation-prop}
Property~3b) there are distinct $(v,i)$-reactivations $C_{1},C_{2}$
in that order such that $C_{1}$ triggers $R_{2}$ and $C_{2}$ triggers
$R_{3}$, so $C_{1}$ ends after all $(v,i)$-runs preceding $R_{1}$
and $C_{2}$ starts before $R_{3}$, and hence $C_{1}$ and $C_{2}$
must be executed by feedings to $v$ and cannot overlap, which implies
that some subtree $X$ is fed to $v\a{queue}[i]$ between $C_{1}$
and $C_{2}$, but then $X$ must be processed by one of $R_{1},R_{2},R_{3}$,
contradicting their ineffectiveness.

Now consider any $(v,i)$-run $R$. Let $X$ be the last subtree processed
by the $(v,i)$-runs up to $R$ (which exists since the first $(v,i)$-run
processes the first tree fed to $v\a{queue}[i]$). There are two cases
(by \ref{thm:reactivation-prop} Property~3b):
\begin{itemize}
\item $R$ starts within $O(1)$ span after the end of the previous $(v,i)$-run
$R'$. Note that the last subtree $X'$ processed by the $(v,i)$-runs
up to $R'$ satisfies $d(X')\le d(X)$.
\item $R$ starts within $O(1)$ span after the start of the $(v,i)$-reactivation
$C$ that triggers $R$, and the reactivation point for $C$ is after
the end of any previous $(v,i)$-run. In this case, the previous $(v,i)$-run
$R'$ (if any) is ineffective, and hence $C$ must be executed by
the feeding of some subtree $X'$ to $v\a{queue}[i]$. Observe that
any feeding of a subtree $X''$ to $v\a{queue}[i]$ before $X'$ must
push $X''$ onto $v\a{queue}[i]$ before executing some $(v,i)$-reactivation
$C'$ before $C$, and $C'$ must have reactivation point before the
start of $R'$ (since $C$ triggers $R$), and hence $X''$ must have
been processed by some $(v,i)$-run preceding $R$ since $R'$ is
ineffective. Thus $R$ would process $X'$ if it had not already been
processed by an earlier $(v,i)$-run, and hence $d(X')\le d(X)$.
\end{itemize}
From this and the fact that there cannot be three consecutive ineffective
$(v,i)$-runs, we can deduce that for every $(v,i,X)$-run $R$, at
least one of the following holds:
\begin{itemize}
\item $R$ starts within $O(1)$ span after the end of some preceding $(v,i,X')$-run
where $d(X')<d(X)$.
\item $R$ starts within $O(1)$ span after the start of some $(v,i)$-reactivation
executed by the feeding of some subtree $X'$ to $v\a{queue}[i]$
where $d(X')\le d(X)$.
\end{itemize}
Moreover, except for the first feeding at the start of the push-down
phase, every feeding of a subtree $X$ to $v\a{queue}[i]$ is executed
by either some $(v\a{parent},i,X)$-run or some $(v\a{parent},X)$-splitter.
And every $(v,X)$-splitter starts within $O(1)$ span after either
the start of some $(v,i,X)$-run or the end of some $(v,X')$-splitter
where $d(X')<d(X)$.

Therefore by induction every $(v,i,X)$-run $R$ starts within $O(k+m+1)$
span after the push-down phase starts, where $k$ is the depth of
$v$ in $T$ and $m$ is the depth of $X$ in $U$. Thus the whole
push-down phase finishes within $O(T\a{height}+U\a{height}+1)\wi O(\log n)$
span.
\end{proof}
Sometimes, it is also useful to use parallel filtering on a leaf-based
binary tree that may not be a BBT but is sufficiently balanced. One
suitable notion is as follows.
\begin{defn}[$c$-balanced Binary Trees]
 We say that a binary tree $T$ is \textbf{$c$-balanced} (where
$c\in\rr^{+}$) iff $v\a{height}\le c\cdot\log_{2}(v\a{size})$ for
every node $v$ of $T$.\end{defn}
\begin{rem*}
Every BBT is $2$-balanced. Every red-black tree is also $2$-balanced.
\end{rem*}
We now prove a combinatorial lemma for $c$-balanced binary trees,
which we will then use to prove the cost bounds for performing parallel
filtering on such trees.
\begin{lem}[Balanced Tree Bound]
\label{lem:bal-tree-bound} $\sum_{v\in V(T)}\log_{2}v\a{size}\le4c\cdot n$
for every $c$-balanced binary tree $T$ with $n$ leaves.\end{lem}
\begin{proof}
For each leaf $x$ of $T$, the $d$-th ancestor $v$ of $x$ satisfies
$v\a{size}\ge2^{v\a{height}/c}\ge2^{d/c}>3$ if $d\ge2c$, since $T$
is $c$-balanced. Let $A(v)$ be the ancestors of any node $v$ of
$T$ (including $v$). Then $\sum_{v\in V(T)}\log_{2}v\a{size}=\sum_{x\in L(T)}\sum_{v\in A(x)}\frac{\log_{2}v\a{size}}{v\a{size}}$
$<\sum_{x\in L(T)}\left(\sum_{d=0}^{2c-1}\frac{\log_{2}3}{3}+\sum_{d=2c}^{\infty}\frac{d/c}{2^{d/c}}\right)$
because $\frac{\log_{2}k}{k}\le\frac{\log_{2}3}{3}$ for every $k\in\nn^{+}$
and $\frac{\log_{2}k}{k}\le\frac{\log_{2}m}{m}$ for every reals $k\ge m>3$.
Since we have $\sum_{d=0}^{2c-1}\frac{\log_{2}3}{3}\le2c$ and $\sum_{d=2c}^{\infty}\frac{d/c}{2^{d/c}}=\sum_{k=2}^{\infty}\sum_{i=0}^{c-1}\frac{(k\cdot c+i)/c}{2^{(k\cdot c+i)/c}}$
$<\sum_{k=2}^{\infty}\left(c\cdot\frac{k+1}{2^{k}}\right)=2c$, the
desired claim follows.\end{proof}
\begin{thm}[General Parallel Filtering Costs]
\label{thm:par-filter-cost-gen} Parallel filtering an $O(1)$-balanced
leaf-based binary tree $T$ of size $n$ and height $h$ according
to a condition $C$ takes $O(n\cdot w)$ work and $O(h+s)$ span if
every evaluation of $C$ takes $O(w)$ work and $O(s)$ span.\end{thm}
\begin{proof}
To show the work bound, we just need to bound the number of feedings,
which is $O\left(\sum_{v\in V(T)}\left(\log v\a{size}+1\right)\right)\wi O(n)$
by \ref{lem:bal-tree-bound}. The proof of the span bound is the same
as before.
\end{proof}
Parallel filtering can also be used to efficiently change the shape
of the underlying BBT of any batch, such as to a complete binary tree.
\begin{defn}[Parallel Balancing]
 Parallel balancing a batch $T$ of items to make the underlying
BBT be a complete binary tree, without changing the order in the batch,
can be done by parallel filtering (\ref{sub:par-filter}) with no
condition (i.e.~the condition $C$ always returns $true$).\end{defn}
\begin{rem*}
In general, we can obtain any desired shape of the underlying BBT
of $T$, by simply constructing the output batch to have the desired
shape.\end{rem*}
\begin{thm}[Parallel Balancing Costs]
 Parallel balancing a batch of $n$ items takes $O(n)$ work and
$O(\log n)$ span.\end{thm}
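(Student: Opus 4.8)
The plan is to reduce the statement directly to the two parallel-filtering cost results already proved. Parallel balancing is defined as parallel filtering with the trivial condition $C$ that always returns $true$, so I would invoke \ref{thm:par-filter-cost} with a trivial condition. Since evaluating $C\equiv true$ takes $O(1)$ work and $O(1)$ span, we may take $w=1$ and $s=1$ (or $s=0$) in that theorem. Substituting, parallel filtering a batch of $n$ items then takes $O(n\cdot 1)=O(n)$ work and $O(\log n+1)=O(\log n)$ span. Since parallel balancing is literally this procedure, the same bounds hold for it.

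The only subtlety worth a sentence is checking that the hypothesis of \ref{thm:par-filter-cost} is met: that theorem is stated for a batch $T$ stored in a BBT, which is exactly the representation we use for all batches, and with $C$ always $true$ the output batch $U$ has size $T\a{root}\a{count}=n$, so nothing degenerates (the early-return case $T\a{root}\a{count}=0$ only triggers when $n=0$, which is trivial anyway). Hence no separate analysis is needed.

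There is essentially no obstacle here — the work is entirely in having set up \ref{thm:par-filter-cost} and the definition of parallel balancing, both already in hand. If one instead wanted to be fully self-contained and not rely on the "no condition" formulation, one could alternatively appeal to \ref{thm:par-filter-bt-cost} with $h=O(\log n)$ and $k=n$, giving $O(n+n\log n)$ work — but that is weaker, so the right move is to use \ref{thm:par-filter-cost}, which already accounts for the BBT structure of $T$ to get the linear work bound via the tree-path-length sum (\ref{lem:tree-path-sum}). So the proof is a two-line appeal: parallel balancing is parallel filtering with trivial $C$; apply \ref{thm:par-filter-cost} with $w,s=O(1)$.
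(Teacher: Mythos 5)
Your proof is correct and matches the paper's: the paper likewise disposes of this theorem by a direct appeal to \ref{thm:par-filter-cost} with the trivial always-true condition, so $w,s=O(1)$ gives $O(n)$ work and $O(\log n)$ span immediately. Your extra remarks about the hypothesis being satisfied and the degenerate case are fine but not needed.
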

\begin{proof}
The claim follows immediately from \nameref{thm:par-filter-cost}
(\ref{thm:par-filter-cost}).\end{proof}
\begin{rem*}
Since a complete binary tree can be easily transformed into a 2-3
tree represented by a red-black tree, parallel balancing can be used
to transform any batch of $n$ items into a 2-3 tree within $O(n)$
work and $O(\log n)$ span.
\end{rem*}

\subsection{Parallel Partitioning}

\label{sub:par-part}

Exactly the same technique allows us to do parallel multi-way partitioning
of a sorted batch $T$ of items around a sorted batch $P$ of pivot
items, in 3 similar phases:
\begin{enumerate}
\item \textbf{\uline{Preprocessing phase}}\textbf{:} Insert $\infty$
into $P$ (treating $\infty$ as more than every item).
\item \textbf{\uline{Push-down phase}}\textbf{:} Use the pipelined splitting
scheme (\ref{sub:pipelined-splitting}) to push $T$ down $P$, where
a subtree $B$ of $T$ is pushed down whole from a node $v$ of $P$
to $v\a{left}$ iff $B\a{last}\le v\a{left}\a{last}$ and to $v\a{right}$
iff $B\a{first}>v\a{left}\a{last}$. Then clearly each item $x$ in
$T$ will be pushed down (in some subtree of $T$) to the leftmost
leaf $v$ of $P$ such that $x\le v\a{last}$. We also \textbf{close}
$v$ when no more subtrees will be fed to $v$. Once $v$ is closed
and both $v\a{queue}[1]$ and $v\a{queue}[2]$ are empty, we wait
for the splitting process at $v$ to finish before closing both $v\a{left}$
and $v\a{right}$.
\item \textbf{\uline{Collating phase}}\textbf{:} After starting the push-down
phase, for each leaf $v$ of $P$, wait for $v$ to be closed before
joining the subtrees in each $v\a{queue}[i]$ and then tagging $v$
with the join of the results.
\end{enumerate}
As before, we shall give the technical details of the whole parallel
partitioning algorithm here.
\begin{defn}[Parallel Partitioning]
 Parallel partitioning a sorted batch $T$ of items around a sorted
batch $P$ of pivot items is done via the following procedure:
\begin{block}
\item Insert $\infty$ into $P$ (as the rightmost leaf).
\item Then \textbf{push $T$ down $P$} by essentially the same pipelined
splitting scheme as in \nameref{sub:par-filter} (\ref{sub:par-filter}):

\begin{enumerate}
\item In parallel place at each node $v$ of $P$:\setstretch{.91}

\begin{itemize}
\item DedicatedQueues $v\a{queue}[1..2]$.
\item Bool $v\a{qclear}[i]:=false$ for each $i\in[1..2]$.
\item Bool $v\a{frozen}:=false$.
\item Pointer $v\a{split}:=null$.
\item Barrier $v\a{fed}$.\quad{}// see  \ref{def:barrier}
\end{itemize}
\item Define \textbf{feeding} $B$ to $v\a{queue}[i]$ to be pushing $B$
onto $v\a{queue}[i]$ and then reactivating $v\a{pushdown}[i]$.
\item Define \textbf{closing} $v$ to be calling $v\a{fed}\a{notify}()$
and then reactivating both $v\a{pushdown}[1]$ and $v\a{pushdown}[2]$.
\item Start by feeding $T\a{root}$ to $P\a{root}\a{queue}[1]$ and then
closing $P\a{root}$.
\item \br Whenever $v\a{pushdown}[i]$ is reactivated for some node $v$
of $P$, it does the following:

\begin{block}
\item If $v$ is a leaf, return.
\item Set $done:=v\a{fed}\a{notified}()$.
\item Pop subtree $B$ off $v\a{queue}[i]$.
\item If $B=null$ (i.e.~$v\a{queue}[i]$ was empty):

\begin{block}
\item If $done$:\quad{}// if no more subtrees will be fed to $v$ before
$v\a{queue[i]}$ is found empty //

\begin{block}
\item Set $v\a{qclear}[i]:=true$.
\item If $v\a{qclear}[3-i]$ and $\f{TryLock}(v\a{frozen})$:

\begin{block}
\item If $v\a{split}\ne null$, call $v\a{split}\a{wait}()$.
\item Close $v\a{left}$ and close $v\a{right}$.
\end{block}
\end{block}
\item Return.
\end{block}
\item Reactivate $v\a{pushdown}[i]$.
\item If $B\a{last}\le v\a{left}\a{last}$, feed $B$ to $v\a{left}\a{queue}[i]$
and return.
\item If $B\a{first}>v\a{left}\a{last}$, feed $B$ to $v\a{right}\a{queue}[i]$
and return.
\item Set $v\a{split}:=\new{Barrier}$.\quad{}// see  \ref{def:barrier}
\item Fork the following \textbf{splitting process}:

\begin{block}
\item While $B$ is not a leaf:

\begin{block}
\item If $B\a{left}\a{last}\le v\a{left}\a{last}$:

\begin{block}
\item Feed $B\a{left}$ to $v\a{left}\a{queue}[2]$ and set $B:=B\a{right}$.
\end{block}
\item Otherwise:

\begin{block}
\item Feed $B\a{right}$ to $v\a{right}\a{queue}[1]$ and set $B:=B\a{left}$.
\end{block}
\end{block}
\item If $B\a{last}\le v\a{left}\a{last}$, feed $B$ to $v\a{left}\a{queue}[2]$,
otherwise feed $B$ to $v\a{right}\a{queue}[1]$.
\item Call $v\a{split}\a{notify}()$.
\end{block}
\end{block}
\end{enumerate}
\item And (after starting the push-down phase) \textbf{collate} the results
by doing the following for each leaf $v$ of $P$ in parallel:

\begin{enumerate}
\item Call $v\a{fed}\a{wait}()$.
\item Join the batches in each $v\a{queue}[i]$ (by repeatedly joining the
last two until there is only one left).
\item Tag $v$ with the join of the results.
\end{enumerate}
\end{block}
\end{defn}
The correctness of the push-down phase in this parallel partitioning
algorithm follows in the same way as for parallel filtering. To check
the correctness of the whole algorithm, we just need to observe the
following invariants:
\begin{enumerate}
\item If at any time $v\a{fed}\a{notified}()$ is $true$, then there will
be no more feeding to any queue of $v$.
\item If at any time $v\a{fed}\a{notified}()$ is $true$ and $v$ is an
internal node that has empty queues and no ongoing splitting process,
then eventually both $v\a{left}\a{fed}\a{notified}()$ and $v\a{right}\a{fed}\a{notified}()$
will be $true$ and remain $true$.
\end{enumerate}
These invariants imply that eventually $v\a{fed}\a{notified}()$ will
become and remain $true$ for every leaf of $P$, after which the
results will be collated.

We now prove two lemmas, which we shall not only use to prove the
cost bounds for parallel partitioning but also employ later for the
\nameref{sec:P23T} (\ref{sec:P23T}).
\begin{lem}[BBT Slice Joining]
\label{lem:bbt-slice-join} Any ordered slice $S$ of a BBT $T$
containing $k$ leaves can be joined into a single BBT in $O(\log(k+1))$
sequential time.\end{lem}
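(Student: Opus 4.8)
The plan is to separate the problem into three parts: bound the number of pieces, pin down their \emph{shape}, and only then describe the joining algorithm. Write the ordered slice as $S=(X_1,\dots,X_m)$ in rightward order, with $X_i$ of height $h_i$ and $s_i$ leaves, so $\sum_i s_i=k$. Since every BBT is $4$-log-splitting (\ref{def:log-splitting}), $m\le 4\log(k+1)$, and since each $X_i$ is a subtree of $T$ with at most $k$ leaves, $h_i=O(\log k)$. I will use as a black box the standard BBT \emph{concatenation} primitive: two BBTs of heights $a$ and $b$ can be concatenated into a BBT of height $\max(a,b)$ or $\max(a,b)+1$ in $O(|a-b|+1)$ sequential time (descend the taller tree's right spine — or, by symmetry, the shorter's left spine of the other — down to the shorter tree's height, splice, and propagate rebalancing back up).

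The key structural claim — and, I expect, the main obstacle to make fully rigorous — is that the height profile $(h_1,\dots,h_m)$ of an ordered slice of a BBT is \emph{unimodal} (there is an index $p$ with $h_1\le\cdots\le h_p\ge h_{p+1}\ge\cdots\ge h_m$) and, moreover, that each height value is attained by only $O(1)$ of the $X_i$. Both facts should follow from the height-balance condition by a case analysis on where two consecutive slice pieces can sit inside $T$: because the pieces of a slice are pairwise \emph{non-sibling} (\ref{def:bt-slice}), a ``valley'' $h_{i-1}>h_i<h_{i+1}$, or a long run of consecutive pieces of equal height, would force some ancestor in $T$ either to strictly contain a subtree of its own height or to have two children whose heights differ by more than the balance factor allows — a contradiction. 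Getting the constants and the precise inductive statement right here is the delicate part; everything downstream is routine.

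Given the structural claim, the join is: (i) replace each maximal run of equal-height pieces by the balanced binary join of that run — since each run has $O(1)$ pieces this costs $O(1)$ time and raises the height by $O(1)$, and it turns $S$ into a sequence $Y_1,\dots,Y_q$ whose heights $g_1,\dots,g_q$ are \emph{strictly} unimodal (strictly up to a unique peak $Y_{q'}$, then strictly down); (ii) fold the strictly-increasing prefix $Y_1,\dots,Y_{q'}$ left-to-right and the strictly-decreasing suffix $Y_{q'+1},\dots,Y_q$ right-to-left, each time using the concatenation primitive; (iii) concatenate the two resulting BBTs. For step (ii): when folding a strictly-increasing run left-to-right, the running result's height stays at most one above its last piece's height, which by strict increase is at most the next piece's height; hence each concatenation costs $O(g_{i+1}-g_i+1)$ and the fold telescopes to $O(g_{q'}-g_1+q')=O(\log k)$, and symmetrically for the decreasing suffix. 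Step (iii) costs $O(\log k)$ since both operands have height $O(\log k)$. Summing, the whole join runs in $O(\log k)=O(\log(k+1))$ time and outputs one BBT with $k$ leaves, as required.

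One note on why the run-collapsing in step (i) is essential rather than cosmetic: folding naively from one end can cost $\Theta(\log^2 k)$ when the profile increases in unit steps through runs of length $\ge 2$, because the running height then drifts above the current piece's height and never resets; so the algorithm genuinely needs both the strict monotonicity obtained after collapsing and the two-directional fold, and the telescoping bound relies on exactly the structural lemma above.
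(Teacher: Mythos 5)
Your proposal is correct and follows essentially the same route as the paper's proof: the paper likewise decomposes $S$ into two height-monotonic halves with $O(1)$ repetitions per height, joins each half from shortest to tallest, and then joins the two results in $O(\log(k+1))$ time. The only differences are presentational --- the paper asserts the structural fact about slices without proof (just as you sketch it) and claims $O(1)$ per join outright, whereas you collapse equal-height runs and use a telescoping $O(g_{i+1}-g_i+1)$ bound, which is a slightly more robust way to reach the same conclusion.
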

\begin{proof}
$S$ is the concatenation of two ordered slices such that in each
of them the subtrees have monotonic height with at most one pair of
the same height. Thus we can join the subtrees in each ordered slice
from shortest to tallest, taking $O(1)$ time per join, and then join
the two results in $O(\log(k+1))$ time.
\end{proof}

\begin{lem}[BBT Log Sum Bound]
\label{lem:bbt-log-sum} Take any real $n$, and any BBT $T$ with
$k$ leaves and a non-negative real weight-function $m$ on its nodes
such that $\sum_{v\in H(T,h)}m(v)\le n$ for every $h\in[0..T\a{height}]$.
Then $\sum_{v\in V(T)}\log(m(v)+1)\in O\left(k\cdot\log\left(\frac{n}{k}+1\right)+k\right)$.
\end{lem}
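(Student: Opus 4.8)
The plan is to bound $\sum_{v\in V(T)}\log(m(v)+1)$ level by level, exploiting that on each level $h$ the weights sum to at most $n$ while the number of nodes on level $h$ is at most $k$ (in fact roughly $k/2^h$ for a BBT). First I would stratify the sum as $\sum_{h=0}^{T\a{height}}\sum_{v\in H(T,h)}\log(m(v)+1)$, and for each fixed $h$ apply concavity of $\log(x+1)$ via Jensen's inequality: if $n_h=|H(T,h)|$ then $\sum_{v\in H(T,h)}\log(m(v)+1)\le n_h\cdot\log\!\left(\frac{1}{n_h}\sum_{v\in H(T,h)}m(v)+1\right)\le n_h\cdot\log\!\left(\frac{n}{n_h}+1\right)$, using the hypothesis $\sum_{v\in H(T,h)}m(v)\le n$. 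So the whole sum is at most $\sum_{h=0}^{T\a{height}} n_h\cdot\log\!\left(\frac{n}{n_h}+1\right)$.

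Next I would use the BBT structure: since $T$ is height-balanced with $k$ leaves, $n_h\le \min(k,\, c\cdot k/2^h)$ for an absolute constant $c$ (the number of nodes at depth-from-leaves $h$ in a BBT on $k$ leaves is $O(k/2^h)$, and of course never exceeds $k$), and also $T\a{height}=O(\log k)$. The function $x\mapsto x\log(n/x+1)$ is increasing in $x$ on the relevant range, so I can replace $n_h$ by this upper bound. Splitting the sum at the threshold $h^\ast$ where $k/2^h$ first drops below $\max(k/n,1)\cdot$ something — more concretely, at $2^{h^\ast}\approx \max(n/k,1)$ — the "low" levels $h\le h^\ast$ each contribute $O(k\cdot\log(\frac{n}{k}+1)+k)$... no, more carefully: for $h\le \log(n/k)$ we have $n_h\ge k^2/n$-ish and $\log(n/n_h+1)=O(\log(n\cdot 2^h/k)) = O(h+\log(n/k))$, giving per-level contribution $O\bigl(\frac{k}{2^h}(h+\log\frac nk)\bigr)$, whose sum over $h\ge 0$ telescopes geometrically to $O(k\log(\frac nk+1)+k)$; for the "high" levels where $n_h$ is small, $\log(n/n_h+1)=O(\log n)$ but there the geometric decay $k/2^h$ kills it, or one simply bounds $n_h\log(\frac{n}{n_h}+1)\le n$ and there are only $O(\log(n/k)+1)$ extra such levels contributing $O(n)$ total — wait, that overshoots; instead note high levels have $h\ge\log k$ essentially never occur since $T\a{height}=O(\log k)$, so there are no truly "high" levels and the geometric sum alone suffices. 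I would consolidate this into the single clean estimate $\sum_h \frac{k}{2^h}\cdot O\!\left(\log\!\left(\tfrac{n\,2^h}{k}\right)+1\right)=O\!\left(k\log\!\left(\tfrac nk+1\right)+k\right)$.

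The main obstacle — and the step needing genuine care rather than routine manipulation — is handling the regime $n<k$ (so $\log(n/k+1)$ is a constant or even the argument is near $1$), and making sure the Jensen bound $n_h\log(n/n_h+1)$ is applied only where $n/n_h\ge 0$ is meaningful and doesn't blow up when $n_h$ is tiny; one must verify $x\mapsto x\log(n/x+1)$ really is monotone increasing for $x\in(0,n]$ (it is: its derivative is $\log(n/x+1)-\frac{n/x}{n/x+1}\ge \log 2-1<0$? — actually I should instead use the cleaner inequality $x\log(n/x+1)\le x\log(n/x)+x$ valid for $x\le n$ and check monotonicity of $x\log(n/x)$, which is increasing on $(0,n/e]$ and I just absorb the $[n/e,n]$ piece into the additive $O(k)$ term). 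Getting these edge cases right is where the "$+k$" slack in the statement earns its keep, so I would be careful to show each approximation loses at most an additive $O(k)$ overall. Finally I would also give the symmetric/degenerate check: if $n_h$ ever equals $0$ the level contributes nothing, and the empty-tree case is trivial.

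Assembling: stratify by level $\Rightarrow$ Jensen per level $\Rightarrow$ plug in $n_h=O(k/2^h)$ and $T\a{height}=O(\log k)$ $\Rightarrow$ sum the resulting geometric-times-logarithmic series $\Rightarrow$ read off $O(k\log(\tfrac nk+1)+k)$.
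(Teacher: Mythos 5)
Your proposal is correct and follows essentially the same route as the paper: stratify the sum by subtree height, apply Jensen's inequality on each level, combine the geometric decay of the level sizes with the monotonicity of $x\mapsto x\log\left(\frac{n}{x}+1\right)$, and sum the resulting geometric-times-logarithmic series. One slip worth fixing: for a height-balanced BBT the claim $|H(T,h)|=O(k/2^{h})$ is false (a height-$h$ node is only guaranteed at least $(3/2)^{h}$ leaves, so the correct decay is $|H(T,h)|\le k\cdot(2/3)^{h}$, which is what the paper uses); since any geometric ratio strictly below $1$ makes your series converge the same way, this does not affect the conclusion. Your hesitation about monotonicity is also unnecessary: the derivative of $x\log\left(\frac{n}{x}+1\right)$ equals $\log\left(\frac{n}{x}+1\right)-\frac{n/x}{n/x+1}$, which is nonnegative for all $x>0$ because $\log(1+t)\ge\frac{t}{1+t}$, so no case split at $x=n/e$ or special handling of the $n<k$ regime is needed.
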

\begin{proof}
Let $c(h)$ be the size of $H(T,h)$. Each node in $H(T,h)$ has at
least $(\frac{3}{2})^{h}$ leaves by induction, and hence $c(h)\le k\cdot r^{h}$
where $r=\frac{2}{3}$. And $\sum_{v\in H(T,h)}\log(m(v)+1)\le c(h)\cdot\log\left(\frac{n}{c(h)}+1\right)$
by Jensen's inequality. Since $c\cdot\log\left(\frac{n}{c}+1\right)$
increases when $c$ increases, $\sum_{v\in V(T)}\log(m(v)+1)\le\sum_{h=0}^{\infty}\left(k\cdot r^{h}\cdot\log\left(\frac{n}{k\cdot r^{h}}+1\right)\right)\le\sum_{h=0}^{\infty}\left(k\cdot r^{h}\cdot\left(\log\left(\frac{n}{k}+1\right)+h\cdot\log\frac{1}{r}\right)\right)\in O\left(k\cdot\log\left(\frac{n}{k}+1\right)+k\right)$.
\end{proof}

\begin{thm}[Parallel Partitioning Costs]
\label{thm:par-part-cost} Parallel partitioning a sorted batch $T$
of $n$ items around a sorted batch $P$ of $k$ pivots takes $O\left(k\cdot\log\left(\frac{n}{k}+1\right)+k\right)$
work and $O(\log n+\log k)$ span.\end{thm}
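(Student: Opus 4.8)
The plan is to mirror the three-phase analysis of \nameref{sub:par-filter} almost line for line, with \ref{lem:bbt-log-sum} replacing the cruder \ref{lem:tree-path-sum} used there. Write $P'$ for the BBT obtained from $P$ by inserting $\infty$; it has $k+1$ leaves, splicing $\infty$ in and rebalancing is $O(\log k)$ work and span, and recursively installing the two dedicated queues, the flags $v\a{frozen}$, the barriers $v\a{fed}$ and the pointers $v\a{split}$ at every node of $P'$ is $O(k)$ work and $O(\log k)$ span. So everything reduces to the push-down and collating phases.

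For correctness of the push-down phase, I would first observe that its pipelined-splitting core is literally the one used in parallel filtering with the routing predicate ``$B\a{rend}\le v\a{left}\a{rend}$'' replaced by ``$B\a{last}\le v\a{left}\a{last}$'' (and symmetrically), so the exact analogue of \ref{lem:par-filter-inv} goes through by the same structural induction on $P'$: the subtrees fed to $v$ form a slice of $T$; those fed to $v\a{queue}[1]$ lie strictly left in $T$ of those fed to $v\a{queue}[2]$ and come in leftward order of strictly increasing depth; those fed to $v\a{queue}[2]$ come in rightward order of strictly increasing depth; and the splitting process at $v$ runs at most once and is thereafter the only thread feeding $v\a{left}\a{queue}[2]$ or $v\a{right}\a{queue}[1]$ -- so no dedicated queue ever sees concurrent pushes or pops and they all work correctly. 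The only new ingredient is the $v\a{fed}$/$v\a{frozen}$ machinery, for which I would check the two invariants stated just before \ref{lem:bbt-slice-join}: a node notifies its children's $\a{fed}$ barriers only when it freezes, which it does only after its own queues are permanently empty and its splitting process has finished, giving ``$v\a{fed}\a{notified}()$ $\Rightarrow$ no further feeding to $v$''; and a notified internal node with empty queues and no live splitting process freezes within $O(1)$ span. A downward induction on $P'$ then makes $v\a{fed}$ eventually notified at every leaf, and, using sortedness of the pivots, an item $x$ of $T$ is routed precisely to the leftmost leaf $\ell$ of $P'$ with $x\le\ell\a{last}$; hence the slice finally resting at a leaf $v$ of $P'$ is exactly the set of items of $T$ destined for the subtree of $P'$ at $v$, which the collating phase joins into one BBT and attaches to $v$.

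The costs follow the filtering template. For work, queue initialisation and the $O(1)$-per-node freezing give $O(k)$, and everything else is $O(1)$ per feeding, since each self-reactivation of $v\a{pushdown}$ is charged to a popped subtree and each while-iteration of a splitting process feeds one subtree to a child. Let $n_v$ be the number of leaves of $T$ destined for the subtree of $P'$ at $v$; the invariants give that the slice fed to $v$ has exactly $n_v$ leaves, so (as $T$ is a BBT, hence $4$-log-splitting by \ref{def:log-splitting}) at most $O(\log(n_v+1))$ subtrees are ever fed to $v$, and the total number of feedings is $O\left(\sum_{v\in V(P')}\log(n_v+1)\right)$. Since each item of $T$ passes through exactly one node at each height of $P'$, we have $\sum_{v\in H(P',h)}n_v=n$ for all $h$, so \ref{lem:bbt-log-sum} applied to $P'$ with weight $n_v$ gives $\sum_{v\in V(P')}\log(n_v+1)\in O\left(k\cdot\log\left(\frac{n}{k}+1\right)+k\right)$, which is the push-down work. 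For span I would reuse the fragment argument of \ref{thm:par-filter-cost}: cutting each run of $v\a{pushdown}$ or of a splitting process at its loop boundaries, a $(v,X)$-fragment depends only on a $(v\a{parent},X)$-fragment and a $(v,X')$-fragment with $X'$ strictly shallower than $X$ in $T$, or on a fragment handling $X\a{parent}$; induction then shows every fragment completes, once the push-down phase has started, within span $O(a+b+1)$ where $a$ is the depth of $v$ in $P'$ and $b$ is the depth of $X$ in $T$. Hence all subtrees fed to any $v$ are processed within $O(P'\a{height}+T\a{height})\wi O(\log k+\log n)$ span, and since freezing $v$ waits only on its parent's freezing and on $v$'s queues permanently emptying -- both $O(\log k+\log n)$-span events -- a second downward induction on $P'$ shows $v\a{fed}$ is notified at every leaf within $O(\log k+\log n)$ span.

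Finally, for each leaf $v$ of $P'$ the collating step joins the reverse-ordered (hence short-to-tall) ordered slice of $n_v$ leaves in its queues in $O(\log(n_v+1))$ sequential time by \ref{lem:bbt-slice-join} and attaches the result to $v$ in $O(1)$; summing over the $k+1$ leaves, bounded above by the sum over all nodes of $P'$ just estimated, the collating work is $O\left(k\cdot\log\left(\frac{n}{k}+1\right)+k\right)$, while the collating recursion adds $O(\log k)$ span to descend $P'$ plus $O(\log n)$ span per leaf for the joins and runs after the $v\a{fed}$ notifications, which are themselves within $O(\log k+\log n)$ span. Adding the three phases gives the stated bounds. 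I expect the genuine work -- the part that does not merely transcribe parallel filtering -- to be the two new points: proving that the slice fed to $v$ consists of exactly the leaves of $T$ destined for the subtree of $P'$ at $v$ (the fact that makes $\sum_{v\in H(P',h)}n_v=n$ hold, and hence lets \ref{lem:bbt-log-sum} apply), and showing that the $v\a{frozen}$ wave cannot stall, by coupling the ``all subtrees fed to $v$ processed'' span bound with the downward propagation of $v\a{fed}$.
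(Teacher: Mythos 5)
Your proposal matches the paper's proof essentially line for line: same reduction to counting feedings, same use of the $4$-log-splitting property to bound feedings per node by $O(\log(n_v+1))$, same application of the BBT Log Sum Bound over $H(P,h)$, the same fragment-based span argument inherited from parallel filtering, and the same treatment of the collating phase via BBT slice joining. One trivial slip: $\sum_{v\in H(P,h)}n_v$ need not equal $n$ (a root-to-leaf path in a height-balanced tree can skip a subtree-height level), but the inequality $\le n$ is all that \ref{lem:bbt-log-sum} requires, and that is what the paper states.
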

\begin{proof}
First we bound the work taken by the push-down phase. Preparing $P$
(i.e.~inserting $\infty$ and initializing the nodes) takes $O(\log n+k)$
work and $O(\log k)$ span, and $\log n<\log(n+1)+(k-1)\cdot\log1\le k\cdot\log\left(\frac{n}{k}+1\right)$
by Jensen's inequality. Observe that the remaining work is $O(1)$
times the number of feedings plus $O(1)$ times the number of times
$v\a{frozen}$ is changed from $false$ to $true$ for some node $v$.
The latter is clearly $O(k)$, so it suffices to bound the number
of feedings.

The subtrees fed to each node $v$ of $P$ form a slice of $T$ (\ref{def:bt-slice}),
so the number of feedings to $v$ is at most $4\cdot\log_{2}(m(v)+1)$
where $m(v)$ is the total number of items in that slice (since $T$
is $4$-log-splitting). And clearly $\sum_{v\in H(P,h)}m(v)\le n$
for every $h\in[0..P\a{height}]$. Thus by the \nameref{lem:bbt-log-sum}
(\ref{lem:bbt-log-sum}) the total number of feedings is $O\left(k\cdot\log\left(\frac{n}{k}+1\right)+k\right)$.

Next we bound the span taken by the push-down phase. By the same proof
as for \nameref{thm:par-filter-cost} (\ref{thm:par-filter-cost}),
every run of $v\a{pushdown}[i]$ that processes some subtree of $T$
ends within $O(T\a{height}+P\a{height}+1)\wi O(\log n+\log k)$ span
after the push-down phase starts, since every such run must come before
any run of $v\a{pushdown}[i]$ that reaches ``$\f{TryLock}(v\a{frozen})$''
(at which point no more subtree will be processed at $v$). And every
splitting process ends within $O(\log n)$ span after the run that
forked it ends. And each node will be closed within $O(1)$ span after
all these runs and splitting processes have ended and its parent (if
any) has been closed. Therefore the push-down phase takes $O(\log n+\log k)$
span.

Finally, we bound the work and span taken by the collation phase.
The waiting clearly takes $O(k)$ work and $O(\log k)$ span. Joining
the subtrees at each leaf $v$ of $P$ takes $O\left(\log(m(v)+1)\right)\wi O(\log n)$
work/span (see \ref{lem:par-filter-inv} and \ref{lem:bbt-slice-join}).
Thus the collation phase takes $O(1)$ times the work taken by the
push-down phase, and we are done.
\end{proof}

\subsection{Parallel Joining}

\label{sub:par-join}

Incidentally, parallel joining (concatenating) of a batch of batches
can be done efficiently by using a simple parallel filtering.
\begin{defn}[Parallel Joining]
\label{def:par-join} Parallel joining a batch $T$ of $b$ batches
is done via the following 2-phase procedure:
\begin{enumerate}
\item In parallel replace each leaf $v$ of $T$ by the BBT for the batch
at $v$ if that batch is non-empty (i.e.~if the batch $B$ stored
at $v$ is stored in a BBT with root $r$, then copy the fields of
$r$ to $v$), otherwise put a dummy item ($null$) at $v$.
\item Parallel filter (\ref{sub:par-filter}) $T$ to remove dummy items
and hence obtain the desired output batch $U$.
\end{enumerate}
\end{defn}
\begin{thm}[Parallel Joining Costs]
\label{thm:par-join-cost} Parallel joining a batch of $b$ batches
with total size $n$ takes $O(b+n)$ work and $O(\log b+\log n)$
span.\end{thm}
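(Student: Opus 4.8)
The plan is to analyze the two phases in \ref{def:par-join} separately and then add the bounds, relying on the composability of work and span in this model. The first phase — replacing each leaf $v$ of $T$ by the BBT for the batch stored at $v$ (or by a dummy leaf $null$ if that batch is empty) — can be done in parallel over the $b$ leaves of $T$, since each replacement is an independent $O(1)$-work local operation. After the replacements, $T$ becomes a single leaf-based binary tree of total size $\Theta(b+n)$ (each original non-empty batch contributes its leaves, each empty batch contributes one dummy leaf, and the internal structure of $T$ contributes $O(b)$ nodes). The subtlety is that this combined tree is \emph{not} necessarily balanced: $T$ itself is a BBT (height $O(\log b)$), and each substituted BBT has height $O(\log(\text{its size}))$, so the combined tree has height $h = O(\log b + \log n)$ but its size-versus-height relationship need not satisfy the $c$-balanced condition uniformly. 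So I would invoke \nameref{thm:par-filter-bt-cost} (\ref{thm:par-filter-bt-cost}) rather than \ref{thm:par-filter-cost}: parallel filtering a leaf-based binary tree of size $N = \Theta(b+n)$ and height $h = O(\log b + \log n)$, with the trivial $O(1)$-cost dummy-removal condition $C$ and output size $k = O(n)$ (at most $n$ real items, plus we should handle the corner case where all batches are empty and the output is a single blank batch), takes $O(N\cdot 1 + k\cdot h) = O(b + n + n\cdot(\log b + \log n))$ work — which is \emph{not} the claimed bound.

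So a direct application of \ref{thm:par-filter-bt-cost} is too lossy, and the real work is to argue the feeding count is actually $O(b+n)$, not $O(k\cdot h)$. Here is the key observation: the output batch $U$ of the filter is $4$-log-splitting (it is a complete BBT), so by \ref{lem:par-filter-inv} Invariant~1 each node $v$ of the combined input tree receives a slice of $U$ with at most $4\log(c(v)+1)$ subtrees, where $c(v)$ is the number of real items in the subtree rooted at $v$. The total number of feedings is therefore $O\!\left(\sum_v \log(c(v)+1)\right)$, and I would bound this sum by splitting the combined tree at the former leaves of $T$: the contribution from nodes inside each substituted BBT $B_j$ (with $n_j$ leaves) is $O(n_j)$ by \ref{lem:bal-tree-bound} (each BBT is $O(1)$-balanced), summing to $O(n)$; and the contribution from the $O(b)$ nodes of $T$ itself is $\sum_{v\in V(T)}\log(c(v)+1)$, which again falls under \ref{lem:bal-tree-bound} applied to $T$ (an $O(1)$-balanced binary tree with $b$ leaves) after noting $\log(c(v)+1)\le\log(v.\mathrm{size}) + O(1)$ is too crude when $c(v)$ can be much larger than $v.\mathrm{size}$ — instead I would use \nameref{lem:bbt-log-sum} (\ref{lem:bbt-log-sum}) with weight $m(v)=c(v)$ and bound $n$, since $\sum_{v\in H(T,h)}c(v)\le n$ for each level $h$ of $T$, giving $O(b\log(n/b+1)+b)\subseteq O(b+n)$. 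Combining, the feeding count is $O(b+n)$, hence the push-down phase of the filter does $O(b+n)$ work.

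For the span, everything is routine: phase~1 runs in $O(\log b)$ span (parallel recursion over $T$, with each leaf replacement taking $O(1)$ plus a recursive descent over the substituted BBT of height $O(\log n)$ to read off its root data — so $O(\log b+\log n)$); and phase~2, parallel filtering a tree of height $h=O(\log b+\log n)$, runs in $O(h+s)=O(\log b+\log n)$ span by the span analysis in \nameref{thm:par-filter-cost} (\ref{thm:par-filter-cost}), which is reproduced verbatim in \ref{thm:par-filter-bt-cost}. Adding the two phases gives $O(b+n)$ work and $O(\log b + \log n)$ span. The main obstacle is purely the work bound for phase~2: one must resist the temptation to plug into \ref{thm:par-filter-bt-cost} as a black box and instead re-run its feeding-count argument with the sharper per-node bound $\log(c(v)+1)$ and the level-sum estimate of \nameref{lem:bbt-log-sum}. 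A clean way to present this is to prove, as a small lemma or inline remark, that parallel filtering the tree obtained by gluing BBTs onto the leaves of a BBT removes dummies in work linear in the total size — then the theorem is immediate.
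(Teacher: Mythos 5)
Your proof is correct, but for the crucial step (the work bound of phase~2) you take a genuinely different route from the paper. The paper's proof is shorter: it observes that after phase~1 the glued tree is $4$-balanced (every BBT is $2$-balanced, and the height of any node of the combined tree is at most its height within $T$ plus the maximum height of a substituted BBT below it, each of which is at most $2\log$ of the combined subtree size), and then applies the \nameref{lem:bal-tree-bound} (\ref{lem:bal-tree-bound}) to the \emph{whole} combined tree with $O(b+n)$ leaves, using $c(v)\le v\a{size}$ at every node to get a feeding count of $O(b+n)$ in one stroke. So your stated reason for rejecting this path --- that the combined tree ``need not satisfy the $c$-balanced condition uniformly'' --- is actually unfounded; the paper's remark after the theorem makes the same point when it notes the glued tree is still $8$-log-splitting and of height $O(\log n)$. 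That said, your workaround is sound: you correctly recognize that \ref{thm:par-filter-bt-cost} as a black box gives only $O(b+n+n\cdot(\log b+\log n))$, you correctly re-run the feeding-count argument from \nameref{thm:par-filter-cost} (\ref{thm:par-filter-cost}), and your two-part decomposition --- \nameref{lem:bal-tree-bound} (\ref{lem:bal-tree-bound}) inside each substituted BBT contributing $O(n)$, plus \nameref{lem:bbt-log-sum} (\ref{lem:bbt-log-sum}) with weights $c(v)$ on the top BBT $T$ contributing $O\left(b\cdot\log\left(\frac{n}{b}+1\right)+b\right)\wi O(b+n)$ --- is a valid substitute that buys you independence from any balance claim about the glued tree, at the cost of invoking two lemmas where the paper needs one. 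The span analysis matches the paper's.
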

\begin{proof}
Phase~1 clearly takes $O(b+n)$ work and $O(\log b+\log n)$ span,
after which $T$ is $4$-balanced since every BBT is $2$-balanced.
By the \nameref{thm:par-filter-cost-gen} (\ref{thm:par-filter-cost-gen}),
we are done.\end{proof}
\begin{rem*}
Alternatively, in phase~2 we can push $T$ down $U$ instead, after
which every non-dummy item of $T$ will have been pushed to a unique
leaf $v$ of $U$ and will be at one end of the slice of $T$ that
reaches $v$, and hence we can in $O(1)$ steps store that item at
$v$ as desired. $T$ may not be a BBT after phase~1, but still has
$O(\log b+\log n)$ height and is $8$-log-splitting. Thus the same
proof as for \nameref{thm:par-filter-cost} (\ref{thm:par-filter-cost})
works.
\end{rem*}
Note that if we have a batch of unsorted instances of $\tr$, rather
than just a batch of plain batches, then there is a more efficient
algorithm to join them (\ref{sub:P23T-join}).

\subsection{Parallel Merging}

\label{sub:par-merge}

Another useful operation is parallel merging of two sorted batches.
\begin{defn}[Parallel Merging]
 Parallel merging two sorted lists $A,B$ is done via the following
3-phase procedure:
\begin{enumerate}
\item Parallel partition (\ref{sub:par-part}) $B$ around $A$, resulting
in a part of $B$ at each leaf of $A$.
\item For each leaf $v$ of $A$ in parallel, insert the item of $A$ at
$v$ into the part of $B$ at $v$, optionally combining duplicates
of the same item. (This combining procedure can be any $O(1)$-time
procedure.)
\item Parallel join (\ref{sub:par-join}) the resulting batches at the leaves
of $v$.
\end{enumerate}
\end{defn}
\begin{thm}[Parallel Merging Costs]
\label{thm:par-merge-cost} Parallel merging sorted lists $A,B$
with total size $n$ takes $O(n)$ work and $O(\log n)$ span.\end{thm}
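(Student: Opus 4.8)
The plan is to bound each of the three phases separately and then combine them using the subadditivity of work and span. Write $a=|A|$ and $b=|B|$, so that $n=a+b$.

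For Phase~1, apply \nameref{thm:par-part-cost} (\ref{thm:par-part-cost}) with the $b$ items of $B$ being partitioned around the $a$ pivots of $A$: this costs $O\left(a\cdot\log\left(\frac{b}{a}+1\right)+a\right)$ work and $O(\log b+\log a)$ span. The span is clearly $O(\log n)$, and for the work it suffices to note the elementary inequality $a\cdot\log\left(\frac{b}{a}+1\right)\le n$: writing $x=b/a$, the left side is $\frac{b}{x}\log(x+1)$, which is at most $b\log 2\le b$ when $x\ge 1$ and at most $b$ (since $\log(x+1)<x$) when $x<1$. Hence Phase~1 takes $O(n)$ work and $O(\log n)$ span.

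For Phase~2, observe that by the partitioning rule the part $B_v$ of $B$ landing at a leaf $v$ of $A$ is exactly the (possibly empty) sorted block of items of $B$ that are at most the pivot at $v$ and greater than the pivot at the preceding leaf of $A$ (if any), so the pivot at $v$ is $\ge$ every item of $B_v$; moreover $B_v$ is a BBT, since the collating phase of parallel partitioning obtains it by joining an ordered slice of the BBT $B$ (apply \nameref{lem:bbt-slice-join}, \ref{lem:bbt-slice-join}). Thus inserting the pivot at $v$ into $B_v$ amounts to appending one leaf at the right end of $B_v$, which can be done while restoring the BBT property in $O(\log(|B_v|+1))$ time by walking down and then back up the right spine (and if $|B_v|=0$ the result is a singleton). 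Recursing over the $a$ leaves of $A$ in parallel, the span is $O(\log a+\log b)=O(\log n)$, and the work is $O\left(\sum_{v}\log(|B_v|+1)\right)$ with $\sum_v|B_v|=b$ spread over $a$ terms; by concavity of $\log$ this is at most $a\cdot\log\left(\frac{b}{a}+1\right)\le n$ by the same inequality as above. So Phase~2 takes $O(n)$ work and $O(\log n)$ span, and it produces at each leaf of $A$ a legitimate (BBT) batch, the $a$ batches having total size $a+b=n$.

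For Phase~3, apply \nameref{thm:par-join-cost} (\ref{thm:par-join-cost}) to the batch of $a$ batches of total size $n$ sitting at the leaves of $A$: this costs $O(a+n)=O(n)$ work and $O(\log a+\log n)=O(\log n)$ span. Adding the three phases by subadditivity then gives the claimed $O(n)$ work and $O(\log n)$ span. I do not expect a serious obstacle here; the only points to watch are the arithmetic inequality $k\cdot\log(\frac{n}{k}+1)\le n$ (used in both Phases~1 and~2) and the observation that each block $B_v$ is a BBT with the pivot lying to its right, so that the Phase~2 insertion is a cheap right-spine append rather than a general insertion and its output is a valid input to parallel joining.
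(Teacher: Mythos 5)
Your proposal is correct and follows essentially the same route as the paper: bound Phase~1 by \nameref{thm:par-part-cost} (\ref{thm:par-part-cost}), bound Phase~2 by $O\bigl(\sum_v\log(|B_v|+1)+a\bigr)$ and collapse it via Jensen/concavity, and bound Phase~3 by \nameref{thm:par-join-cost} (\ref{thm:par-join-cost}). The extra details you supply (the explicit inequality $k\cdot\log(\frac{n}{k}+1)\le n$ and the right-spine-append implementation of the Phase~2 insertion) are points the paper leaves implicit, but they do not change the argument.
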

\begin{proof}
Let $k,m$ be the sizes of $A,B$ respectively. Then phase~1 takes
$O\left(k\cdot\log\left(\frac{m}{k}+1\right)+k\right)\wi O(k+m)$
work and $O(\log k+\log m)$ span (\ref{thm:par-part-cost}). Let
$c_{i}$ is the size of the part of $B$ that was at the $i$-th leaf
of $A$ after phase~1. Then phase~2 takes $O\left(\sum_{i=1}^{k}\log(c_{i}+1)+k\right)\wi O\left(k\cdot\log\left(\frac{m}{k}+1\right)+k\right)$
work (by Jensen's inequality) and $O(\log k+\log m)$ span. And phase~3
takes $O(n)$ work and $O(\log n)$ span (\ref{thm:par-join-cost}).\end{proof}
\begin{rem*}
Note that it is actually possible to merge two sorted lists of lengths
even more efficiently in some cases if they are stored in instances
of a specialized sorted-set data type, such as if they are stored
in parallel sorted-sets based on the batch-parallel 2-3 tree (\ref{sec:PSS})
and their sizes $k,m$ satisfy $k\ll m$.
\end{rem*}

\subsection{Parallel Sorting}

\label{sub:par-sort}

Now we come to the problem of sorting a batch of items from an arbitrary
set $S$ with a given (strict) linear ordering $<$ on $S$. As is
standard, let $S^{n}$ be the set of all length-$n$ sequences from
$S$. In our setting, a sorting algorithm is a procedure that given
any input batch of items from $S$ will output a batch containing
exactly the same items sorted in (non-strict) $<$-increasing order.
Each item in the input batch may have an arbitrary tag, and all copies
of an item are considered equal under the ordering regardless of their
tags, but these tags must be preserved in the output batch.

We shall first give the obvious parallel merge-sort $\f{PMSort}$
that we obtain from parallel merging, reproduced from \cite{OPFS}.
Incidentally, $\f{PMSort}$ outperforms the multithreaded sorting
algorithm in \cite{clrs2009algointro} (which takes $O(n\cdot\log n)$
work and $O\left((\log n)^{3}\right)$ span).
\begin{defn}[Parallel Merge-Sort]
\label{def:par-msort} Let $\f{PMSort}$ be the procedure that does
the following on an input batch $I$ of items:
\begin{block}
\item If $I\a{size}\le1$, return $I$. Otherwise, compute in parallel $A=\f{PMSort}(I\a{left})$
and $B=\f{PMSort}(I\a{right})$, and then parallel merge (\ref{sub:par-merge})
$A$ and $B$ into an item-sorted batch $C$, and then return $C$.
\end{block}
\end{defn}
\begin{thm}[$\f{PMSort}$ Costs]
 $\f{PMSort}(I)$ takes $O(n\cdot\log n)$ work and $O\left((\log n)^{2}\right)$
span for every sequence $I$ in $S^{n}$.\end{thm}
\begin{proof}
The claim follows directly from the work/span bounds for parallel
merging (\ref{thm:par-merge-cost}) and $I\a{height}\in O(\log n)$.
\end{proof}
$\f{PESort}$, also first given in \cite{OPFS}, is obtained from
$\f{PMSort}$ by simply tweaking it to combine multiple copies of
the same item into bundles, by combining bundles in each merging step.
$\f{PESort}$ that achieves the entropy bound for work but yet takes
only $O\left((\log n)^{2}\right)$ span on an input list of $n$ items.
\begin{defn}[Parallel Entropy-Sort]
\label{def:par-esort} Define a \textbf{bundle} of an item $x$ to
be a BT (binary tree) in which every leaf has a tagged copy of $x$.
Let $\f{PESort}$ be the parallel merge-sort variant that does the
following on an input batch $I$ of items:
\begin{block}
\item If $I\a{size}\le1$, return $I$. Otherwise, compute in parallel $A=\f{PESort}(I\a{left})$
and $B=\f{PESort}(I\a{right})$, and then parallel merge (\ref{sub:par-merge})
$A$ and $B$ into an item-sorted batch $C$ of bundles, combining
bundles of the same item into one by simply making them the child
subtrees of a new bundle, and then return $C$.
\end{block}
Then $\f{PESort}(I)$ returns an item-sorted batch of bundles, with
one bundle (of all the tagged copies) for each distinct item in $I$,
and clearly each bundle has height at most $I\a{height}$.\end{defn}
\begin{thm}[$\f{PESort}$ Costs]
\label{thm:par-esort-cost} $\f{PESort}(I)$ takes $O(H+n)\wi O(n\cdot\log u+n)$
work and $O\left((\log n)^{2}\right)$ span for every sequence $I$
in $S^{n}$ with item frequencies $q_{1..u}$, where $H=\sum_{i=1}^{u}\left(q_{i}\cdot\ln\frac{n}{q_{i}}\right)$.\end{thm}
\begin{proof}
The full proof can be found in \cite{OPFS}, but note that the key
lemma is simply that for each item $x$ with frequency $k$ in $I$
there are at most $O\left(k\cdot\log\frac{n}{k}+k\right)$ merging
steps whose result contains $x$, and that key lemma can be generalized
to \ref{lem:subtree-size-bound}.
\end{proof}
$\f{PESort}$ is not a full sorting algorithm, because $\f{PESort}(I)$
where $I$ is in $S^{n}$ returns a batch of bundles rather than a
batch of items. Nevertheless, this suffices for many applications
because each bundle in the output batch has height at most $I\a{height}\in O(\log n)$.
But we can in fact obtain a full parallel entropy-sorting algorithm.
Specifically, we can convert each bundle in $\f{PESort}(I)$ to a
batch (\ref{def:bundle-balance}), and then parallel join (\ref{sub:par-join})
all those batches to obtain the desired output.
\begin{defn}[Bundle Balancing]
\label{def:bundle-balance} A bundle $B$ of size $b$ and height
$h$ is balanced as follows:
\begin{block}
\item Recursively construct a linked list through all the leaves of $B$,
and mark the leaves of $B$ with ($1$-based) rank of the form $(i\cdot h+1)$,
and then extract those marked leaves as a batch $P$ by parallel filtering
(\ref{sub:par-filter}). Then at each leaf $v$ in $P$, construct
and store at $v$ a batch of the items in $B$ with ranks $i\cdot h+1$
to $(i+1)\cdot h$, obtained by traversing the linked list forward.
Now $P$ is essentially a batch of size-$h$ batches (except perhaps
the last smaller batch), which we then recursively join to obtain
the batch of all items in $G$.
\end{block}
\end{defn}

\begin{thm}[Bundle Balancing Costs]
 Balancing a bundle $B$ of size $b$ and height $h$ takes $O(b)$
work and $O(h)$ span.\end{thm}
\begin{proof}
Note that $B$ has less internal nodes than leaves, and so constructing
the linked list takes $O(b)$ work and $O(h)$ span. Extracting the
batch $P$ of items of $B$ with ranks at intervals of $h$ takes
$O\left(b+P\a{size}\cdot h\right)=O(b)$ work and $O(h)$ span, by
essentially the same proof as \ref{thm:par-filter-cost} since the
number of feedings is at most $P\a{size}\cdot h$. Constructing the
batches of items in-between those in $P$ takes $O(b)$ work and $O(P\a{height}+h)\wi O(h)$
span, and recursively joining them takes $O(1)$ work and span per
node of $P$ (except $O(h)$ span for the first joining involving
the last batch).
\end{proof}

\section{Batch-Parallel 2-3 Tree}

\label{sec:P23T}

We now present the batch-parallel 2-3 tree $\tr$ and explain how
to support the following operations on $\tr$:
\begin{itemize}
\item \textbf{Unsorted batch search/update:} Perform an unsorted batch of
$b$ searches/updates within $O(b\cdot\log n)$ work and $O(\log b\cdot\log n)$
span, tagging each search with the result and a direct pointer to
the item in $\tr$ (if it exists).
\item \textbf{Sorted batch access:} Perform an item-sorted batch of $b$
accesses (i.e.~searches/updates, inserts and deletes) to distinct
items within $O\left(b\cdot\log\left(\frac{n}{b}+1\right)+b\right)$
work and $O(\log b+\log n)$ span, tagging each access with the result
and a direct pointer to the item in $\tr$ (if it exists).
\item \textbf{Batch reverse-indexing:} Given an unsorted batch of $b$ direct
pointers to distinct items in $\tr$, return a sorted batch of those
items within $O\left(b\cdot\log\frac{n}{b}+b\right)$ work and $O(\log n)$
span.
\end{itemize}
Here $n$ is the current number of items in $\tr$, and a \textbf{direct
pointer} is an object that allows retrieving the item and its attached
value in $\tr$ at the time of the search in $O(1)$ steps. It must
also be used in the batch reverse-indexing operation. All these operations
must be performed sequentially (i.e.~one batch operation at a time).

Note that the sorted batch access requires that the accesses are to
distinct items, but there is no actual disadvantage to that constraint.
Suppose we are given an item-sorted input batch $I$ of $b$ accesses
that may have multiple accesses to be the same item. We can perform
an easy parallel recursion on $I$ to compute which accesses to an
item $x$ are the leftmost access to $x$ in $I$. Then we can recursively
join all the accesses to $x$ into a single batch $B_{x}$ (see \ref{lem:bbt-slice-join}),
store it at the leftmost access to $x$, and compute the effective
result of $B_{x}$ (if they are performed in order), within $O(1)$
work per access and $O(\log b)$ span. After that, we can parallel
filter (\ref{sub:par-filter}) out those leftmost accesses from $I$
to obtain an item-sorted batch $I'$ of the effective accesses, which
are to distinct items, within $O(b)$ work and $O(\log b)$ span (\ref{thm:par-filter-cost}).
We can now perform the usual sorted batch access on $I'$, and perform
one more parallel recursion to tag the original accesses in $I$ with
the appropriate results.

The sorted batch access also requires the input batch to be item-sorted.
But we can clearly also support unsorted batch access within $O(b\cdot\log\max(b,n))$
work and $O\left((\log b)^{2}+\log n\right)$ span, by using parallel
sorting (\ref{def:par-msort}) to sort the input batch. This is more
efficient than unsorted batch search/update when $1\ll b\ll n$. However,
unsorted batch search/update is useful when $b\gg n$, since sorting
the input batch would take more work. In fact, we shall see later
that if we judiciously combine both unsorted batch search/update and
sorted batch access with parallel entropy-sorting (\ref{def:par-esort}),
we can support \textbf{\textit{unsorted batch access}} within $O(b\cdot\log\max(n,n'))$
work and $O\left((\log b)^{2}+\log n\right)$ span where $n'$ is
the number of items in $\tr$ after that batch operation (\ref{sub:P23T-full-access}).

The batch reverse-indexing operation is useful in maintaining synced
instances of $\tr$ with the same items but sorted differently; we
can tag each item with direct pointers into other instances of $\tr$,
and after performing an $<$-sorted batch access on an $<$-sorted
instance $S_{1}$, we can use the obtained batch of direct pointers
into an $\vartriangleleft$-sorted instance $S_{2}$ to obtain the
corresponding $\vartriangleleft$-sorted batch of items, which we
can then use to perform the same accesses on $S_{2}$.

\subsection{Preliminaries \& Notation}

$\tr$ stores the items in a leaf-based 2-3 tree encoded as a leaf-based
red-black tree $T$ (i.e.~every red node has two black child nodes,
and every black node is either a leaf or has two child nodes at most
one of which is red, and the black nodes correspond to the nodes of
the 2-3 tree). From now on we shall drop the adjective ``leaf-based''
since we only use leaf-based trees.

For any 2-3 tree $X$ we shall also denote the children of a node
$v$ of $X$ by $v\a{\textbf{left}}$ and $v\a{\textbf{right}}$ and
$v\a{\textbf{mid}}$ (if it exists), and denote the height of $v$
in $X$ by $v\a{\textbf{height}}$. If $X$ is encoded as a red-black
tree $X'$ and $v$ corresponds to the node $v'$ in $X'$, then $v\a{left}$
would correspond to the first black descendant of $v'$ (and not necessarily
$v'\a{left}$), and likewise for $v\a{right}$, and $v\a{height}$
would be the number of black nodes excluding $v'$ in any path from
$v'$ to a leaf in $X'$. These apparent ambiguities will always be
resolved by the context, which will always specify whether we treat
a node as in a 2-3 tree or in a red-black tree.

For convenience, let $\boldsymbol{X+Y}$ denote the standard join
of 2-3 trees $X,Y$ in that order, and identify a 2-3 tree with its
root node. Also we shall write ``$\boldsymbol{X\sim Y}$'' and ``$\boldsymbol{X\gg Y}$''
as short for ``$X\a{height}=Y\a{height}$'' and ``$X\a{height}>Y\a{height}$''
respectively.

\subsection{Unsorted Batch Search}

\label{sub:P23T-unsorted}

Performing an \textbf{unsorted search/update} on an input batch $I$
of $b$ searches/updates is done by calling $\f{USearch}(T\a{root},I)$
(\ref{def:unsorted-search}). Note that we cannot simply spawn a thread
for each search that traverses $T$ from root to leaf, as it would
incur $\Omega(b)$ span at the root of $T$ in the queued contention
model (\ref{sub:prog-model}).
\begin{defn}[Unsorted Search/Update]
\label{def:unsorted-search}~
\begin{block}
\item \textbf{Private USearch( Node $v$ of BBT $T$ , Access Batch $B$
):}\quad{}// treat $T$ as a BBT

\begin{block}
\item If $B$ is empty, return.
\item If $v$ is a leaf:

\begin{block}
\item In parallel tag each access in $B$ to an item $x$ such that $x=v\a{last}$
with a direct pointer to $v$.
\item Check whether $B$ has an update, and if so execute the leftmost update
in $B$ on $v$.
\item Return.
\end{block}
\item Parallel partition $B$ around pivot $v\a{left}\a{last}$ into a lower
part $L$ and an upper part $R$ (see \ref{sub:par-filter}).
\item In parallel call $\f{USearch}(v\a{left},L)$ and $\f{USearch}(v\a{right},R)$
(and wait for both to finish).
\end{block}
\end{block}
\end{defn}
\begin{thm}[Unsorted Search Costs]
 $\f{USearch}(T\a{root},I)$ takes $O(b\cdot\log n)$ work and $O(\log b\cdot\log n)$
span.\end{thm}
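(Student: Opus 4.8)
The plan is to analyze the recursion tree of $\f{USearch}(T\a{root},I)$ directly, bounding the work and span contributed at each level of the 2-3 tree $T$ (viewed as a BBT). First I would observe that the recursion follows the structure of $T$: each call $\f{USearch}(v,B)$ at an internal node $v$ does a parallel partition of $B$ around the single pivot $v\a{left}\a{last}$ and recurses on the two parts at $v\a{left}$ and $v\a{right}$. Crucially, every item of $I$ travels down exactly one root-to-leaf path of $T$ (the path determined by comparisons), so at each fixed height $h$ of $T$, the sub-batches handed to the nodes at height $h$ partition (a subset of) $I$, and hence have total size at most $b$. Since $T$ is a balanced 2-3 tree on $n$ items, $T\a{height}=O(\log n)$.

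For the work bound, I would use \nameref{thm:par-filter-cost} (\ref{thm:par-filter-cost}) — partitioning a batch of size $k$ around a single pivot is a parallel filter with an $O(1)$-time condition, costing $O(k)$ work (and also it is just a two-way special case of \ref{sub:par-part}, though the filter bound is cleaner here). At a node $v$ receiving a sub-batch of size $k_v$, the call costs $O(k_v+1)$ work, plus the leaf-tagging cost which is $O(k_v)$ at leaves. Summing over all nodes at a fixed height $h$ gives $O(b + |H(T,h)|)$; but $|H(T,h)| \le n/2^h$ — actually more carefully each node at height $h$ that receives a nonempty batch contributes, and there are at most $b$ such nodes, so the per-level work is $O(b)$. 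Over $O(\log n)$ levels this is $O(b\cdot\log n)$. (One must handle the empty-batch calls, but a call with $B$ empty returns immediately in $O(1)$, and the number of such calls that actually get made is bounded by the number of nonempty sibling calls, hence $O(b)$ per level as well.)

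For the span bound, I would argue that along any single root-to-leaf path of the recursion, each call adds $O(\log(\text{current batch size})) \le O(\log b)$ span for the partition (by \ref{thm:par-filter-cost}, since parallel filtering a batch of size $k$ takes $O(\log k)$ span), plus $O(\log b)$ at a leaf for the recursive tagging. The recursion depth is $T\a{height}=O(\log n)$, and the two recursive calls run in parallel, so the total span is the maximum over root-to-leaf paths of the sum of per-node spans, giving $O(\log b \cdot \log n)$.

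The main obstacle I anticipate is the bookkeeping around empty sub-batches and the leaf level: I need to make sure that the $O(1)$ overhead per recursive call (including the many calls that receive empty batches) does not blow up the work beyond $O(b\log n)$, and that the leaf-tagging step — which does a recursive traversal of a batch $B$ costing $O(|B|)$ work and $O(\log|B|)$ span — is correctly folded into the per-level accounting. This is handled by noting that every internal recursive call splits into exactly two children, so the total number of calls is at most twice the number of calls receiving a nonempty batch plus $O(1)$; and a call receives a nonempty batch at height $h$ only if it lies on the downward path of some item, so there are at most $\min(b, |H(T,h)|)$ of them per level, i.e. at most $b$. Hence the total number of calls is $O(b\log n)$, absorbing the $O(1)$ overheads, and we are done.
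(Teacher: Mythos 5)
Your proof is correct and follows essentially the same route as the paper's: bound each internal call $\f{USearch}(v,B)$ by $O(B\a{size})$ work and $O(\log b)$ span for the partition, note that the sub-batches at each level of $T$ have total size at most $b$, and multiply by the $O(\log n)$ depth. Your extra bookkeeping about empty sub-batches and leaf-tagging is sound and merely spells out what the paper dismisses with ``clearly we can ignore any call where $B$ is empty.''
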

\begin{proof}
Clearly we can ignore any call $\f{USearch}(v,B)$ where $B$ is empty.
Each call $\f{USearch}(v,B)$ with non-empty $B$ at an internal node
$v$ of $T$ takes $O(B\a{size})$ work and $O(\log b)$ span to partition
$B$ into $L$ and $R$ (\ref{thm:par-filter-cost}). Hence the entire
unsorted batch search takes $O(b\cdot\log n)$ work and $O(\log b\cdot\log n)$
span.
\end{proof}

\subsection{Sorted Batch Access}

\label{sub:P23T-normal}

Performing a \textbf{sorted access} on an item-sorted input batch
$I$ of accesses to distinct items is done in 3 phases:
\begin{enumerate}
\item \textbf{\uline{Splitting phase}}\textbf{:} Split the items in $T$
(treated as a BBT) around the items in $I\less I\a{last}$, using
\nameref{sub:par-part} (\ref{sub:par-part}) but without inserting
$\infty$ and without collating. Then every item $x$ in $T$ will
end up in some subtree of $T$ at a leaf $v$ of $I$ such that if
$x$ exists in $I$ then $x$ is at $v$.
\item \textbf{\uline{Execution phase}}\textbf{:} At each leaf $v$ of
$I$, join the subtrees of $T$ at $v$ into a single 2-3 tree (see
\ref{thm:23tree-slice-join}), and execute the access at $v$ is on
that 2-3 tree.
\item \textbf{\uline{Joining phase}}\textbf{:} Recursively join the 2-3
trees at the leaves of $I$ via a \textbf{\textit{pipelined joining
scheme}} that pushes those 2-3 trees down each other. Here is a high-level
overview and explanation of the algorithm:

\begin{enumerate}
\item Define the \textbf{spine structure} (\ref{def:spine-struct}) of a
non-root spine node $v$ (i.e.~along the leftmost/rightmost path)
of a 2-3 tree $X$ as the $v\a{height}$-bit binary number where the
$k$-th (most significant) bit is $1$ if the $k$-th spine node from
$v$ downwards (along the spine) has $3$ children, and is $0$ otherwise.
Then given any 2-3 trees and their left/right children's spine structures,
we can within $O(1)$ steps determine whether the join overflows (i.e.~is
taller than the original trees) and compute the left/right children's
spine structures for the join (\ref{thm:join-spine-struct}, and see
\ref{fig:join-spine} below).
\item Augmenting every 2-3 tree with spine structure (i.e.~the spine structure
of every non-root spine node $v$ is stored in $v\a{spine}$) allows
us to join any 2-3 tree $Y$ into $X$ \textbf{\textit{top-down}},
if $X\sim Y$ or $X\gg Y$, where we view the joining as pushing $Y$
down the spine of $X$, and at each node $v$ we perform a \textbf{\textit{local
adjustment}} that has the desired effect, based on $Y$ and $v\a{left},v\a{mid},v\a{right}$
and $v\a{left}\a{spine},v\a{right}\a{spine}$ alone. Specifically,
we immediately update $v\a{left},v\a{mid},v\a{right}$ and $v\a{left}\a{spine},v\a{right}\a{spine}$
to their final values after the join. This includes when $c+Y$ overflows
where $c$ is the next node along the way, in which case we also create
a blank child $w$ of $v$ and tag $Y$ with $w$, so that at $c$
we can move the overflowed subtrees to $w$ without having to access
$v$ again. (See \ref{tab:23tree-join} below for all needed local
adjustments.)
\item Observe that each local adjustment in the above top-down joining procedure
is \textbf{\textit{independent}} from other local adjustments made
at any other node in the 2-3 tree (\ref{lem:joinin-run-ind}), and
hence multiple join operations can be \textbf{\textit{pipelined}}
(\ref{lem:23tree-joined}), as long as the local adjustments at each
node remain in the same order.
\item This order constraint is easily achieved by using a dedicated queue
$v\a{queue}$ at each spine node $v$ of $X$ to maintain the 2-3
trees currently at $v$, and using a procedure $v\a{joinin}$ that
is run only via reactivation calls (see \ref{sub:prog-model}) to
process each 2-3 tree $X$ in $v\a{queue}$ one by one and perform
the appropriate local adjustment at $v$ before pushing $X$ down
to a child of $v$ if appropriate. To push a 2-3 tree down to a node
$w$, we push it onto (the back of) $w\a{queue}$ and then reactivate
$w\a{joinin}$. $v\a{joinin}$ also reactivates itself after it has
processed each 2-3 tree from $v\a{queue}$.
\item Putting everything above together: We just need to prepare each 2-3
tree $X$ at a leaf of $I$ by augmenting it with spine structure,
and then at each internal node of $I$ recursively compute the join
of the 2-3 trees computed by its children, pipelined in the above
manner. Then the root of $I$ would effectively compute the join of
all the 2-3 trees at the leaves of $I$, in the sense that its final
state after all queued trees have been processed is the desired join
(\ref{thm:23tree-join-correct}).
\item So at the end we just have to wait for all queued trees to be processed,
which can be done by waiting on a barrier $v\a{done}$ at every internal
node of $I$ (see \ref{sub:prog-model}), where $v\a{done}$ is notified
when the corresponding joining has finished. If that joining was of
$Y$ into $X$, it finishes after the local adjustment that makes
$Y$ a subtree of the resulting 2-3 tree, so we just have to tag $Y$
with $v\a{done}$ so that the local adjustment that finishes that
joining can notify $v\a{done}$.
\end{enumerate}
\end{enumerate}
It turns out that the same techniques used in the proof of the \nameref{thm:par-part-cost}
(\ref{thm:par-part-cost}) can be used to prove the desired work and
span bounds for the sorted batch access (\ref{thm:split-exec-guarantees},
\ref{thm:joining-work}, \ref{thm:joining-span}).

\begin{table}[H]
\begin{centerbox}
\begin{tabular}{|>{\raggedright}p{12em}|>{\raggedright}p{14em}|l|}
\hline 
\textbf{Operation} & \textbf{Case} & \textbf{Local Adjustment}\tabularnewline
\hline 
Join 2-3 trees $L$ and $R$

(where $L\sim R$ or $L\gg R$) & $L\sim R$ & \aligntop{\includegraphics[bb=0bp 0bp 79bp 37bp,clip]{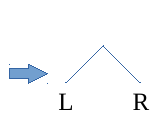}}\tabularnewline
\cline{2-3} 
 & $L\a{right}\sim R$  & \hspace{-1em}\aligntop{\includegraphics[bb=0bp 0bp 133bp 49bp,clip]{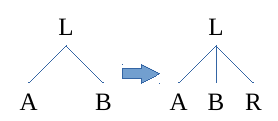}}\tabularnewline
\cline{3-3} 
 &  & \hspace{-1em}\aligntop{\includegraphics[bb=0bp 0bp 151bp 56bp,clip]{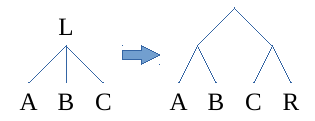}}\tabularnewline
\cline{2-3} 
 & $L\a{right}\gg R$

and $L\a{right}+R$ overflows & \hspace{-1em}\aligntop{\includegraphics[bb=0bp 0bp 194bp 49bp,clip]{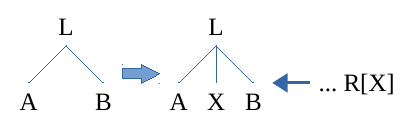}}\tabularnewline
\cline{3-3} 
 &  & \hspace{-1em}\aligntop{\includegraphics[bb=0bp 0bp 216bp 56bp,clip]{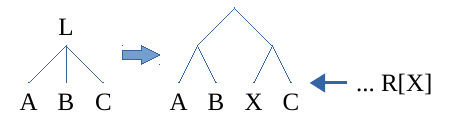}}\tabularnewline
\cline{2-3} 
 & $L\a{right}\gg R$

and $L\a{right}+R$ does not overflow & \aligntop{\includegraphics[bb=0bp 0bp 126bp 49bp,clip]{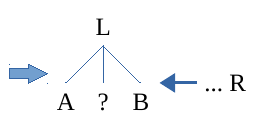}}\tabularnewline
\hline 
Join 2-3 tree $R$ to the right of 2-3 subtree $L$ (where $L\gg R$) & $L\a{right}\sim R$  & \hspace{-1em}\aligntop{\includegraphics[bb=0bp 0bp 155bp 49bp,clip]{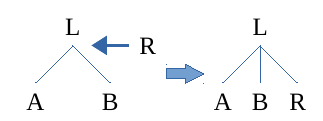}}\tabularnewline
\cline{3-3} 
 &  & \hspace{-1em}\aligntop{\includegraphics[bb=0bp 0bp 180bp 49bp,clip]{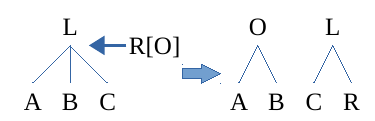}}\tabularnewline
\cline{2-3} 
 & $L\a{right}\gg R$

and $L\a{right}+R$ overflows & \hspace{-1em}\aligntop{\includegraphics[bb=0bp 0bp 216bp 49bp,clip]{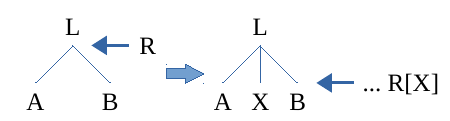}}\tabularnewline
\cline{3-3} 
 &  & \hspace{-1em}\aligntop{\includegraphics[bb=0bp 0bp 245bp 49bp,clip]{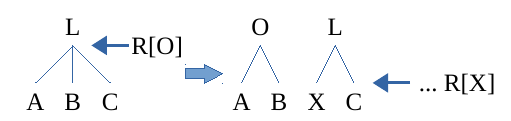}}\tabularnewline
\cline{2-3} 
 & $L\a{right}\gg R$

and $L\a{right}+R$ does not overflow & \hspace{-1em}\aligntop{\includegraphics[bb=0bp 0bp 202bp 49bp,clip]{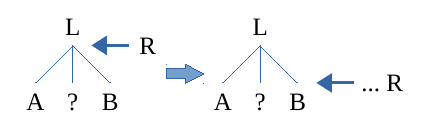}}\tabularnewline
\hline 
\end{tabular}

~

\textit{\small{}``$L\leftarrow R_{1}\ R_{2}\ ...$'' denotes that
$R_{1},R_{2},...$ are to be joined to the right of $L$ in that order.}\\
\textit{\small{}``?'' denotes that we do not care whether the (middle)
subtree exists (and we leave it as it is). }\\
\textit{\small{}``$X$'' denotes a newly created blank node, and
``$R[X]$'' denotes that $R$ is tagged with $X$.}{\small \par}
\end{centerbox}
\caption{\label{tab:23tree-join} Local adjustments for 2-3 tree joining on
the right (it is symmetrical on the left)}

\begin{centerbox}
~\end{centerbox}
\end{table}

\clearpage{}

We shall now fill in the technical details. First is the precise definition
of spine structure and the proof that it can be easily computed for
the result of any join without actually performing the join.
\begin{defn}[2-3 Tree Spine Structure]
\label{def:spine-struct} Take any 2-3 tree $X$. The \textbf{right
spine structure} of a node $v$ of $X$ is denoted by $\f{\textbf{rspine}}(v)$
and defined as $\sum_{k=1}^{v\a{height}}(c_{k}-2)\cdot2^{k-1}$ where
$c_{k}$ is the number of children of the node on the right spine
of the subtree at $v$ with distance $k$ from the leaf. Symmetrically
for the \textbf{left spine structure} of $v$ denoted by $\f{\textbf{lspine}}(v)$.
The \textbf{spine structure} of a non-root node $v$ of $X$ is denoted
by $\f{\textbf{spine}}(v)$ and defined as $\f{rspine}(v)$ if $v$
is a right child and $\f{lspine}(v)$ if $v$ is a left child. Note
that $\f{rspine}(v)$ is $0$ if $v$ is a leaf and is $\f{spine}(v\a{right})+(c-2)\cdot v\a{right}\a{weight}$
otherwise, where $c$ is the number of children of $v$ and $v\a{\textbf{weight}}=2^{v\a{height}}$,
and symmetrically for $\f{lspine}(v)$. Moreover, if $v$ is a right
child, then $\f{spine}(v)=\f{rspine}(v\a{parent})\%v\a{weight}$,
and symmetrically for a left child.
\end{defn}

\begin{thm}[2-3 Tree Join Spine Structure]
\label{thm:join-spine-struct} Take any 2-3 trees $X,Y$. Given $X\a{weight}$,
$\f{lspine}(X),\f{rspine}(X)$, $Y\a{weight}$, $\f{lspine}(Y),\f{rspine}(Y)$,
within $O(1)$ steps we can determine whether the join $J:=X+Y$ overflows
(i.e.~$J\gg X,Y$) and compute $\f{lspine}(J),\f{rspine}(J)$.\end{thm}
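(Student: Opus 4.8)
The plan is to argue by cases on the relative heights of $X$ and $Y$, using the recursive formulas for $\f{rspine}$ and $\f{lspine}$ from \ref{def:spine-struct} to track exactly how the join changes the spine along the seam. By symmetry assume $X\sim Y$ or $X\gg Y$ (otherwise swap roles and reflect left/right). The starting observation is that the left spine of $J=X+Y$ is essentially the left spine of $X$ and the right spine of $J$ is essentially the right spine of $Y$, except that the join may bump the height by one (``overflow'') and may alter the child counts of a bounded number of nodes near the point where $Y$ is attached on the right spine of $X$. Concretely, if $X\sim Y$ then $J$ is a new node with children the (at most three) top-level pieces, so $J$ overflows exactly when $X$ and $Y$ together would force a $4$-child root, which is detectable from whether the roots of $X$ and $Y$ each have $3$ children --- i.e.\ from the top bits of $\f{rspine}(X)$ and $\f{lspine}(Y)$, equivalently from $\f{rspine}(X)\ge X\a{weight}$-type comparisons --- and in that case $\f{lspine}(J)$ and $\f{rspine}(J)$ are obtained by prepending the appropriate new top bit to (shifted copies of) $\f{lspine}(X)$ and $\f{rspine}(Y)$, all of which is $O(1)$ arithmetic on the given integers.

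For the case $X\gg Y$, I would mirror the standard recursive join: walk down the right spine of $X$ to the node $v$ with $v\a{right}\a{height}=Y\a{height}$ (or the first spine node at height $\le Y\a{height}$), attach $Y$ there, and propagate a possible overflow back up the right spine of $X$. The key point is that this propagation is governed entirely by the pattern of $3$-child vs.\ $2$-child nodes along the right spine of $X$ below $v$ --- which is precisely encoded by the low-order bits of $\f{rspine}(X)$ --- together with whether $v\a{right}+Y$ itself overflows, which in turn is the $X\gg Y$ instance of the two-tree overflow question restricted to a subtree of known weight, again readable off $\f{rspine}(X)$ and $\f{lspine}(Y)$. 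The overflow ``carries'' up the right spine of $X$ until it hits the first $2$-child spine node, flipping a run of $1$-bits to $0$ and a single $0$-bit to $1$ --- a plain binary increment on a suffix of $\f{rspine}(X)$ --- or, if it carries all the way through, producing a genuine height increase in $J$. From these observations one reads off: $J$ overflows iff $X\sim Y$ with the incompatible top configuration, or $X\gg Y$ and the carry propagates past the root of $X$ (the all-ones suffix case), both $O(1)$ tests. Then $\f{lspine}(J)$ equals $\f{lspine}(X)$ (possibly with one prepended bit if $J$ overflows, determined by whether $X\a{root}$ had $3$ children), $\f{rspine}(J)$ equals $\f{rspine}(Y)$ below height $Y\a{height}$ spliced with the (incremented) top part of $\f{rspine}(X)$ above it --- assembled with a constant number of shifts ($\cdot\,Y\a{weight}$), additions, and a modulo.

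The main obstacle I expect is bookkeeping, not mathematics: getting the bit-index conventions exactly right (the spine structure of $v$ records heights $1$ through $v\a{height}$, and $\f{spine}(v\a{right})=\f{rspine}(v\a{parent})\%v\a{weight}$ shifts indices), and correctly identifying which nodes' child counts actually change. In particular I need to be careful that when $v\a{right}+Y$ does \emph{not} overflow, the only change on the right spine of $X$ is to $v\a{right}$'s own subtree (its right spine becomes that of $Y$ below, plus a possibly-adjusted top), so $\f{rspine}(J)$ is $\f{rspine}(X)$ with the bottom $Y\a{height}+1$ bits replaced; whereas when it does overflow, a carry runs upward and I must show it touches only spine nodes, never spilling sideways into subtrees that would need recomputation beyond $O(1)$ data. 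This is exactly why the conclusion is ``$O(1)$ steps'': every case reduces the spine-structure update to a fixed-length sequence of integer operations (comparisons against weights, multiplications by $X\a{weight}$ or $Y\a{weight}$, additions, one modulo) on the six given inputs, with the case analysis itself being a bounded decision tree. I would present the $X\sim Y$ case in full as the base case, then the $X\gg Y$ case by reduction to the recursive join together with the binary-increment description of carry propagation, and finally collect the explicit formulas for $\f{lspine}(J)$ and $\f{rspine}(J)$ in each branch.
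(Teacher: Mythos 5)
Your core idea for the $X\gg Y$ case --- attaching $Y$ as a sibling at height $Y\a{height}$ on the right spine adds $Y\a{weight}$ to $\f{rspine}(X)$, node splits are exactly binary carries, and the join overflows iff the carry escapes past the root, i.e.\ iff $\f{rspine}(X)+Y\a{weight}\ge X\a{weight}$ --- is precisely the paper's argument, and your description of $\f{rspine}(J)$ as $\f{rspine}(Y)$ spliced under the incremented top part of $\f{rspine}(X)$ matches the paper's formula $\f{rspine}(J)=\left(\f{rspine}(X)-\f{rspine}(X)\%Y\a{weight}+Y\a{weight}+\f{rspine}(Y)\right)\%X\a{weight}$. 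However, two of your stated conclusions are wrong. First, the $X\sim Y$ case is not conditional: joining two 2-3 trees of equal height always creates a new $2$-child root and hence always overflows (a single merged root would need $4$ to $6$ children), and then $\f{lspine}(J)=\f{lspine}(X)$ and $\f{rspine}(J)=\f{rspine}(Y)$ with no shifting, since the new top bit is $0$ and the lower bits keep their positions. Your ``overflows exactly when $X$ and $Y$ together would force a $4$-child root'' reads as if a $2$-child-plus-$2$-child configuration could be absorbed without a height increase, which it cannot.

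Second, your formula ``$\f{lspine}(J)$ equals $\f{lspine}(X)$, possibly with one prepended bit if $J$ overflows'' fails when the carry reaches the root of $X$ without overflowing it: for instance, if $X$'s root has $2$ children and $X\a{right}\sim Y$, the root gains a third child, so the top bit of the left spine structure flips from $0$ to $1$ and $\f{lspine}(J)=\f{lspine}(X)+X\a{left}\a{weight}\ne\f{lspine}(X)$. The paper avoids this by reading the root's final child count off the already-computed $\f{rspine}(J)$, via $\f{lspine}(J)=\f{spine}(X\a{left})+\left(\f{rspine}(J)-\f{rspine}(J)\%J\a{right}\a{weight}\right)$, which covers both the no-overflow case (the extracted top term is the root's new child count minus $2$, times $J\a{right}\a{weight}$) and the overflow case (where that term is $0$ because the new root has $2$ children). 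With these two corrections your plan goes through; the remaining work is indeed the index bookkeeping you anticipated.
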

\begin{proof}
If $X\sim Y$, then $X+Y$ overflows, and $\f{lspine}(J)=\f{lspine}(X)$
and $\f{rspine}(J)=\f{rspine}(Y)$, so we are done. So by symmetry
we can assume $X\gg Y$. Clearly $X+Y$ overflows iff $\f{rspine}(X)+Y\a{weight}\ge X\a{weight}$.
During the joining of $Y$ to $X$, let $J'$ be the tree just after
adding $Y$ to $X$ as a right sibling of the right spine node $v$
of $X$ such that $v\sim Y$, and let $r$ be the root. Then $\f{rspine}(J')=\f{rspine}(X)-\f{rspine}(X)\%Y\a{weight}+Y\a{weight}+\f{rspine}(Y)$.
After that, whenever a non-root $4$-child node $w$ is split into
two $2$-child siblings, $\f{rspine}(r)$ is unchanged, since $w\a{parent}$
gains $1$ child while its rightmost child loses $2$ children, and
$2_{2}=10_{2}$. If $r$ is split, $\f{lspine}(r)$ decreases by $X\a{weight}$.
Thus $\f{rspine}(J)=\f{rspine}(J')\%X\a{weight}$. Finally, we can
check that $\f{lspine}(J)=\f{spine}(X\a{left})+(\f{rspine}(J)-\f{rspine}(J)\%J\a{right}\a{weight})$.
(See \ref{fig:join-spine} for an illustrated example.)
\end{proof}

\begin{figure}[H]
\begin{roundedboxinfloat}
\begin{centerbox}~

\includegraphics[bb=0bp 3bp 540bp 293bp,clip]{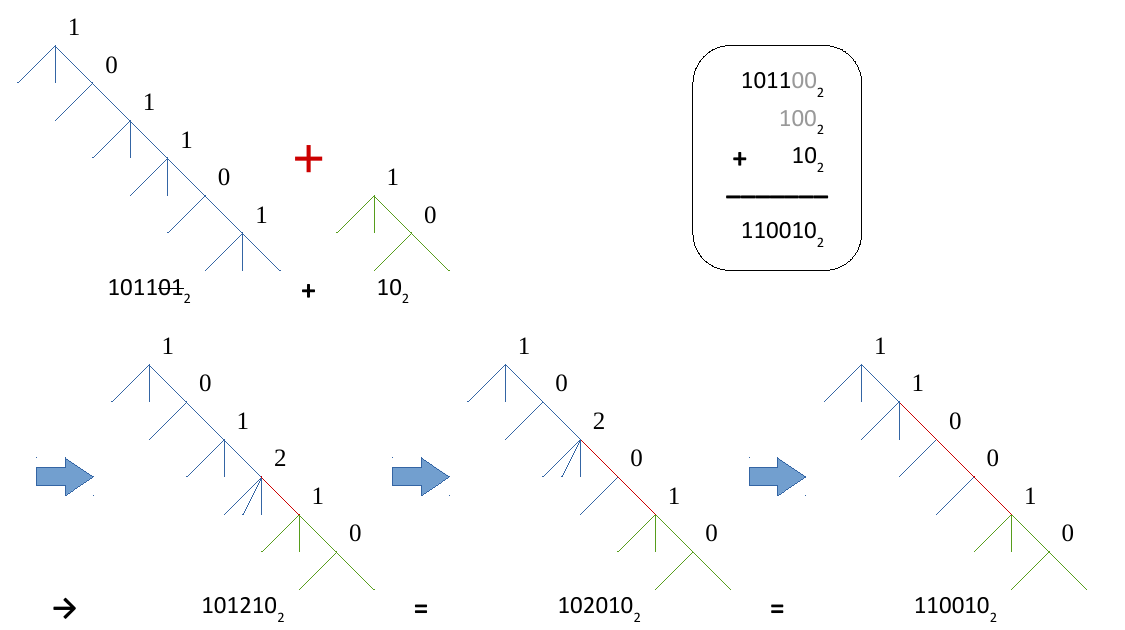}
\end{centerbox}
\caption{\label{fig:join-spine}Example computation of $\protect\f{rspine}(X+Y)$
for $X\gg Y$ within $O(1)$ steps given $X\protect\a{weight}$, $\protect\f{rspine}(X)$,
$Y\protect\a{weight}$ and $\protect\f{rspine}(Y)$. Here $X\protect\a{weight}={\color{magenta}1000000}_{2}$,
$Y\protect\a{weight}={\color{red}100}_{2}$, $\protect\f{rspine}(X)={\color{blue}1011}01_{2}$
and $\protect\f{rspine}(Y)={\color{green}10}_{2}$, so $X+Y$ does
not overflow since ${\color{blue}1011}01_{2}+{\color{red}100}_{2}<{\color{magenta}1000000}_{2}$,
and $\protect\f{rspine}(X+Y)={\color{blue}1011}00_{2}+{\color{red}100}_{2}+{\color{green}10}_{2}=110010_{2}$.}
\end{roundedboxinfloat}
\end{figure}

\clearpage{}

Next is the algorithm for executing an input batch $I$ of $b$ accesses,
which is done by calling $\f{Execute}(I)$ (\ref{def:sorted-access}).
\begin{defn}[Sorted Access]
\label{def:sorted-access}~
\begin{block}
\item Define \textbf{feeding} $B$ to $v\a{queue}[i]$ to be pushing $B$
onto $v\a{queue}[i]$ and then reactivating $v\a{pushdown}[i]$.
\item Define \textbf{closing} $v$ to be calling $v\a{fed}\a{notify}()$
and then reactivating both $v\a{pushdown}[1]$ and $v\a{pushdown}[2]$.
\item \textbf{Public Execute( Item-Sorted Access Batch $I$ ):}

\begin{block}
\item If $I$ is empty, return.
\item // Partition $T$ around $I\less I\a{last}$ by pushing $T$ down
$I$ //
\item For each node $v$ of $I$ in parallel:

\begin{itemize}
\item Create DedicatedQueues $v\a{queue}[1..2]$.
\item Create Procedure $v\a{pushdown}[i]$, which runs by calling $\f{PushDown}[i](v)$,
for each $i\in[1..2]$.
\item Initialize Bool $v\a{qclear}[i]:=false$ for each $i\in[1..2]$.
\item Initialize Bool $v\a{frozen}:=false$.
\item Initialize Pointer $v\a{split}:=null$.
\item Create Barrier $v\a{fed}$.\quad{}// see \ref{def:barrier}
\item Create Barrier $v\a{done}$.
\end{itemize}
\item Feed $T\a{root}$ to $I\a{root}\a{queue}[1]$ and then close $I\a{root}$.
\item // Rejoin the parts of $T$ after executing each access in $I$ on
the correct part //
\item Set $T:=\f{Collate}(I\a{root})$ and call $\f{Finalize}(I\a{root})$.
\end{block}
\item \textbf{Private PushDown$[i]$( Node $v$ of BBT $I$ ):}

\begin{block}
\item If $v$ is a leaf, return.
\item Set $done:=v\a{fed}\a{notified}()$.
\item Pop subtree $B$ off $v\a{queue}[i]$.\quad{}// treat $B$ as a BBT
\item If $B=null$ (i.e.~$v\a{queue}[i]$ was empty):

\begin{block}
\item If $done$:\quad{}// if no more subtrees will be fed to $v$ before
$v\a{queue[i]}$ is found empty //

\begin{block}
\item Set $v\a{qclear}[i]:=true$.
\item If $v\a{qclear}[3-i]$ and $\f{TryLock}(v\a{frozen})$:

\begin{block}
\item If $v\a{split}\ne null$, call $v\a{split}\a{wait}()$.
\item Close $v\a{left}$ and close $v\a{right}$.
\end{block}
\end{block}
\item Return.
\end{block}
\item Reactivate $v\a{pushdown}[i]$.
\item If $B\a{last}\le v\a{left}\a{last}$, feed $B$ to $v\a{left}\a{queue}[i]$
and return.
\item If $B\a{first}>v\a{left}\a{last}$, feed $B$ to $v\a{right}\a{queue}[i]$
and return.
\item Set $v\a{split}:=\new{Barrier}$.\quad{}// see  \ref{def:barrier}
\item Fork the following \textbf{splitting process}:

\begin{block}
\item While $B$ is not a leaf:

\begin{block}
\item If $B\a{left}\a{last}\le v\a{left}\a{last}$:

\begin{block}
\item Feed $B\a{left}$ to $v\a{left}\a{queue}[2]$ and set $B:=B\a{right}$.
\end{block}
\item Otherwise:

\begin{block}
\item Feed $B\a{right}$ to $v\a{right}\a{queue}[1]$ and set $B:=B\a{left}$.
\end{block}
\end{block}
\item If $B\a{last}\le v\a{left}\a{last}$, feed $B$ to $v\a{left}\a{queue}[2]$,
otherwise feed $B$ to $v\a{right}\a{queue}[1]$.
\item Call $v\a{split}\a{notify}()$.
\end{block}
\end{block}
\item \br \textbf{Private Collate( Node $v$ of BBT $I$ ):}

\begin{block}
\item If $v$ is a leaf:

\begin{block}
\item Call $v\a{fed}\a{wait}()$.
\item // Join all 2-3 trees at $v$ and execute the access at $v$ //
\item Convert each subtree in each queue $v\a{queue}[i]$ to a 2-3 tree
(i.e.~make its root node black).
\item Create empty 2-3 trees $L,R$.\quad{}// An empty 2-3 tree is stored
as $null$.
\item For each 2-3 tree $Q$ in $v\a{queue}[1]$ in reverse order, sequentially
join $Q$ onto the right of $L$.
\item For each 2-3 tree $Q$ in $v\a{queue}[2]$ in reverse order, sequentially
join $Q$ onto the left of $R$.
\item Sequentially join $L,R$ to obtain 2-3 tree $X$.
\item Execute the access at $v$ on $X$.
\item // Prepare the resulting 2-3 tree for pipelined joining //\quad{}
see \ref{def:23tree-join}
\item Call $\f{InitLeft}(w)$ for each non-root left spine node $w$ of
$X$ in order from leaf to root.
\item Call $\f{InitRight}(w)$ for each non-root right spine node $w$ of
$X$ in order from leaf to root.
\item Return $X$.
\end{block}
\item // Return the join of the results from the child nodes of $v$ //
\item In parallel set $L:=\f{Collate}(v\a{left})$ and $R:=\f{Collate}(v\a{right})$
(and wait for both to finish).
\item Return $\f{Join}(L,R,v\a{done})$.\quad{}// see \ref{def:23tree-join}
\end{block}
\item \textbf{Private Finalize( Node $v$ of BBT $I$ ):}

\begin{block}
\item If $v$ is a leaf, return.
\item In parallel call $\f{Finalize}(v\a{left})$ and $\f{Finalize}(v\a{right})$
(and wait for both to finish).
\item Call $v\a{done}\a{wait}()$.
\end{block}
\end{block}
\end{defn}
The subsequent procedures involve \textbf{23Trees}, where each 23Tree
$v$ is a 2-3 tree node in a weak sense; $v$ is either a leaf node
or an internal node with $2$ or $3$ child nodes that are 23Trees,
and $v\a{height}=w\a{height}+1$ for each child $w$ of $v$. 
\begin{defn}[23Tree Joining]
\label{def:23tree-join}~
\begin{block}
\item \textbf{Private LSpine( 23Tree $X$ ):}\quad{}// see \ref{def:spine-struct}

\begin{block}
\item Return $\cond{X\a{height}=0}0{X\a{left}\a{spine}+\cond{X\a{mid}=null}0{X\a{left}\a{weight}}}$.
\end{block}
\item \textbf{Private RSpine( 23Tree $X$ ):}\quad{}// see \ref{def:spine-struct}

\begin{block}
\item Return $\cond{X\a{height}=0}0{X\a{right}\a{spine}+\cond{X\a{mid}=null}0{X\a{right}\a{weight}}}$.
\end{block}
\item \textbf{Private InitLeft( 23Tree $X$ ):}\quad{}// prepares 2-3 tree
$X$ for joining as a left child

\begin{block}
\item Create Int $X\a{spine}:=\f{LSpine}(X)$.
\item Create DedicatedQueue $X\a{queue}$.
\item Create Procedure $X\a{joinin}:=\f{JoinLeft}(X)$.
\end{block}
\item \textbf{Private InitRight( 23Tree $X$ ):}\quad{}// prepares 2-3
tree $X$ for joining as a right child

\begin{block}
\item Create Int $X\a{spine}:=\f{RSpine}(X)$.
\item Create DedicatedQueue $X\a{queue}$.
\item Create Procedure $X\a{joinin}:=\f{JoinRight}(X)$.
\end{block}
\item \textbf{Private SJoin( 23Tree $L$ , 23Tree $R$ ):}\quad{}// returns
join of strict 2-3 subtrees $L,R$ with equal height

\begin{block}
\item Return new 23Tree $J$ with $J\a{left}:=L$ and $J\a{mid}:=null$
and $J\a{right}:=R$.
\end{block}
\item \textbf{Private RJoin( 23Tree $L$ , 23Tree $R$ ):}\quad{}// returns
join of root 2-3 trees $L,R$ with equal height

\begin{block}
\item Call $\f{InitLeft}(L)$ and $\f{InitRight}(R)$, and then return $\f{SJoin}(L,R)$.
\end{block}
\item \textbf{Private FeedLeft( 23Tree $X$ , 23Tree Node $v$ ):}\quad{}//
pushes $X$ to left child $v$ and updates the spine structure of
$v$ 

\begin{block}
\item Push $X$ onto $v\a{queue}$ and reactivate $v\a{joinin}$.
\item Set $v\a{spine}:=(v\a{spine}-v\a{spine}\%X\a{weight}+X\a{weight})\%v\a{weight}+\f{LSpine}(X)$.\quad{}//
see \ref{thm:join-spine-struct}
\end{block}
\item \textbf{Private FeedRight( 23Tree $X$ , 23Tree Node $v$ ):\quad{}}//
pushes $X$ to right child $v$ and updates the spine structure of
$v$

\begin{block}
\item Push $X$ onto $v\a{queue}$ and reactivate $v\a{joinin}$.
\item Set $v\a{spine}:=(v\a{spine}-v\a{spine}\%X\a{weight}+X\a{weight})\%v\a{weight}+\f{RSpine}(X)$.\quad{}//
see \ref{thm:join-spine-struct}
\end{block}
\item \br \textbf{Private Join( 23Tree $L$ , 23Tree $R$ , Barrier $done$
):}\quad{}// returns join of root 2-3 trees $L,R$; see \ref{tab:23tree-join}

\begin{block}
\item If $L=null$, call $done\a{notify}()$ and return $R$.
\item If $R=null$, call $done\a{notify}()$ and return $L$.
\item If $L\sim R$:

\begin{block}
\item Call $done\a{notify}()$ and return $\f{RJoin}(L,R)$.\hfill{}\overmid{\includegraphics{\string"P23T Diagrams/join1\string".pdf}}
\end{block}
\item If $L\gg R$:

\begin{block}
\item Call $\f{InitRight}(R)$.
\item If $L\a{right}\sim R$:

\begin{block}
\item Call $done\a{notify}()$.
\item If $L\a{mid}=null$:

\begin{block}
\item Set $L\a{mid}:=L\a{right}$ and $L\a{right}:=R$.\hfill{}\overmid{\includegraphics{\string"P23T Diagrams/join2\string".pdf}}
\item Return $L$.
\end{block}
\item Otherwise:

\begin{block}
\item Return $\f{RJoin}(\f{SJoin}(L\a{left},L\a{mid}),\f{SJoin}(L\a{right},R))$.\hfill{}\overmid{\includegraphics{\string"P23T Diagrams/join3\string".pdf}}
\end{block}
\end{block}
\item Otherwise:

\begin{block}
\item Set $R\a{joined}:=done$.
\item Create Pointer $R\a{overflow}:=null$.
\item If $L\a{right}\a{spine}+R\a{weight}\ge L\a{right}\a{weight}$:\quad{}//
$L\a{right}+R$ overflows; see \ref{thm:join-spine-struct}

\begin{block}
\item Create blank 23Tree (node) $X$ with same height as $L\a{right}$.
\item Set $R\a{overflow}:=X$.
\item $\f{FeedRight}(R,L\a{right})$.
\item If $L\a{mid}=null$:

\begin{block}
\item Set $L\a{mid}:=X$.\hfill{}\overmid{\includegraphics{\string"P23T Diagrams/join5\string".pdf}}
\item Return $L$.
\end{block}
\item Otherwise:

\begin{block}
\item Return $\f{RJoin}(\f{SJoin}(L\a{left},L\a{mid}),\f{SJoin}(X,L\a{right}))$.\hfill{}\overmid{\includegraphics{\string"P23T Diagrams/join4\string".pdf}}
\end{block}
\end{block}
\item Otherwise:

\begin{block}
\item $\f{FeedRight}(R,L\a{right})$.
\item Return $L$.\hfill{}\overmid{\includegraphics{\string"P23T Diagrams/join6\string".pdf}}
\end{block}
\end{block}
\end{block}
\item Symmetrically if $L\ll R$.
\end{block}
\item \textbf{Private JoinRight( 23Tree Node $L$ )} returns the Procedure
that runs as follows:

\begin{block}
\item Pop (the first) 23Tree $R$ off $L\a{queue}$.\hfill{}\textit{\textcolor{blue}{``$v\leftarrow...\ R$''
denotes that $R$ is pushed onto $v\a{queue}$}}
\item If $R=null$, return.\hfill{}\textit{\textcolor{blue}{``$R[X]$''
denotes that $R\a{overflow}=X$}}
\item If $L\a{right}\sim R$:

\begin{block}
\item If $L\a{mid}=null$:

\begin{block}
\item Set $L\a{mid}:=L\a{right}$ and $L\a{right}:=R$.\hfill{}\overmid{\includegraphics{\string"P23T Diagrams/joinr1\string".pdf}}
\end{block}
\item Otherwise:

\begin{block}
\item Copy (root of) $\f{SJoin}(L\a{left},L\a{mid})$ to $R\a{overflow}$.
\item Set $L\a{left}:=L\a{right}$ and $L\a{mid}:=null$ and $L\a{right}:=R$.\hfill{}\overmid{\includegraphics{\string"P23T Diagrams/joinr2\string".pdf}}
\end{block}
\item Call $R\a{joined}\a{notify}()$.
\end{block}
\item Otherwise:

\begin{block}
\item If $L\a{right}\a{spine}+R\a{weight}\ge L\a{right}\a{weight}$:\quad{}//
$L\a{right}+R$ overflows; see \ref{thm:join-spine-struct}

\begin{block}
\item Create blank 23Tree (node) $X$ with same height as $L\a{right}$.
\item If $L\a{mid}=null$:\hfill{}\overmid{\includegraphics{\string"P23T Diagrams/joinr3\string".pdf}}

\begin{block}
\item Set $L\a{mid}:=X$.
\end{block}
\item Otherwise:

\begin{block}
\item Copy (root of) $\f{SJoin}(L\a{left},L\a{mid})$ to $R\a{overflow}$.
\item Set $L\a{left}:=X$ and $L\a{mid}:=null$.\hfill{}\overup{\includegraphics{\string"P23T Diagrams/joinr4\string".pdf}}
\end{block}
\item Set $R\a{overflow}:=X$.
\end{block}
\item $\f{FeedRight}(R,L\a{right})$.\hfill{}\overmid{\includegraphics{\string"P23T Diagrams/joinr5\string".pdf}}
\end{block}
\item Reactivate $L\a{joinin}$.
\end{block}
\item \textbf{Private JoinLeft( 23Tree Node $R$ )} is symmetrically defined.
\end{block}
\end{defn}

\br

First we deal with the correctness and performance bounds for the
\textbf{\textit{splitting phase}} and \textbf{\textit{execution phase}}
(i.e.~the procedures Execute, PushDown, Collate in \ref{def:sorted-access},
excluding the preparation for pipelined joining in Collate).
\begin{thm}[Splitting+Execution Guarantees]
\label{thm:split-exec-guarantees} The splitting phase and execution
phase take $O\left(b\cdot\log\left(\frac{n}{b}+1\right)+b\right)$
work and $O(\log b+\log n)$ span, and their result is that the join
of the 2-3 trees at the leaves of $I$ is the desired final 2-3 tree.\end{thm}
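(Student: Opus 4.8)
The plan is to observe that the splitting phase is nothing but \nameref{sub:par-part} applied to push $T$ (viewed as a BBT with $n$ leaves) down the batch $I$ (a BBT with $b$ leaves), with $\infty$ not inserted and the per-leaf collation replaced by the leaf case of Collate; this lets essentially the whole analysis of \nameref{thm:par-part-cost} be reused, after which only the per-leaf join-and-execute needs fresh accounting. Concretely, PushDown in \ref{def:sorted-access} is the push-down phase of parallel partitioning verbatim, so the parallel-partitioning analogue of \ref{lem:par-filter-inv} holds at every node $v$ of $I$: the subtrees fed to $v$ form a slice of $T$, those in $v\a{queue}[1]$ lie strictly left of and at strictly smaller depth in $T$ than those in $v\a{queue}[2]$, and the splitting process at $v$ runs at most once and is thereafter the only thread pushing onto $v\a{left}\a{queue}[2]$ or $v\a{right}\a{queue}[1]$. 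Hence no dedicated queue ever sees concurrent pushes and the pipelined splitting runs correctly. Since the split at an internal node $u$ of $I$ is around $u\a{left}\a{last}$ — which, the accesses being item-sorted with distinct keys $a_1<\dots<a_b$, equals $a_j$ for the index $j$ of the rightmost access under $u\a{left}$ — a downward induction on $I$ shows each item $x$ of $T$ is delivered to the unique leaf $v_k$ with $a_{k-1}<x\le a_k$ for $k<b$ (set $a_0:=-\infty$) and $x>a_{b-1}$ for $k=b$; the missing $\infty$ merely makes the last access collect everything above $a_{b-1}$. The freezing mechanism transfers too, so $v\a{fed}$ is notified for a leaf $v$ precisely once all items of $T$ destined for $v$ have been pushed into its queues, and eventually for every leaf.

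I would then bound the push-down costs by copying the proof of \nameref{thm:par-part-cost} with $P$ replaced by $I$. Initializing the fields at the nodes of $I$ is $O(b)$ work and $O(\log b)$ span. The rest of the work is $O(1)$ per feeding plus $O(1)$ per freeze; there are $O(b)$ freezes, and since the subtrees fed to a node $v$ of $I$ form a slice of the $4$-log-splitting BBT $T$, node $v$ receives $O(\log(w(v)+1))$ feedings, where $w(v)$ is the number of items of $T$ passing through $v$ and $\sum_{v\in H(I,h)}w(v)\le n$ for every level $h$. By \nameref{lem:bbt-log-sum} the feedings total $O\!\left(b\cdot\log\!\left(\frac{n}{b}+1\right)+b\right)$. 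The span of the push-down is $O(I\a{height}+T\a{height}+1)\wi O(\log b+\log n)$ by the same fragment-dependency induction as in \nameref{thm:par-filter-cost}, and the freeze cascade from $I\a{root}$ down to a leaf adds only $O(\log b)$ further span, so every leaf's $v\a{fed}$ is notified within $O(\log b+\log n)$ span of the start of Execute.

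It remains to analyse the leaf case of Collate. At a leaf $v=v_k$, once $v\a{fed}$ fires, the subtrees in $v\a{queue}[1..2]$, read out in reverse (from short to tall), form an ordered slice of $T$ with $w(v)$ leaves, the tallest having at most $w(v)$ leaves and hence height $O(\log(w(v)+1))$; joining them in that order telescopes (as in \ref{lem:bbt-slice-join}, now with 2-3-tree joins; cf.\ \ref{thm:23tree-slice-join}) to a single 2-3 tree $X$ in $O(\log(w(v)+1)+1)$ sequential time, and $X$ holds exactly the items of $T$ in the key range belonging to $v$. Executing the access at $v$ on $X$ costs a further $O(\log(w(v)+1)+1)$ and yields a 2-3 tree $X_k'$ whose item set is that range with $a_k$ inserted or deleted as appropriate (searches and updates leave the item set unchanged and only tag the result and a direct pointer). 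Summing over the $b$ leaves, $\sum_v O(\log(w(v)+1)+1)=O\!\left(b\cdot\log\!\left(\frac{n}{b}+1\right)+b\right)$ by Jensen's inequality and $\sum_v w(v)=n$; all leaf Collates run in parallel, so the leaf bodies contribute $O(\log n)$ span and the recursion over $I$ (with the $\f{Join}$ calls excluded) contributes $O(\log b)$, for $O(\log b+\log n)$ span overall. Correctness follows because the ranges $(-\infty,a_1],(a_1,a_2],\dots,(a_{b-1},\infty)$ partition the keys and access $k$ affects only $a_k$, so $X_1'+\dots+X_b'$ is a 2-3 tree holding exactly the items of $T$ as modified by all $b$ accesses, i.e.\ the desired final 2-3 tree.

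I expect the real work to lie in the first paragraph: faithfully transplanting the parallel-partitioning invariants and the freezing argument to the ``$T$ pushed down $I$, no $\infty$'' setting so that each access provably acts on exactly its own key interval and $v\a{fed}$ signals completion at the correct instant. Once those facts are pinned down, the work and span estimates are the familiar \nameref{lem:bbt-log-sum} and Jensen summations, and the composition identity $T=X_1+\dots+X_b$ (with $X_k$ the slice-join at leaf $v_k$) makes the correctness claim immediate.
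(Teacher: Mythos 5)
Your proposal is correct and follows essentially the same route as the paper: the splitting phase is analyzed by transplanting the parallel-partitioning invariants and the \nameref{lem:bbt-log-sum} feeding count (the paper simply says the two are ``essentially identical''), and the execution phase is charged $O(\log(w(v)+1))$ per leaf via \nameref{thm:23tree-slice-join} and summed with Jensen's inequality. You have merely spelled out in full the details the paper's one-paragraph proof delegates to \ref{thm:par-part-cost} and \ref{thm:23tree-slice-join}.
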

\begin{proof}
The claims follow from \ref{thm:23tree-slice-join} below and the
same reasoning as for \nameref{sub:par-part} (\ref{sub:par-part}),
since they are essentially identical. In particular, the execution
phase takes $O(\log(c+1))$ work/span at each leaf $v$ of $I$ to
collate the 2-3 trees there with total size $c$, and then takes another
$O(\log(c+1))$ work/span to perform the access at $v$ on the result.\end{proof}
\begin{defn}[2-3 Tree Slice]
\label{def:23tree-slice} A \textbf{slice} of a 2-3 tree $T$ is
a minimal-length sequence of disjoint 2-3 subtrees of $T$ that contain
a set of consecutive leaves of $T$. An \textbf{ordered slice} of
$T$ is a slice of $T$ that has the subtrees listed in rightward
order in $T$.
\end{defn}

\begin{thm}[2-3 Tree Slice Joining]
\label{thm:23tree-slice-join} Any ordered slice $S$ of a 2-3 tree
$T$ containing $k$ leaves can be joined into a single 2-3 tree in
$O(\log k)$ sequential time.\end{thm}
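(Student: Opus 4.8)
The plan is to mirror the proof of the \nameref{lem:bbt-slice-join} (\ref{lem:bbt-slice-join}), adjusting for the fact that in a 2-3 tree the subtrees of a slice can differ in height by more than one.

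First I would describe the shape of an ordered slice $S$ covering a range of $k$ consecutive leaves $a,\dots,b$ of $T$. Let $p$ be the lowest common ancestor of $a$ and $b$. If the subtree at $p$ has leaf set exactly $\{a,\dots,b\}$, then by minimality $S$ is the single subtree at $p$, which is already a 2-3 tree; so assume otherwise. Then $a$ and $b$ lie in distinct children of $p$, every subtree occurring in $S$ lies within the leaf range $[a,b]$ and hence has at most $k$ leaves and height at most $\log_2 k$, and $S$ is exactly: the subtrees hanging to the right of the path from $p$ down to $a$ that lie within $[a,b]$ (listed in increasing height), then at most one full child of $p$ lying strictly between $a$ and $b$, then the subtrees hanging to the left of the path from $p$ down to $b$ that lie within $[a,b]$ (listed in decreasing height). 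Walking down either path we pick up, at each level, the in-range siblings on one side of the path edge of a single 2-3 node, so at most two per level. Thus, listing $S$ in rightward order as $X_1,\dots,X_\ell$, the heights are non-decreasing up to a peak and then non-increasing, each height value occurs $O(1)$ times, and all heights are $O(\log k)$; in particular $\ell = O(\log k)$.

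Next I would carry out the join. Split the list at the peak into $S_L = X_1,\dots,X_q$ (non-decreasing heights) and $S_R = X_{q+1},\dots,X_\ell$ (non-increasing heights), and recall that a single 2-3 tree join $Y+Z$ runs in $O(|Y\a{height}-Z\a{height}|+1)$ time and raises the height by at most one. Fold $S_L$ from the shortest end: set $A := X_1$ and then $A := A + X_i$ for $i = 2,\dots,q$. Because each height occurs $O(1)$ times, an easy induction shows that after processing $X_1,\dots,X_{i-1}$ the accumulator $A$ has height within an additive constant of $X_{i-1}\a{height}$; since $X_i\a{height}\ge X_{i-1}\a{height}$, the $i$-th join costs $O(X_i\a{height}-X_{i-1}\a{height}+1)$, and these costs telescope to $O(X_q\a{height}-X_1\a{height}+q)\wi O(\log k)$. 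Fold $S_R$ from its shortest (rightmost) end the same way; this yields 2-3 trees $L$ and $R$ of height $O(\log k)$, and a final join $L+R$ costs $O(\log k)$ and produces $X_1+\dots+X_\ell$, which is precisely the 2-3 tree on the leaves $a,\dots,b$.

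The main obstacle is the structural bookkeeping in the first step: correctly establishing that the slice is ``mountain-shaped with $O(1)$ height multiplicity and $O(\log k)$ maximum height''. The height bound is the delicate point, and it hinges on the observation that, once the degenerate case where $S$ is a single subtree has been dispatched, every subtree of $S$ lies inside the $k$-leaf range and so has at most $k$ leaves. Given the structural description, the cost analysis of the fold is routine, differing from the BBT case only in that consecutive subtrees along a fringe may have a height gap, which the telescoping sum absorbs.
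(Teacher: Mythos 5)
Your proposal is correct and follows essentially the same route as the paper: decompose the ordered slice into two height-monotonic halves with at most two subtrees per height, fold each half from the shortest end, and finish with one $O(\log k)$ join. Your explicit telescoping of the per-join cost $O(|Y\a{height}-Z\a{height}|+1)$ is a slightly more careful treatment of possible height gaps along each fringe than the paper's ``$O(1)$ time per join,'' but the underlying argument is the same.
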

\begin{proof}
$S$ is the concatenation of two ordered slices such that in each
of them the subtrees have monotonic height and there are at most two
subtrees of each height. Note that if 2-3 trees $X,Y,Z$ have height
in the range $[h-1,h]$, then $(X+Y)+Z$ is a 2-3 tree of height in
the range $[h,h+1]$ and takes $O(1)$ time. Thus we can join the
subtrees in each ordered slice from shortest to tallest, taking $O(1)$
time per join, and then join the two results in $O(\log k)$ time.
\end{proof}
So we will devote the rest of this section to analyzing the \textbf{\textit{joining
phase}} (i.e.~the pipelined joining preparation in Collate, and the
procedures SJoin, RJoin, FeedLeft, FeedRight, Join, JoinLeft, JoinRight
in \ref{def:23tree-join}, and Finalize in \ref{def:sorted-access}).
First we introduce some basic observations and terminology.

Observe that the 23Tree nodes involved always form a DAG (according
to the left/mid/right child pointers) at any point in time (since
children are always shorter). Based on this, we say that the \textbf{subtree
at} a 23Tree node $v$ is the set of nodes reachable by following
child pointers. We also say that a 23Tree $X$ is \textbf{queued at}
$v$ iff $X$ is in $v\a{queue}$, in which case we call $X$ a \textbf{queued
tree} at $v$. Additionally, we can impose a partial ordering on the
queued trees called the \textbf{pipeline order}, where a queued tree
$X$ at $v$ is \textbf{before} a queued tree $Y$ at $w$ iff $v$
is a strict descendant of $w$ or both $v=w$ and $X$ is before $Y$
in $v\a{queue}$.

Next observe that if 23Trees $L,R$ satisfy the property that $u\a{joinin}$
is defined for both its left and right child $u$, then $\f{Join}(L,R,\_)$
also does. And $v\a{joinin}$ is only defined via $\f{InitLeft}(v)$
or $\f{InitRight}(v)$, so we can check that it \textbf{feeds} to
a node $w$ (i.e.~calls $\f{FeedLeft}(\_,w)$ or $\f{FeedRight}(\_,w)$)
only if $w\a{joinin}$ is already defined. Hence the feedings done
by $v\a{joinin}$ are well-defined.

We shall call a run of $v\a{joinin}$ \textbf{effective} iff it pops
off a (non-$null$) tree $X$ from $v\a{queue}$, in which case we
say that it \textbf{processes} $X$. Note that each queued tree $X$
will be processed by $v\a{joinin}$ for each node $v$ that it is
fed to. Clearly ineffective runs have no effect, and we can from now
on view all the runs of $\f{Join},\f{JoinLeft},\f{JoinRight}$ as
\textbf{\textit{atomic}}, because runs of $\f{Join}$ clearly do not
interfere with each other, and because $v\a{joinin}$ is guarded by
the reactivation wrapper (\ref{def:reactivation}) and each effective
run of $v\a{joinin}$ is independent of other processes (\ref{lem:joinin-run-ind}).
\begin{lem}[Joinin Run Independence]
\label{lem:joinin-run-ind} For each 23Tree node $v$ involved in
the joining phase, what the sequence of all effective runs of $v\a{joinin}$
do is independent of any other runs of $\f{Join}$ or $\f{JoinLeft}$
or $\f{JoinRight}$. (In other words, the effect of those runs only
depends on the initial subtree at $v$ and the sequence of queued
trees processed by $v\a{joinin}$.)\end{lem}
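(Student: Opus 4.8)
The plan is to examine the three procedures $\f{Join}$, $\f{JoinLeft}$, $\f{JoinRight}$ and verify, line by line, that each effective run of $v\a{joinin}$ reads and writes only "local" state: the fields of $v$ itself ($v\a{left}, v\a{mid}, v\a{right}, v\a{height}, v\a{spine}, v\a{queue}$), the head of $v\a{queue}$ (which it pops), and the fields of the tree $R$ it processes (in particular $R\a{overflow}, R\a{joined}, R\a{weight}$), plus it calls $\f{FeedRight}(R, v\a{right})$ which touches $v\a{right}$. The key point to establish is that none of these writes conflict with the reads/writes of any run of $\f{Join}$ or $\f{JoinLeft}$ or $\f{JoinRight}$ at a \emph{different} node, and that reads of $v\a{right}$ and $v\a{mid}$ see values that are determined solely by the processing history of $v\a{joinin}$ itself (not by what happens below $v$).

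Concretely I would proceed as follows. First, observe that $\f{Join}(L,R,\_)$ only accesses $L, L\a{left}, L\a{mid}, L\a{right}$ and $R$ and their immediate fields, never descending further into the DAG, and never touches any node's $\a{queue}$ except via $\f{FeedRight}/\f{FeedLeft}$ which push onto $L\a{right}\a{queue}$ or $L\a{left}\a{queue}$ and reactivate its $\a{joinin}$; so a run of $\f{Join}$ that creates the joined tree $J$ with children $L\a{left}, L\a{right}$ (etc.) simply hands those children to $J$ — it does not mutate the subtree strictly below them. Second, for $\f{JoinRight}$ at node $L$: the writes are to $L\a{left}, L\a{mid}, L\a{right}$, to $R\a{overflow}$ (a blank node freshly created or supplied), and a copy of $\f{SJoin}(L\a{left}, L\a{mid})$ into $R\a{overflow}$; the crucial subtlety is the "copy root of $\f{SJoin}(L\a{left},L\a{mid})$ to $R\a{overflow}$" step, where $R\a{overflow} = X$ is the blank node that a \emph{previous} run (either of $\f{Join}$ or of $\f{JoinRight}$ at $L\a{parent}$, i.e.\ the run that fed $R$ down to $L$) had created and attached as a child of $L\a{parent}$; I must check that writing into $X$ at this moment is consistent, i.e.\ that $X$ is exactly the slot reserved for the overflow and nobody else writes to it. This follows because $X$ was created specifically to receive the overflow of $L\a{right}+R$ and is tagged onto $R$; the only writer of $X$ is the run of $v\a{joinin}$ at $v = L$ that processes $R$. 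Third, I would argue that $v\a{joinin}$'s reads of $v\a{left}, v\a{mid}, v\a{right}, v\a{spine}$ at the start of processing $R$ equal the values left by the \emph{previous} effective run of $v\a{joinin}$ (or the initial state of $v$ if $R$ is the first queued tree), because those fields are written only by runs of $v\a{joinin}$ itself — no run at another node $w \ne v$ writes to $v$'s fields; a run at $v\a{parent}$ writes to $v\a{parent}$'s fields and to $v\a{queue}$ (via $\f{FeedRight}$) and to $v\a{spine}$ — wait, $\f{FeedRight}(X, v)$ does update $v\a{spine}$. So I must be careful here: $\f{FeedRight}/\f{FeedLeft}$ update $v\a{spine}$ and push onto $v\a{queue}$, both performed by the parent's run. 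But $v\a{joinin}$ only \emph{reads} $v\a{spine}$ via $L\a{right}\a{spine}$ when deciding overflow of $L\a{right}+R$, i.e.\ it reads the spine of its \emph{child}, not its own; and the child's spine, for the child $c = v\a{right}$, is updated precisely when $v\a{joinin}$ calls $\f{FeedRight}(R,c)$ — which is the last thing the run does before reactivating — so there is a clean ordering, and the pipeline-order structure (established alongside this lemma) guarantees that when $v\a{joinin}$ reads $c\a{spine}$ it sees the value consistent with all trees pushed to $c$ so far, which is exactly the sequence of trees fed by the runs of $v\a{joinin}$ preceding the current one.

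Assembling these observations, I conclude that the effect of the $k$-th effective run of $v\a{joinin}$ is a function only of (i) the subtree at $v$ as left by the $(k-1)$-th effective run (equivalently, the initial subtree at $v$ together with the first $k-1$ processed trees), and (ii) the $k$-th processed tree $R_k$ itself, together with $R_k$'s fields which were set when $R_k$ was fed to $v$. Hence the sequence of effects of all effective runs of $v\a{joinin}$ depends only on the initial subtree at $v$ and the sequence $R_1, R_2, \dots$ of queued trees it processes, independent of interleaving with runs at other nodes. The main obstacle is the bookkeeping around $R\a{overflow}$: showing that the blank node $X$ created by one run and later written into by a subsequent run of the \emph{same} $v\a{joinin}$ (when the overflow propagates) is never aliased with or written by anything else, and that the "Copy root of $\f{SJoin}(\cdot,\cdot)$ to $R\a{overflow}$" in-place write has exactly the intended semantics. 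This requires a careful case analysis over the six cases in $\f{Join}$ and the branches of $\f{JoinRight}$, using the invariant that each overflow slot is uniquely owned. The rest is a routine inspection confirming no cross-node field is ever written.
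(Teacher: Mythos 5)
Your proposal takes essentially the same route as the paper: reduce the claim to showing that the fields each effective run of $v\a{joinin}$ reads ($v\a{left}$, $v\a{mid}$, $v\a{right}$ and the children's $\a{spine}$ fields) are written only by $v\a{joinin}$ itself, dispatching the two subtleties you correctly isolate — the overflow slot (a fresh blank node, never aliased with a node whose $\a{joinin}$ is defined) and the spine updates (a run reads its \emph{children's} spines, which only its own $\f{FeedLeft}/\f{FeedRight}$ calls modify, while the parent writes $v\a{spine}$, which $v\a{joinin}$ never reads). The only point to tighten is that a run of $\f{Join}$ \emph{does} write $L$'s child fields directly, so the argument there is not "no cross-node write" but rather that such a $\f{Join}$ run returns $v$ and therefore precedes every effective run of $v\a{joinin}$, exactly as the paper notes.
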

\begin{proof}
Note that every 23Tree node $w$ has constant $w\a{height}$ (and
hence $w\a{weight}$) that was fixed at its creation. Thus each run
of $v\a{joinin}$ that processes a queued tree $X$ depends only on
$X$ and the fields $v\a{left},v\a{mid},v\a{right},v\a{left}\a{spine},v\a{right}\a{spine}$,
and so it suffices to show that these fields are modified only by
$v\a{joinin}$. Other runs of $\f{JoinLeft}$ or $\f{JoinRight}$
besides those of $v\a{joinin}$ will not modify these fields, since
they can only do so if $v=X\a{overflow}$ for some queued tree $X$,
but that is impossible because $X\a{overflow}$ is always a blank
23Tree node at the point when it is set, and neither $\f{InitLeft}$
nor $\f{InitRight}$ is ever called on it after that, so $X\a{overflow}\a{joinin}$
is never defined. And a run of $\f{Join}$ can only modify these fields
if it returns $v$, but clearly all runs of $v\a{joinin}$ can only
start after that run of $\f{Join}$ has returned.
\end{proof}
This independence lemma also implies that given any 23Tree $X,$ the
result of processing all queued subtrees in $X$ (without ever feeding
$X$) is uniquely determined by the current state of $X$, and we
can define that result to be the joined state of $X$, as made precise
in the next lemma.

\br
\begin{lem}[23Tree Joined State]
\label{lem:23tree-joined} Define a \textbf{joining sequence} for
a 23Tree $X$ to be a sequence of effective joinin runs on the subtree
at $X$ (i.e.~each is an effective run of $v\a{joinin}$ for some
node $v$ in the subtree at $X$) that processes queued trees in (the
subtree at) $X$ until there are none left. Then every joining sequence
for $X$ terminates and yields the same resulting subtree at $X$,
which we call the \textbf{joined state} of $X$.\end{lem}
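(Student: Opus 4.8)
The plan is to prove the two assertions separately: first that every joining sequence for $X$ terminates, and then that any two of them end in the same configuration, so that the resulting subtree at $X$ is well-defined.

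For termination I would use a descent/potential argument. No effective joinin run ever enlarges the multiset of currently-queued trees: $\f{FeedLeft}$ and $\f{FeedRight}$ merely relocate the tree just popped off a queue onto the queue of a strictly shorter child, and the only new 23Tree nodes created are blank overflow nodes, which are never enqueued and never have a $\f{joinin}$ defined on them (as noted in the proof of \ref{lem:joinin-run-ind}). Hence, if $\Phi$ denotes the sum over all currently-queued trees $Z$ of the height of the node at which $Z$ sits, then every effective joinin run strictly decreases the non-negative integer $\Phi$: the run either absorbs the tree it processes into the subtree at its node (deleting that tree's contribution) or feeds it down exactly one level (decreasing that contribution by one). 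So a joining sequence has length at most the initial value of $\Phi$, hence at most the number of initially queued trees times $(X\a{height}+1)$, and therefore terminates. Since any configuration with a non-empty queue admits a further effective joinin run, termination occurs precisely when no queued trees remain, which matches the definition of a joining sequence.

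For uniqueness I would pass to the abstract-rewriting picture: take a state to be the subtree at $X$ together with its queue contents and $\f{spine}$ fields, and take a rewrite step to be one effective joinin run. By the previous paragraph this system is terminating, so by Newman's Lemma it suffices to prove local confluence, and I would in fact prove the stronger commuting-diamond property: if two distinct effective joinin runs $\rho_v$ (at node $v$) and $\rho_w$ (at node $w\ne v$) are both enabled in a state $C$, then $\rho_v$ followed by $\rho_w$ and $\rho_w$ followed by $\rho_v$ are both well-defined and reach the same state. The key input is that a single effective run of $v\a{joinin}$ has a very localized footprint: by inspection of $\f{JoinLeft}/\f{JoinRight}$ together with \ref{lem:joinin-run-ind}, it only reads and writes $v\a{left},v\a{mid},v\a{right},v\a{left}\a{spine},v\a{right}\a{spine}$ and the front of $v\a{queue}$, together with freshly created nodes and fields private to the tree it processes (its overflow slot and the node that slot names), and, if it feeds a tree down, it appends to the back of a child's queue and overwrites that child's $\f{spine}$ field. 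Since a child's $\f{spine}$ field is written but never read by the parent feeding into it, queue appends at the back commute with pops at the front, and the overflow nodes of distinct queued trees are distinct, the footprints of $\rho_v$ and $\rho_w$ do not conflict, so the two runs commute; neither disables the other, since a run never pops from a queue other than its own. Newman's Lemma then yields a unique normal form, which is exactly the state reached by every joining sequence, and its subtree at $X$ is the joined state of $X$.

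The step I expect to be the main obstacle is the commutation claim, and within it the case in which $\rho_v$ and $\rho_w$ both feed into the queue of one common child $c$ — a priori possible because the 23Trees form a DAG in which $\f{SJoin}/\f{RJoin}$ can give a node two parents — since then the relative order of the two appends to $c\a{queue}$ (and of the two updates to $c\a{spine}$) would matter. Closing this requires an invariant, in the spirit of the splitting invariant of \ref{lem:par-filter-inv}, that at any time each $v\a{queue}$ is appended to by only one ``live'' feeder and always in pipeline order, so that a tree already inside the subtree at $X$ stays ahead in its queue of any tree later fed toward it. With that invariant one could also dispense with Newman's Lemma altogether: by \ref{lem:joinin-run-ind} the effect of the entire run-sequence of $v\a{joinin}$ depends only on the initial subtree at $v$ and the order in which queued trees arrive at $v\a{queue}$, and that order is pinned down by the pipeline order regardless of the schedule, so every joining sequence computes the same subtree at every node.
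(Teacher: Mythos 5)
Your proposal is correct, and its closing paragraph is in fact the paper's proof: the paper argues that each effective run (viewed atomically) preserves the pipeline order on the remaining queued trees, so that under any joining sequence each node $v$ processes exactly the same queued trees in the same (pipeline) order, and then invokes \ref{lem:joinin-run-ind} to conclude that the results agree. Your termination argument via the potential $\Phi$ is an equivalent repackaging of the paper's per-tree observation that a processed tree is either absorbed (when $v\sim Y$) or fed exactly one level down (when $v\gg Y$). The Newman's-Lemma/commuting-diamond detour is sound but heavier than necessary, and you correctly identify that its crux is the very invariant the paper asserts as its central observation; for the record, that invariant holds because $c\a{queue}$ is appended to only by $\f{FeedLeft}(\cdot,c)$ or $\f{FeedRight}(\cdot,c)$ issued from the joinin of the unique node currently holding $c$ in the fed child position, those runs are serialized and drain their own queue in pipeline order, and once a local adjustment rotates $c$ out of that position nothing ever feeds $c$ again --- so arrivals at every queue occur in pipeline order under any schedule, and your feared case of two live feeders of one queue never arises. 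One small inaccuracy: the feeding parent's joinin does read the child's spine field (in the overflow test $L\a{right}\a{spine}+R\a{weight}\ge L\a{right}\a{weight}$) as well as write it, but since both accesses lie inside the single run of the unique feeder, this does not affect your non-interference conclusion.
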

\begin{proof}
Observe that each queued tree $Y$ in $X$ that is processed by $v\a{joinin}$
is either fed to a child of $v$ if $v\gg Y$ or stops being a queued
tree if $v\sim Y$. Thus $Y$ eventually stops being a queued tree,
and so every joining sequence eventually terminates. Now observe that
each effective run (viewed atomically) does not change the pipeline
order on the (remaining) queued trees, and hence all effective runs
of $v\a{joinin}$ at any particular node $v$ of $X$ process exactly
the same queued trees in exactly the same (pipeline) order regardless
of which joining sequence for $X$ is executed, hence yielding the
same result because of their independence (\ref{lem:joinin-run-ind}).
\end{proof}
Note that during the joining phase, the joined state of a node $v$
involved may change (if $v$ is fed).

We can now state and prove the correctness of $\f{Join}$ (\ref{thm:23tree-join-correct}),
and then the correctness of sorted accesses (\ref{thm:sorted-access-correct}).
\begin{defn}[2-3 Tree With Spine Structure]
\label{def:23tree-with-spine} We say that a 2-3 tree $X$ is \textbf{with
spine structure} iff every non-root spine node $v$ of $X$ has spine
structure (\ref{def:spine-struct}) $v\a{spine}$.\end{defn}
\begin{thm}[23Tree Joining Correctness]
\label{thm:23tree-join-correct} Take any run of $\f{Join}$ on 23Trees
$L,R$ that returns the 23Tree $J$ during the joining phase. Let
$L',R'$ be the joined states of $L,R$ respectively just before that
run, and let $J'$ be the joined state of $J$ just after that run.
If $L',R'$ are 2-3 trees with spine structure, then $J'$ is also
a 2-3 tree with spine structure, and furthermore $J'=L'+R'$ (i.e.~$J'$
is the standard join of $L'$ and $R'$).\end{thm}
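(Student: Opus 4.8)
The plan is to prove the statement by induction on $\max(L\a{height},R\a{height})$, analyzing $\f{Join}$ together with its helpers $\f{JoinRight}$ and $\f{JoinLeft}$: alongside \ref{thm:23tree-join-correct} I would establish the companion claim that, if $R$ is fed to a right-child 23Tree node $v$ with $v\gg R$ and the joined states $v',R'$ just before are 2-3 trees with spine structure, then after the induced $\f{JoinRight}$ runs the joined state of $v$ becomes the ``lower part'' of $v'+R'$ while $R\a{overflow}$ holds the root-sibling piece (or is untouched when $v'+R'$ does not overflow), so that plugging $R\a{overflow}$ into the slot pre-assigned to it yields exactly $v'+R'$; symmetrically for $\f{JoinLeft}$. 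The cases $L=null$, $R=null$, $L\sim R$, $L\a{right}\sim R$ are base cases, and the case $L\a{right}\gg R$ of $\f{Join}$ reduces to the $\f{JoinRight}$ claim one level down since $L\a{right}\a{height}<L\a{height}$; one first notes that $L\a{right}'$ is itself a 2-3 tree with spine structure, being a subtree of $L'$ whose relevant spine nodes are spine nodes of $L'$.

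Throughout, I would use \ref{lem:joinin-run-ind} and \ref{lem:23tree-joined} to fix a convenient evaluation order for the joined state of $J$: flush every queued tree lying strictly inside $L$ and inside $R$ first (obtaining honest 2-3 subtrees $L'$ and $R'$), then process the feedings generated by this $\f{Join}$ run, handling each newly queued tree immediately after it is fed. Under this schedule the pipelined joining degenerates into the ordinary recursive 2-3 tree join (with the overflow threaded through $\a{overflow}$ pointers rather than returned up a call stack), so it suffices to check each local adjustment of \ref{tab:23tree-join} against the standard join rules. The easy cases are direct: $L=null$ or $R=null$ gives $J'=R'$ or $J'=L'$, matching $null+R'=R'$; when $L\sim R$, $\f{RJoin}$ builds the 2-node with children $L',R'$, which is $L'+R'$, and the $\f{InitLeft},\f{InitRight}$ calls set the two children's $\a{spine}$ fields to $\f{LSpine}(L'),\f{RSpine}(R')$, i.e.\ their correct $\f{lspine},\f{rspine}$ in $J'$ by \ref{def:spine-struct} (using that $L'.left,R'.right$ already carry correct spine fields); the $L\a{right}\sim R$ subcases mirror either making $L$ a 3-node or splitting a 4-node into $\f{SJoin}(L\a{left},L\a{mid})$ and $\f{SJoin}(L\a{right},R)$, where the grandchildren keep their still-valid spine fields ($R$ having acquired one from the $\f{InitRight}(R)$ at the head of the $L\gg R$ branch) and the fresh split nodes receive theirs from $\f{RJoin}$.

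For the recursive case $L\a{right}\gg R$, $\f{Join}$ performs $\f{FeedRight}(R,L\a{right})$: it queues $R$ at $L\a{right}$, updates $L\a{right}\a{spine}$, and in the overflow branch first creates a blank node $X$, wires it in as $L\a{mid}$ (or as a child of a 4-node split), and records $R\a{overflow}:=X$. The points to verify are: (i) the test $L\a{right}\a{spine}+R\a{weight}\ge L\a{right}\a{weight}$ detects exactly when $L\a{right}'+R'$ overflows, and the $\f{FeedRight}$/$\f{FeedLeft}$ updates of the parent's $\a{spine}$ field produce the correct post-join spine structure --- both being precisely the arithmetic carried out in the proof of \ref{thm:join-spine-struct}; (ii) no joinin run ever inspects the contents of a blank $\a{overflow}$ node before it is filled, since such a node never gets $\a{joinin}$ defined (as already observed in the proof of \ref{lem:joinin-run-ind}) and is only written to, never traversed, until $R$ descends to its height; and (iii) the deferred ``copy root of $\f{SJoin}(\dots)$ to $R\a{overflow}$'' writes performed by $\f{JoinRight}$ one level down together reconstitute the standard 4-node split. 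Granting (i)--(iii), the induction hypothesis for $\f{JoinRight}$ at $L\a{right}$ gives that the joined state of $L\a{right}$ becomes the lower part of $L\a{right}'+R'$ with $X$ completed correctly, whence $J'=L'+R'$ as a 2-3 tree with spine structure; the $L\ll R$ cases follow by symmetry.

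The main obstacle is item (iii) together with the bookkeeping of the simultaneously-proved $\f{JoinRight}$/$\f{JoinLeft}$ statement: one must argue carefully that a 2-3 tree assembled from pieces written at different times into pre-linked blank nodes is nonetheless the standard join, and that the interleavings permitted by the pipeline never expose a half-built node --- which is exactly where \ref{lem:joinin-run-ind} and the ``blank node has no $\a{joinin}$'' observation do the real work. The spine-structure arithmetic, by contrast, is routine once \ref{thm:join-spine-struct} is in hand.
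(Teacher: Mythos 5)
Your proposal is correct and follows essentially the same route as the paper: both use \ref{lem:joinin-run-ind} and \ref{lem:23tree-joined} to reorder the execution into a schedule in which the pipelined descent of $R$ down $L$ degenerates into the standard sequential 2-3 tree join (flushing the pre-existing queued trees of $L$ and $R$ into $L'$ and $R'$), and both then reduce the spine bookkeeping to the arithmetic of \ref{thm:join-spine-struct} applied at each $\f{FeedRight}$/$\f{FeedLeft}$. The only real difference is presentational --- you package the descent as an explicit height induction with a companion $\f{JoinRight}$ invariant, whereas the paper asserts directly that stage~2 executes the standard join and then verifies the $\a{spine}$ fields by enumerating the spine nodes of $J'$; if you do write out the companion claim, state its hypothesis in terms of the \emph{right} spine structure of $v$ only, since interior (left-spine) $\a{spine}$ fields of $L\a{right}$ are not maintained and are never read by $\f{JoinRight}$.
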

\begin{proof}
Assume as given that $L',R'$ are 2-3 trees with spine structure.
By symmetry we can also assume that $L\sim R$ or $L\gg R$. Let $S$
be the global state just before the $\f{Join}(L,R,done)$ run. We
shall now consider two possible sequences of runs of $\f{Join},\f{JoinLeft},\f{JoinRight}$
(treating them as atomic) that can be executed starting from the same
state $S$.

By definition, $J'$ is the resulting state of $J$ upon performing
the following in order:
\begin{enumerate}
\item Execute $J:=\f{Join}(L,R,done)$ (which may make $R$ a queued tree
in $J$).
\item Process all queued trees in $J$.
\end{enumerate}
By \nameref{lem:joinin-run-ind} (\ref{lem:joinin-run-ind}), $J'$
is also the resulting state of $J$ upon performing the following
3 stages in order:
\begin{enumerate}
\item Process all queued trees in $L$.
\item Execute $J:=\f{Join}(L,R,done)$, and then process $R$ until it is
not a queued tree in $J$.
\item Process all (remaining) queued trees in $J$.
\end{enumerate}
Henceforth we shall assume this second sequence of runs. Clearly,
stage~1 makes $L$ become $L'$, and stage~3 makes $R$ (which is
a subtree on the right spine of $J$ after stage~2) become $R'$.
$L'$ is a 2-3 tree with spine structure, so stage~2 is effectively
performing the standard 2-3 tree joining algorithm to join $R$ on
the right of $L'$ (based on \ref{thm:join-spine-struct}), treating
$R$ as a 2-3 tree with height $R\a{height}$. Since we always have
$R\a{height}=R'\a{height}$, the result $J'$ of these 3 stages is
indeed $L'+R'$.

It remains to verify that $J'$ is with spine structure as well. The
non-root spine nodes of $J'$ comprise:
\begin{itemize}
\item The non-root left spine nodes of $L'$.
\item The non-root right spine nodes of $R'$.
\item The root of $R'$.
\item The non-root right spine nodes of $L'$ that $R$ was fed to (via
$\f{FeedRight}$).
\item The nodes $X,Y$ if the $\f{Join}(L,R,done)$ run calls $\f{RJoin}(X,Y)$.
\end{itemize}
First observe that $\f{Join}$ is run only on 23Trees that have no
queued tree at the root. Thus before the $\f{Join}(L,R,done)$ run,
$L$ has the same children as $L'$, each child $v$ of $L$ having
the same $v\a{spine}$ as in $L'$, and likewise for $R$. Moreover,
stages~2,3 do not feed any child of $R$, so $\f{RSpine}(R)=\f{rspine}(R')$
throughout all stages. Thus if $L\gg R$, then the $\f{Join}(L,R,done)$
run sets $R\a{spine}$ to $\f{RSpine}(R')$, and it is never changed
throughout stages~2,3.

Next observe that, on each call $C$ to $\f{FeedRight}(R,v)$ during
stage~2, if $v$ is a 2-3 tree with spine structure $v\a{spine}$
just before the call, then just after the call $v\a{spine}$ is the
spine structure $s$ of $v$ in $J'$. To see why, let $R_{0}$ be
the state of $R$ during $C$, and let $v_{0}$ and $v_{1}$ be the
state of $v$ just before and just after $C$ respectively, and consider
$v_{0}$ to also denote the subtree at $v_{0}$ just before $C$.
Then $v_{1}\a{spine}=(v_{0}\a{spine}-v_{0}\a{spine}\%R'\a{weight}+R'\a{weight})\%v_{0}\a{weight}+\f{rspine}(R')$,
because $R_{0}\a{weight}=R'\a{weight}$ and $\f{RSpine}(R_{0})=\f{rspine}(R')$
since $R_{0}\a{right}\a{spine}=R'\a{right}\a{spine}=\f{spine}(R'\a{right})$.
Also $\f{rspine}(v_{0}+R')=v_{0}\a{spine}-v_{0}\a{spine}\%R\a{weight}+R\a{weight}+\f{rspine}(R')$
by \ref{thm:join-spine-struct}. If $v_{0}+R'$ does not overflow,
then $v_{0}\a{spine}+R'\a{weight}<v_{0}\a{weight}$, and hence $v_{1}\a{spine}=\f{rspine}(v_{0}+R')=s$.
But if $v_{0}+R'$ does overflow, then $v_{1}\a{spine}=\f{rspine}(v_{0}+R')\%v_{0}\a{weight}=\f{spine}((v_{0}+R')\a{right})=s$.

With these observations, we can then check that if the $\f{Join}(L,R,done)$
run calls $\f{RJoin}(X,Y)$ on some nodes $X,Y$, then at that point
$X\a{left}\a{spine}$ and $Y\a{right}\a{spine}$ have been set to
$\f{spine}(X\a{left})$ and $\f{spine}(Y\a{right})$ respectively,
and hence $X\a{spine}$ and $Y\a{spine}$ will also be set to their
spine structure in $J'$.

Therefore we can now easily verify that every non-root spine node
$v$ in $J'$ has spine structure $v\a{spine}$.
\end{proof}

\begin{defn}[Finished 23 Tree]
 We say that a 23Tree $X$ is a \textbf{finished} iff it is a 2-3
tree with spine structure (\ref{def:23tree-with-spine}) and with
no queued trees (which implies that $X$ is its own joined state).\end{defn}
\begin{thm}[Sorted Access Correctness]
\label{thm:sorted-access-correct} Each call to $\f{Execute}(I)$
(i.e.~the sorted batch access on input batch $I$) eventually returns,
at which point $T$ is a finished 23Tree that matches the result of
performing $I$ on the original $T$ just before the call.\end{thm}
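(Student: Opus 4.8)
The plan is to combine the correctness guarantees already established for each of the three phases, together with a termination argument showing that the recursive waiting in $\f{Finalize}$ actually completes. The correctness of the splitting and execution phases is already packaged in \nameref{thm:split-exec-guarantees} (\ref{thm:split-exec-guarantees}): starting from a finished $T$ (in particular a valid 2-3 tree), after those two phases the join of the 2-3 trees sitting at the leaves of $I$ equals the correct 2-3 tree obtained by performing $I$ on the original $T$, and moreover each such 2-3 tree at a leaf is made finished (it has spine structure installed by the $\f{InitLeft}/\f{InitRight}$ calls at the end of $\f{Collate}$ on a leaf, and has no queued trees). So the heart of the argument is the joining phase: I want to show that $\f{Collate}(I\a{root})$ returns a 23Tree $J$ whose joined state, once all queued trees have been processed, is exactly this correct 2-3 tree, and that $\f{Finalize}(I\a{root})$ returns only after that processing is complete, leaving $T$ finished.

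First I would prove, by structural induction on $I$ going up from the leaves, the invariant: for every node $v$ of $I$, once $\f{Collate}(v)$ returns the 23Tree $J_v$, the joined state of $J_v$ (after all queued trees in its subtree are processed) is a finished 2-3 tree equal to the join, in left-to-right order, of the 2-3 trees at the leaves of the subtree of $I$ rooted at $v$. The base case is exactly the leaf handling in $\f{Collate}$ together with \ref{thm:23tree-slice-join} and the spine-structure initialization. For the inductive step, $\f{Collate}(v)$ computes $L := \f{Collate}(v\a{left})$ and $R := \f{Collate}(v\a{right})$ in parallel and returns $\f{Join}(L,R,v\a{done})$; by the induction hypothesis the joined states $L',R'$ of $L,R$ are finished 2-3 trees with the claimed contents, and by \nameref{thm:23tree-join-correct} (\ref{thm:23tree-join-correct}) the joined state of the returned tree is $L'+R'$ and is again a 2-3 tree with spine structure — hence, once fully processed, finished. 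A subtle point I need to check here is the hypothesis of \ref{thm:23tree-join-correct} that $L',R'$ are the joined states \emph{just before} the $\f{Join}$ run: this is fine because $\f{Join}(L,R,\_)$ is the very first thing done to $R$ as a queued tree, and the independence lemma (\ref{lem:joinin-run-ind}) lets me reorder the atomic runs so that all of $L$'s and $R$'s queued trees are processed before the $\f{Join}$ run, without changing the final joined state of $J$.

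Next I would handle termination and the notification discipline. Each $\f{Join}$ (or a $\f{RJoin}$ nested in it) either immediately calls $done\a{notify}()$, or sets $R\a{joined}:=done$ and feeds $R$ down, in which case the eventual $\f{JoinRight}$ run that absorbs $R$ as a sibling calls $R\a{joined}\a{notify}()$; an easy induction along the right spine of $L'$ (using that $R$ strictly descends each time it is fed, so it is fed finitely often) shows $v\a{done}$ is notified exactly once, precisely when the local adjustment making $R$ a subtree of the result completes. Combined with the fact (from \ref{lem:23tree-joined}) that every joining sequence terminates, this gives that $\f{Finalize}(I\a{root})$, which recurses down $I$ and waits on every $v\a{done}$, returns only after all queued trees everywhere have been processed — at which point $I\a{root}$'s returned tree has reached its joined state, which by the invariant above is the desired finished 2-3 tree. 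I also need the trivial observation that $\f{Execute}$ sets $T$ to $\f{Collate}(I\a{root})$ \emph{before} calling $\f{Finalize}$, so the returned $T$ is this tree.

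The main obstacle I expect is the bookkeeping around \emph{when} the joined states are what \ref{thm:23tree-join-correct} needs — i.e.\ making fully rigorous the claim that a $\f{Collate}(v)$ call can be treated as: first produce $L,R$, then let their queued trees drain, then run $\f{Join}$, then let the rest drain. Since $\f{Collate}$ at the parent runs concurrently with residual joinin activity from the children, I must lean carefully on \ref{lem:joinin-run-ind} and \ref{lem:23tree-joined} (effective runs preserve pipeline order and commute) to justify this reordering, and argue that $\f{Join}$ is only ever invoked on a 23Tree that currently has no queued tree at its root (noted in the text right before \ref{lem:joinin-run-ind}), which is what makes $L$'s and $R$'s root fields already agree with $L'$'s and $R'$'s. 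Everything else — the leaf base case, the spine-structure propagation, the one-shot notification of each $v\a{done}$ — is routine once that reordering is in place.
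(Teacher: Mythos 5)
Your proposal is correct and follows essentially the same route as the paper's proof: termination via the observation that each $\f{Join}(L,R,done)$ run either notifies $done$ immediately or sets $X\a{joined}:=done$ for a queued tree that is eventually absorbed (so $\f{Finalize}$ returns, and not before the joining is complete), with the remaining correctness delegated to \nameref{thm:23tree-join-correct} (\ref{thm:23tree-join-correct}). You merely make explicit the structural induction over $I$ and the reordering via \ref{lem:joinin-run-ind} that the paper leaves implicit in the phrase ``the rest of the claim follows from \ref{thm:23tree-join-correct}.''
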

\begin{proof}
The call to $\f{Execute}(I)$ eventually returns, because each $\f{Join}(L,R,done)$
run either calls $done\a{notify}()$ or makes some 23Tree $X$ a queued
tree after setting $X\a{joined}:=done$, and whenever $X$ is processed
either it remains a queued tree or $X\a{joined}\a{notify}()$ is called,
so $\f{Finalize}(I\a{root})$ eventually returns. And the call to
$\f{Execute}(I)$ does not return until the joining phase is done,
because every queued tree $X$ is fed to a node by a unique $\f{Join}(L,R,done)$
run, before $\f{Finalize}()$ calls $done\a{wait}()$, and $X\a{joined}\a{notify}()$
is called only when $X$ is no longer a queued tree. The rest of the
claim follows from \ref{thm:23tree-join-correct}.
\end{proof}
Now we establish the work and span bounds for the \textbf{\textit{joining
phase}}.
\begin{thm}[Joining Phase Work]
\label{thm:joining-work} The joining phase takes $O\left(b\cdot\log\left(\frac{n}{b}+1\right)+b\right)$
work.\end{thm}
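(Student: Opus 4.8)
The plan is to reduce the work of the joining phase to counting two quantities: the total number $F$ of feedings (calls to $\f{FeedLeft}$ or $\f{FeedRight}$) and the total number $S$ of non-root spine nodes ever created (equivalently, of $\f{InitLeft}/\f{InitRight}$ calls). Every run of $v\a{joinin}$ costs $O(1)$ and performs at most one feeding, and for each spine node $v$ the number of runs of $v\a{joinin}$ is $O(1)$ plus $O(1)$ per tree fed to $v$; every invocation of $\f{Join},\f{RJoin},\f{SJoin},\f{InitLeft},\f{InitRight}$ costs $O(1)$ aside from the feedings and the single $\f{RJoin}$ it may spawn, and there is exactly one $\f{Join}$ call per internal node of $I$; the preparation loops in $\f{Collate}$ make one $\f{InitLeft}/\f{InitRight}$ call per non-root spine node of each 2-3 tree sitting at a leaf of $I$; and $\f{Finalize}$ merely recurses over $I$ and waits on barriers each costing $O(1)$. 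Hence the joining phase takes $O(b+F+S)$ work, and it remains to show $F,S\in O\!\left(b\log(\tfrac{n}{b}+1)+b\right)$.

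For a node $v$ of $I$ let $n_v$ denote the number of items in the joined 2-3 tree eventually built at $v$; then $\sum_{v\text{ a leaf of }I}n_v=n'$, the number of items in the result, and $n'\le n+b$ since at most $b$ items are inserted. Bounding $S$ is then routine: the 2-3 tree at a leaf $v$ of $I$ has $O(\log(n_v+1))$ non-root spine nodes, and by concavity of $\log$, $\sum_{v\text{ a leaf}}O(\log(n_v+1))=O\!\left(b\log(\tfrac{n'}{b}+1)+b\right)=O\!\left(b\log(\tfrac{n}{b}+1)+b\right)$; every other spine node is created inside an $\f{RJoin}$, and $\f{RJoin}$ runs $O(b)$ times (at most once per $\f{Join}$), each time creating $O(1)$ nodes.

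The main obstacle is bounding $F$. By \ref{lem:joinin-run-ind} and \ref{lem:23tree-joined} the feedings performed during the joining phase are independent of the schedule, so I would count them per $\f{Join}$ call. Inspecting $\f{Join}$ together with $\f{JoinLeft}/\f{JoinRight}$, a call $\f{Join}(L,R,\_)$ at an internal node $v$ of $I$ enqueues at most one 23Tree — namely $R$ when $L\gg R$ (symmetrically $L$ when $R\gg L$, and none when $L\sim R$ or one of $L,R$ is $null$) — and that tree is fed monotonically down a single spine to nodes of strictly decreasing height, each overflow being absorbed by the blank catcher node created in advance so that no side-feeding ever occurs. Thus this $\f{Join}$ contributes at most $|L\a{height}-R\a{height}|$ feedings. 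By \ref{thm:23tree-join-correct}, $L$ and $R$ arrive at $\f{Join}$ already carrying their final root structure, so $L\a{height}$ and $R\a{height}$ are the heights of the joined 2-3 trees at $v\a{left}$ and $v\a{right}$, which have $n_{v\a{left}}\le n_v$ and $n_{v\a{right}}\le n_v$ leaves; since a 2-3 tree with $m\ge 1$ leaves has height at most $\log_2 m$, this $\f{Join}$ contributes $O(\log(n_v+1))$ feedings, and hence $F\le\sum_{v\in V(I)}O(\log(n_v+1))$.

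Finally I would apply the \nameref{lem:bbt-log-sum} (\ref{lem:bbt-log-sum}) to the BBT $I$ with weight $m(v)=n_v$: for each height $h$ the subtrees rooted at the nodes of $H(I,h)$ are vertex-disjoint, so $\sum_{v\in H(I,h)}n_v\le n'=O(n+b)$, giving $\sum_{v\in V(I)}\log(n_v+1)\in O\!\left(b\log(\tfrac{n'}{b}+1)+b\right)=O\!\left(b\log(\tfrac{n}{b}+1)+b\right)$. Combining with the bound on $S$ and the $O(b)$ overhead yields the claimed work bound. The step needing most care is the feeding count: one must check from the pseudocode that each $\f{Join}$ enqueues only one tree, that this tree descends a single spine without spawning extra feedings (in particular that overflow is always absorbed by the pre-created catcher node rather than by re-feeding), and then invoke the independence lemma so that pipelining cannot inflate the count beyond that of the corresponding serial joins.
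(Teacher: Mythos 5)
Your proposal is correct and follows essentially the same route as the paper: reduce the joining-phase work to the number of feedings, charge each $\f{Join}(L,R,\_)$ at a node $v$ of $I$ with $O(\log(n_v+1))$ feedings since the enqueued tree descends a single spine, and then sum over $V(I)$ via the \nameref{lem:bbt-log-sum} (\ref{lem:bbt-log-sum}). Your explicit accounting of the $\f{InitLeft}/\f{InitRight}$ preparation cost and the $n'\le n+b$ correction for insertions are minor refinements that the paper's proof leaves implicit, but they do not change the argument.
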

\begin{proof}
Observe that the work taken by the joining phase is $O(1)$ times
the total number of runs of $\f{Join}$ or $\f{JoinLeft}$ or $\f{JoinRight}$.
The number of runs of $\f{Join}$ is clearly at most $b-1$. The number
of runs of $\f{JoinLeft}$ or $\f{JoinRight}$ is at most twice the
number of feedings (via $\f{FeedLeft}$ or $\f{FeedRight}$), because
each reactivation of $\f{JoinLeft}(v)$ or $\f{JoinRight}(v)$ is
done either by $\f{FeedLeft}$ or $\f{FeedRight}$ or by itself, and
the number of self-reactivations is at most the number of queued trees
fed to $v$. So we shall bound the total number of feedings.

Each node $v$ of $I$ corresponds to a call $\f{Join}(L,R,done)$
for some 2-3 trees $L,R$. If $L\sim R$ then $\f{Join}(L,R,done)$
does not feed any queue. By symmetry it will suffice to analyze the
case that $L\gg R$. If $L\a{right}\sim R$ or $L+R$ overflows, then
$\f{Join}(L,R,done)$ also does not feed any queue. In the remaining
cases $R$ is pushed onto the queue at $L\a{right}$. Observe that
$R$ is pushed down at most $L\a{height}$ times before it is no longer
a queued tree. Thus the number of feedings of $R$ to a node is at
most $L\a{height}\le\log((L+R)\a{size}+1)=\log(m(v)+1)$ where $m(v)$
is the total number of items in all the 2-3 trees that were prepared
by $\f{Collate}(w)$ for some leaf $w$ of the subtree at $v$. And
clearly $\sum_{v\in H(I,h)}m(v)\le n$ for every $h\in[0..I\a{height}]$.
Thus by the \nameref{lem:bbt-log-sum} (\ref{lem:bbt-log-sum}) the
total number of feedings is $O\left(b\cdot\log\left(\frac{n}{b}+1\right)+b\right)$,
and we are done.\end{proof}
\begin{thm}[Joining Phase Span]
\label{thm:joining-span} The joining phase takes $O(\log b+\log n)$
span.\end{thm}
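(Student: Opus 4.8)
The plan is to mimic the span analysis of the \nameref{thm:par-part-cost} (\ref{thm:par-part-cost}): partition the joining phase into $O(1)$-span \emph{fragments} — here each run of $\f{Join}$, $\f{JoinLeft}$ or $\f{JoinRight}$ already qualifies, since each does only $O(1)$ work (the calls $\f{RJoin}(\f{SJoin}(\ldots),\f{SJoin}(\ldots))$ are $O(1)$, not recursive, and $\f{InitLeft}/\f{InitRight}$ take $O(1)$ steps) — and then show the dependency DAG on these fragments has depth $O(\log b+\log n)$. Combined with \nameref{thm:split-exec-guarantees} (\ref{thm:split-exec-guarantees}), which already accounts for $O(\log b+\log n)$ span up to the start of the pipelined joining, this gives the bound.

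First I would dispatch the easy part: the call $\f{Collate}(I\a{root})$ returns within $O(\log b+\log n)$ span. This is a straightforward induction on $I$: at a leaf $v$, $\f{Collate}(v)$ waits on $v\a{fed}$ (notified within the splitting-phase span), then performs $O(1)$ sequential joins on an ordered slice of total size $\le n$ (\ref{thm:23tree-slice-join}), an access costing $O(\log n)$, and two loops of $O(\log n)$ many $O(1)$-step $\f{InitLeft}/\f{InitRight}$ calls; at an internal node $v$ it is the max of its two children plus the $O(1)$-span call $\f{Join}(L,R,v\a{done})$. Hence every one of the $\le b-1$ runs of $\f{Join}$ occurs within $O(\log b+\log n)$ span, and likewise every $\f{Collate}$ returns within that span.

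The heart of the matter is bounding the span of the remaining activity — the runs of $v\a{joinin}$ that process queued trees — since $\f{Finalize}(I\a{root})$ blocks on the barriers $v\a{done}$, and each $v\a{done}$ is notified either by the $\f{Join}$ run at $v$ or by the joinin run that absorbs the queued tree $R$ with $R\a{joined}=v\a{done}$. I would track, for each queued tree $X$ — which originates from a unique run $\f{Join}(L,R,done)$ at an $I$-node $v(X)$, with $X=R$ — the chain of spine nodes $w_0=L\a{right},w_1,\ldots$ down which it is fed (each $w_{i+1}$ being the right child to which $w_i\a{joinin}$ feeds $X$); this chain has length $\le L\a{height}\in O(\log((L+R)\a{size}+1))=O(\log n)$, since children are strictly shorter. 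The fragment $F$ that is the run of $w_i\a{joinin}$ processing $X$ depends only on: (i) the fragment that fed $X$ to $w_i$ — the $\f{Join}(L,R,done)$ run when $i=0$, else the run of $w_{i-1}\a{joinin}$ processing $X$; and (ii) the run of $w_i\a{joinin}$ that processed the tree $X'$ immediately before $X$ in $w_i\a{queue}$ (dependency through the self-reactivation). Mirroring \nameref{thm:par-filter-cost} (\ref{thm:par-filter-cost}), I would assign such a fragment a potential of the form $\mu(X,w_i) = (\text{span at which the }\f{Join}\text{ run at }v(X)\text{ occurs}) + i + (\text{number of trees fed to }w_i\text{ before }X)$, argue $\mu = O(\log b+\log n)$, and show dependencies (i) and (ii) each strictly decrease $\mu$, so that $F$ runs within $O(\mu+1)=O(\log b+\log n)$ span by induction; then $\f{Finalize}(I\a{root})$ completes within $O(\log b+\log n)$ span since every $v\a{done}$ is notified within that span, giving the stated bound on $\f{Execute}(I)$.

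The main obstacle will be controlling dependency (ii): one must show that the predecessor $X'$ of $X$ at $w_i$ has strictly smaller potential, and in particular that only $O(\log b)$ trees are ever fed to a fixed spine node. The facts to exploit are that effective joinin runs preserve the pipeline order (\ref{lem:joinin-run-ind}, \ref{lem:23tree-joined}), that the $\f{Join}$ runs occur in the order dictated by the recursion on $I$ (the $\f{Join}$ at $v$ after all $\f{Join}$s strictly below $v$), and that $\f{Collate}$ composes: the right spine of $\f{Collate}(v)$ is a top piece of the right spine of $\f{Collate}(v\a{left})$ followed by the right spine of $\f{Collate}(v\a{right})$. Together these should yield that any two trees ever fed to the same spine node $w$ originate from $I$-nodes lying on a common root path of $I$, hence totally ordered by depth; the deeper one is shorter, is produced by a shallower (hence faster) $\f{Collate}$ recursion, is fed first, and descends fewer levels to reach $w$ — exactly what makes $\mu$ drop along dependency (ii) and caps the per-node tree count by $O(\log b)$. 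Pinning this nesting/ordering statement down precisely through the overflow restructurings of \ref{tab:23tree-join} is the delicate step.
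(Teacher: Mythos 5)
Your overall strategy coincides with the paper's proof: decompose the joining phase into $O(1)$-span fragments (one per effective run of $\f{Join}$, $\f{JoinLeft}$ or $\f{JoinRight}$), bound the span of $\f{Collate}(I\a{root})$ and of $\f{Finalize}$ separately, identify the two dependency types for the fragment processing $X$ at $w_i$ --- (i) the fragment that fed $X$ to $w_i$, and (ii) the fragment processing the predecessor $X'$ in $w_i\a{queue}$ --- and induct on a potential of size $O(\log b+\log n)$. The ordering fact you isolate at the end (trees fed to a common spine node originate from $I$-nodes on a common root path and arrive deeper-first) is exactly the observation the paper relies on, namely that $u(X')$ is a strict descendant of $u(X)$ in $I$.

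The one genuine problem is that your potential $\mu(X,w_i)$ --- the span of the $\f{Join}$ run at $v(X)$, plus $i$, plus the number of trees fed to $w_i$ before $X$ --- does not close the induction, and the failure is at dependency (i), not at dependency (ii) where you concentrated your worry. Going from $(w_i,X)$ back to $(w_{i-1},X)$, the index $i$ drops by exactly $1$, but the count term can rise by more than $1$: any tree $Y$ with $w_{i-1}\a{right}\sim Y$ that is fed to $w_{i-1}$ before $X$ is absorbed there and never reaches $w_i$, and arbitrarily many trees can be absorbed at a single node since the overflow restructuring leaves that node in place with two children again. Hence the count at $w_{i-1}$ can strictly exceed the count at $w_i$, so $\mu$ can increase along a backward type-(i) edge. (The counts are not monotone in the other direction either, since a later $\f{Join}$ run can push a tree directly onto $w_i\a{queue}$ without it ever visiting $w_{i-1}$.) The repair is precisely the paper's choice of coordinates: take the descent index $i\le L\a{height}\in O(\log n)$ together with the height of $u(X)$ in $I$, which is at most $O(\log b)$, in place of the queue-position count. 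Type-(i) edges then strictly decrease the potential through $i$ alone, type-(ii) edges strictly decrease it through the height of the originating $I$-node --- your ordering fact --- and the bound of $O(\log b)$ trees per spine node, which you flagged as the delicate step, is never needed at all.
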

\begin{proof}
It is clear that $\f{Finalize}()$ takes $O(\log b+\log n)$ span
after all queued trees have been joined, because it only waits at
$v\a{done}$ at every node $v$ of the final tree $T$, whose height
is at most $O(\log n+I\a{height})\in O(\log b+\log n)$. Hence we
just have to show that all the joinin runs finish in $O(\log b+\log n)$
span after $\f{Collate}(I\a{root})$ finishes.

We shall use the same technique as in the proof of \nameref{thm:par-filter-cost}
(\ref{thm:par-filter-cost}). Call a run of $v\a{joinin}$ a \textbf{$v$-run},
and a reactivation of $v\a{joinin}$ a \textbf{$v$-reactivation}.
Call a $v$-run a \textbf{$(v,X)$-run} iff it processes the 23Tree
$X$ (i.e.~it pops $X$ off $v\a{queue}$). As before, each $v$-run
performs a $v$-reactivation iff it is effective, and there cannot
be three consecutive ineffective $v$-runs.

Also, each queued 23Tree $X$ during the joining phase is first pushed
onto a queue by some run of $\f{Join}$ that corresponds to some internal
node $u(X)$ of $I$. Let $d(X)$ be the depth of $u(X)$ in $I$,
and observe that $d(X')<d(X)$ for any $(v,X')$-run that precedes
a $(v,X)$-run. 

Now consider any $v$-run $R$. Let $X$ be the last 23Tree processed
by the $v$-runs up to $R$ (which exists since the first $v$-run
processes the first tree fed to $v\a{queue}$). There are two cases
(by \ref{thm:reactivation-prop} Property~3b):
\begin{itemize}
\item $R$ starts within $O(1)$ span after the end of the previous $v$-run
$R'$. Note that the last 23Tree $X'$ processed by the $v$-runs
up to $R'$ satisfies $d(X')\le d(X)$.
\item $R$ starts within $O(1)$ span after the start of the $v$-reactivation
$C$ that triggers $R$, and the reactivation point for $C$ is after
the end of any previous $v$-run. In this case, the previous $v$-run
$R'$ (if any) is ineffective, and hence $C$ must be executed by
the feeding of some 23Tree $X'$ to $v\a{queue}$. Observe that any
feeding of a 23Tree $X''$ to $v\a{queue}$ before $X'$ must push
$X''$ onto $v\a{queue}$ before executing some $v$-reactivation
$C'$ before $C$, and $C'$ must have reactivation point before the
start of $R'$ (since $C$ triggers $R$), and hence $X''$ must have
been processed by some $v$-run preceding $R$ since $R'$ is ineffective.
Thus $R$ would process $X'$ if it had not already been processed
by an earlier $v$-run, and hence $d(X')\le d(X)$.
\end{itemize}
From this and the fact that there cannot be three consecutive ineffective
$v$-runs, we can deduce that for every $(v,X)$-run $R$, at least
one of the following holds:
\begin{itemize}
\item $R$ starts within $O(1)$ span after the end of some preceding $(v,X')$-run
where $d(X')<d(X)$.
\item $R$ starts within $O(1)$ span after the start of some $v$-reactivation
executed by the feeding of some 23Tree $X'$ to $v\a{queue}$ where
$d(X')\le d(X)$.
\end{itemize}
Moreover, every feeding of a 23Tree $X$ to $v\a{queue}$ is executed
by either some $(v\a{parent},X)$-run or some run of $\f{Join}$.
Therefore by induction every $(v,X)$-run $R$ starts within $O(k+m+1)$
span after $\f{Collate}(I\a{root})$ finishes, where $k$ is the depth
of $v$ in $T$ and $m=d(X)$. Thus the whole joining phase finishes
within $O(T\a{height}+I\a{height}+1)\wi O(\log b+\log n)$ span after
$\f{Collate}(I\a{root})$ finishes.
\end{proof}

\subsection{Batch Reverse-Indexing}

\label{sub:P23T-reverse}

With the tools we have now, \textbf{reverse-indexing} is not too hard.
A \textbf{direct pointer} $X$ to an item in $\tr$ stores a private
pointer $X\a{node}$ to the leaf in $T$ that contains that item.
We augment each node $v$ of $T$ with $v\a{parent}$ storing its
parent node ($null$ if it is the root), and update it accordingly
during any of the other batch operations on $\tr$. We also augment
$v$ with a boolean flag $v\a{marked}$ initialized to $false$. Reverse-indexing
on an unsorted input batch $P$ of $b$ direct pointers to distinct
items is done in 2 phases (treating $T$ as a BBT throughout):
\begin{enumerate}
\item \textbf{\uline{Tracing phase}}\textbf{:} Recursively for each direct
pointer $X$ in $P$, traverse the path from $X\a{node}$ to the root,
where at each node $v$ the traversal is continued iff $\f{TryLock}(v\a{marked})$.
The spawning takes $O(b)$ work and $O(\log b)\wi O(\log n)$ span
since $P$ is a BBT and $b\le n$. Note that at most one traversal
will continue past each node and hence at most two traversals access
each flag. After all the traversals are done, every node $v$ along
the path from each desired leaf to the root of $T$ is marked (i.e.~$v\a{marked}=true$).
\item \textbf{\uline{Retrieving phase}}\textbf{:} Recursively traverse
$T$ top-down only through marked nodes, to find all the desired leaves
and join them into a 2-3 tree $U:=\f{Retrieve}(T\a{root})$ via the
same pipelined joining scheme as in the \nameref{sub:P23T-normal}
(\ref{sub:P23T-normal} joining phase). To wait for the joining to
be done, we wait on a barrier $v\a{done}$ at every marked node $v$
of $T$ with $2$ marked children, where $v\a{done}$ is notified
after the corresponding joining has finished. After all the joining
is done, recursively for each marked node $v$, call $\f{Unlock}(w\a{marked})$,
and call $v\a{done}\a{wait}()$ if $v$ has $2$ marked children.
After that, $U$ contains the desired items in sorted order, so return
$U$ converted to a batch.
\end{enumerate}

The same technique used to prove the \nameref{thm:joining-work} (\ref{thm:joining-work})
and \nameref{thm:joining-span} (\ref{thm:joining-span}) shows that
each joining of 2-3 trees during the retrieving phase, and the subsequent
waiting for the joining, takes $O(\log n)$ work and $O(\log n)$
span. Thus the total work taken is $O(b)$ plus $O(1)$ per marked
node plus $O(\log n)$ per joining, amounting to $O\left(b\cdot\log n\right)$.
But a more careful analysis will show that the joinings also take
only $O(1)$ work per marked node, and that there are only $O\left(b\cdot\log\frac{n}{b}+b\right)$
marked nodes, so in fact the total work is merely $O\left(b\cdot\log\frac{n}{b}+b\right)$.

The technical details are as follows. Reverse-indexing on an unsorted
input batch $P$ of direct pointers to distinct items is done by returning
$\f{ReverseIndex}(P)$.
\begin{defn}[Reverse-Indexing]
\label{def:reverse-index}~
\begin{block}
\item \textbf{Public ReverseIndex( Batch $P$ of DirectPointers into $\tr$
):}

\begin{block}
\item If $P$ is empty, return new empty Batch of items.
\item Call $\f{Trace}(X\a{node})$ for each DirectPointer $X$ in $P$ in
parallel (and wait for all to finish).
\item Create 23Tree $U:=\f{Retrieve}(T\a{root})$.
\item Call $\f{Finalize}(U\a{root})$.
\item Return $U$ converted to a Batch.\quad{}// easy since a 2-3 tree
is also a BBT
\end{block}
\item \textbf{Private Trace( Node $v$ of BBT $T$ ):}

\begin{block}
\item If $\f{TryLock}(v\a{marked})$ and $v\a{parent}\ne null$, call $\f{Trace}(v\a{parent})$.
\end{block}
\item \textbf{Private Retrieve( Node $v$ of BBT $T$ ):}

\begin{block}
\item If $v$ is a leaf, return new 23Tree containing only the item at $v$.
\item If $\neg v\a{right}\a{marked}$, return $\f{Retrieve}(v\a{left})$.
\item If $\neg v\a{left}\a{marked}$, return $\f{Retrieve}(v\a{right})$.
\item Create Barrier $v\a{done}$.\quad{}// see \ref{def:barrier}
\item In parallel set $L:=\f{Retrieve}(v\a{left})$ and $R:=\f{Retrieve}(v\a{right})$
(and wait for both to finish).
\item Return $\f{Join}(L,R,v\a{done})$.
\end{block}
\item \textbf{Private RFinalize( Node $v$ of BBT $T$ ):}

\begin{block}
\item $\f{Unlock}(v\a{marked})$.
\item If $v$ is a leaf, return.
\item If $\neg v\a{right}\a{marked}$, call $\f{RFinalize}(v\a{left})$
and return.
\item If $\neg v\a{left}\a{marked}$, call $\f{RFinalize}(v\a{right})$
and return.
\item In parallel call $\f{RFinalize}(v\a{left})$ and $\f{RFinalize}(v\a{right})$
(and wait for both to finish).
\item Call $v\a{done}\a{wait}()$.
\end{block}
\end{block}
\end{defn}
First we prove a simple lemma (\ref{lem:tracing-prop}) that will
be needed to prove the correctness and desired cost bounds of revese-indexing,
which is divided into the \textbf{\textit{tracing phase}} (i.e.~until
just before the call to $\f{Retrieve}(T\a{root})$) and the \textbf{\textit{retrieving
phase}} (i.e.~starting from the call to $\f{Retrieve}(T\a{root})$).
\begin{lem}[Tracing Properties]
\label{lem:tracing-prop} During the tracing phase, if there is a
call to $\f{Trace}(v)$, then the following properties hold:
\begin{enumerate}
\item Exactly one call to $\f{Trace}(v)$ evaluates $\f{TryLock}(v\a{marked})$
to $true$.
\item There is exactly one call from $\f{Trace}(v)$ to $\f{Trace}(v\a{parent})$
if $v$ is not the root.
\end{enumerate}
\end{lem}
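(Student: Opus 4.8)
The plan is to prove both properties together by induction on the height of $v$, treating each node only after its children; this is the natural direction, since for an internal $v$ a call to $\f{Trace}(v)$ can only be produced by a recursive call inside $\f{Trace}(w)$ for a child $w$ of $v$. First I would record the two ways $\f{Trace}$ is ever invoked in the tracing phase --- as $\f{Trace}(X\a{node})$ for a direct pointer $X$ in $P$ (and $X\a{node}$ is always a leaf of $T$), or as $\f{Trace}(w\a{parent})$ inside $\f{Trace}(w)$, which happens exactly when $\f{TryLock}(w\a{marked})$ returns $true$ --- together with the fact that no $\f{Unlock}$ occurs during the tracing phase, because $\f{ReverseIndex}$ waits for all $\f{Trace}$ calls to finish before it does anything else and only $\f{RFinalize}$ (in the retrieving phase) unlocks marks. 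Consequently each $u\a{marked}$ is unheld until its first successful $\f{TryLock}$ and stays held thereafter, and since each call to $\f{Trace}(u)$ issues exactly one $\f{TryLock}(u\a{marked})$ (its first action) and these operations are FIFO-serialized under the QRMW model, exactly one of them returns $true$ whenever at least one call to $\f{Trace}(u)$ is made.

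For the base case I would take $v$ a leaf with a call to $\f{Trace}(v)$. Since the items addressed by $P$ are distinct, the leaves $X\a{node}$ are pairwise distinct and leaves receive no recursive calls, so $\f{Trace}(v)$ is called exactly once; that single call performs one $\f{TryLock}(v\a{marked})$, which succeeds, establishing Property~1, and Property~2 is then immediate from the code, since the successful call makes exactly one call to $\f{Trace}(v\a{parent})$ when $v$ is not the root.

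For the inductive step, let $v$ be internal with a call to $\f{Trace}(v)$. By the induction hypothesis at each child $w$ for which $\f{Trace}(w)$ is called, Property~1 gives a unique successful $\f{TryLock}(w\a{marked})$ and Property~2 gives exactly one resulting call $\f{Trace}(w\a{parent}) = \f{Trace}(v)$; children at which $\f{Trace}$ is never called contribute nothing, as the failed $\f{TryLock}$ attempts do not recurse. Hence the number of calls to $\f{Trace}(v)$ equals the number of children of $v$ at which $\f{Trace}$ is called --- between $1$ and $2$, and at least $1$ by hypothesis --- so by the serialization observation above exactly one of these $\f{Trace}(v)$ calls has $\f{TryLock}(v\a{marked})$ return $true$. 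This is Property~1, and Property~2 follows from the code exactly as in the base case.

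The only real subtlety I anticipate is making the lock argument airtight: one must verify that the serialized $\f{TryLock}(v\a{marked})$ operations see a lock that starts free and is never unlocked before the retrieving phase, so that ``exactly one succeeds'' is genuinely forced rather than merely typical, and that $\f{Trace}$ is invoked nowhere except the two places listed, so that the number of $\f{Trace}(v)$ calls is exactly pinned down by the children. With those in hand, the upward induction that threads together ``one successful $\f{TryLock}$ per node'' and ``one recursive call per node'' is routine.
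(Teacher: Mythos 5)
Your proof is correct and rests on the same key facts as the paper's (one \verb|TryLock| per call, the flag starts $false$ and is never unlocked during the tracing phase, so exactly one \verb|TryLock| succeeds, and only the successful call recurses); the paper states exactly this in two sentences. Your induction on height, which additionally counts the exact number of calls to $\f{Trace}(v)$, is sound but unnecessary for the lemma as stated, since the hypothesis already grants at least one call and the lock semantics alone force uniqueness of the successful one.
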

\begin{proof}
Property~1 is obvious from the definition of $\f{TryLock}$ and the
fact that (during the tracing phase) no $v\a{marked}$ is ever set
to $false$. Property~2 is an immediate consequence of Property~1.
\end{proof}
From Property~1 of the foregoing lemma it is clear that, after the
tracing phase, every node $v$ of $T$ is marked (i.e.~$v\a{marked}=true$)
iff $v$ is along some path from a leaf $X\a{node}$ for some DirectPointer
$X$ in $P$. Thus the correctness of the retrieving phase follows
from the correctness of $\f{Join}$ (\ref{thm:23tree-join-correct})
in the same manner as the correctness of sorted batch access (\ref{thm:sorted-access-correct}).
So we are left with proving the desired cost bounds.

Next we bound the marked nodes using a combinatorial lemma, which
in fact applies not just to BBTs as needed here but also to any $O(1)$-log-splitting
full binary tree (see \ref{def:log-splitting}).
\begin{lem}[Subtree Size Bound]
\label{lem:subtree-size-bound} Take any $c$-log-splitting full
binary tree $T$, with $n$ leaves of which $k$ are marked where
$k>0$, and with each internal node marked iff it is on a path from
the root to a marked leaf. Then $T$ has at most $\left(c\cdot(k+1)\cdot\log_{2}\frac{n}{k}+2k\right)$
marked nodes.\end{lem}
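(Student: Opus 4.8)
The plan is to classify the marked nodes into three kinds and bound each kind separately. Write $x_1<x_2<\dots<x_k$ for the marked leaves in left-to-right order, and for $1\le i\le k-1$ let $G_i$ be the set of unmarked leaves lying strictly between $x_i$ and $x_{i+1}$, with $G_0$ the unmarked leaves to the left of $x_1$ and $G_k$ those to the right of $x_k$; put $g_i=|G_i|$, so $\sum_{i=0}^{k}g_i=n-k$ and there are $k+1$ of them. A node is marked iff its subtree contains a marked leaf, and since $T$ is full every marked internal node has exactly two children; call it a \emph{branching node} if both children are marked and a \emph{chain node} if exactly one child is marked. The marked nodes are then exactly the $k$ marked leaves, the branching nodes, and the chain nodes.

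First I would show there are exactly $k-1$ branching nodes. The marked nodes form a subtree $M$ (the Steiner tree of the marked leaves, which contains the root since $k>0$), whose leaves are exactly the $k$ marked leaves, in which each branching node has two children, each chain node has one, and each marked leaf has none. Counting the edges of $M$ two ways, $|E(M)|=|V(M)|-1$ and $|E(M)|=2B+C$ where $B,C$ are the numbers of branching and chain nodes and $|V(M)|=k+B+C$, so $2B+C=k+B+C-1$, i.e. $B=k-1$.

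The main work is bounding $C$. For a chain node $v$ let $v^U$ be the subtree rooted at its unmarked child; these contain no marked leaf and are pairwise disjoint (a chain node lying below another sits inside the latter's marked child). Since the leaves under $v$ form a consecutive block and the marked ones among them are exactly some run $x_a,\dots,x_b$, the leaves of $v^U$ form a consecutive block of unmarked leaves squeezed either just left of $x_a$ or just right of $x_b$; thus each $v^U$ lies entirely inside a single gap $G_i$ — say $v$ \emph{faces} $G_i$. Conversely, for each $i$ the subtrees $v^U$ over the chain nodes facing $G_i$ tile $G_i$ exactly: given an unmarked leaf $y\in G_i$, walk down the root-to-$y$ path to the last node $w$ whose subtree still contains a marked leaf; then the child of $w$ toward $y$ is unmarked while its sibling is marked, so $w$ is a chain node with $y\in w^U$, and as $w^U$ lies in a single gap containing $y$ that gap is $G_i$. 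The $v^U$ facing $G_i$ are disjoint and pairwise non-sibling, hence form a slice of $T$ on the $g_i$ consecutive leaves of $G_i$, so by the log-splitting property (\ref{def:log-splitting}) there are at most $c\log(g_i+1)$ of them. Summing over $i\in\{0,\dots,k\}$ and applying Jensen's inequality (as in \ref{lem:bbt-log-sum}) to the $k+1$ numbers $g_i$ with $\sum_i g_i=n-k$ gives $C\le c\sum_{i=0}^{k}\log(g_i+1)\le c(k+1)\log\frac{n+1}{k+1}\le c(k+1)\log\frac{n}{k}$, where the last step uses $k\le n$. Adding the $k$ marked leaves and $k-1$ branching nodes yields at most $c(k+1)\log\frac{n}{k}+2k-1\le c(k+1)\log\frac{n}{k}+2k$ marked nodes.

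The step I expect to be the main obstacle is the chain-node analysis: verifying that each $v^U$ lies in one gap, that the $v^U$ facing a fixed $G_i$ cover $G_i$ exactly (the argument at the last marked node $w$ on the root-to-$y$ path, using fullness to see $w$ has one marked and one unmarked child), and that this family is indeed a slice (disjointness and the non-sibling condition, including the end gaps $G_0,G_k$). The remaining pieces — the edge count giving $B=k-1$ and the Jensen estimate — are routine.
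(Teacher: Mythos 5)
Your proof is correct and follows essentially the same route as the paper's: partition the unmarked leaves into the $k+1$ gaps between consecutive marked leaves, observe that the remaining marked internal nodes beyond the $k-1$ branching nodes are exactly the parents of the subtrees in the slice covering each gap, bound each slice by $c\log(g_i+1)$ via the log-splitting property, and finish with Jensen's inequality. The only cosmetic difference is that you obtain the count $k-1$ by an edge-counting argument on the Steiner tree of marked leaves, whereas the paper charges one LCA to each consecutive pair of marked leaves; the quantitative content is identical.
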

\begin{proof}
Order the nodes of $T$ according to its in-order traversal. Let $v_{i}$
be the $i$-th marked leaf, and let $d_{i}$ be the number of unmarked
leaves (strictly) between $v_{i}$ and $v_{i+1}$. Also let $d_{0}$
be the number of unmarked leaves before $v_{1}$ and $d_{k}$ be the
number of unmarked leaves after $v_{k}$. Then there are at most $(c\cdot\log_{2}(d_{i}+1)+1)$
marked nodes between $v_{i}$ and $v_{i+1}$, since they are exactly
the least common ancestor of $v_{i},v_{i+1}$ plus the parent of each
subtree in the slice of $T$ that contains the leaves between $v_{i}$
and $v_{i+1}$, and that slice has $c\cdot\log_{2}(d_{i}+1)$ subtrees
(since $T$ is $c$-log-splitting). Similarly, there are at most $c\cdot\log_{2}(d_{0}+1)$
marked internal nodes before $v_{1}$ (since they are exactly the
parents of subtrees in the slice that contains the leaves before $v_{1}$),
and at most $c\cdot\log_{2}(d_{k}+1)$ marked internal nodes after
$v_{k}$. By Jensen's inequality we have $\sum_{i=0}^{k}\log_{2}(d_{i}+1)\le(k+1)\cdot\log_{2}\frac{n+1}{k+1}$.
Thus the total number of marked internal nodes is at most $c\cdot\sum_{i=0}^{k}\log_{2}(d_{i}+1)+(k-1)\le c\cdot(k+1)\cdot\log_{2}\frac{n+1}{k+1}+(k-1)<c\cdot(k+1)\cdot\log_{2}\frac{n}{k}+k$.
\end{proof}

\begin{thm}[Reverse-Indexing Costs]
 Reverse-indexing on an input batch of $b$ direct pointers to distinct
items takes $O\left(b\cdot\log\frac{n}{b}+b\right)$ work and $O(\log n)$
span.\end{thm}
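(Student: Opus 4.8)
The plan is to split the analysis into the \textbf{tracing phase} and the \textbf{retrieving phase}, mirroring the three-phase description, and to reuse wholesale the correctness and cost machinery already built for the joining phase of \nameref{sub:P23T-normal}. For correctness, I would first invoke \ref{lem:tracing-prop}: Property~1 shows that after the tracing phase a node $v$ of $T$ is marked iff it lies on a path from the root to some leaf $X\a{node}$ with $X$ in $P$, so the marked nodes of $T$ form exactly the (contracted) binary merge tree whose leaves are the $b$ target leaves, listed in sorted left-to-right order. Then $\f{Retrieve}$ descends through the marked nodes, treating a marked node with one marked child as a pass-through and one with two marked children as a $\f{Join}$, so by the correctness of $\f{Join}$ under pipelining (\ref{thm:23tree-join-correct}) the returned 23Tree $U$ joins to the sorted 2-3 tree of the $b$ items, exactly as in \ref{thm:sorted-access-correct}; termination of the finalization step $\f{RFinalize}$ (and hence of $\f{ReverseIndex}$) follows because every $\f{Join}$ run either notifies its barrier or makes its argument a queued tree that is later de-queued, again as in \ref{thm:sorted-access-correct}.

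For the cost bounds, the unifying quantity is the number of marked nodes. Treating $T$ as a BBT (as the algorithm does), it is a full binary tree and $4$-log-splitting (\ref{def:log-splitting}), so \nameref{lem:subtree-size-bound} (\ref{lem:subtree-size-bound}) with $k=b$ gives at most $4(b+1)\log\frac{n}{b}+2b \in O\!\left(b\cdot\log\frac{n}{b}+b\right)$ marked nodes. For the tracing phase: the recursive spawning of $\f{Trace}(X\a{node})$ over the BBT $P$ costs $O(b)$ work and $O(\log b)\wi O(\log n)$ span (as $b\le n$); by \ref{lem:tracing-prop} each flag $v\a{marked}$ is touched by at most two $\f{Trace}$ calls, so every $\f{TryLock}$ runs in $O(1)$ time, and the number of $\f{Trace}$ calls is $b$ plus one per marked node (Property~2), i.e.\ $O\!\left(b\cdot\log\frac{n}{b}+b\right)$; each traversal has length at most the height of $T$, which is $O(\log n)$, so the tracing phase takes $O\!\left(b\cdot\log\frac{n}{b}+b\right)$ work and $O(\log n)$ span. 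For the retrieving phase: the $\f{Retrieve}$ descent and the $\f{RFinalize}$ clean-up each touch only marked nodes with $O(1)$ local work, and their span is bounded by the height of the marked subtree of $T$, which is $O(\log n)$; the pipelined joining inside $\f{Retrieve}$ contributes $O(1)$ work per $\f{Join}$ call (at most $b-1$ of them) plus $O(1)$ per feeding, and its span is bounded exactly as in \nameref{thm:joining-span} (\ref{thm:joining-span}) by $O(\log b+\log n)=O(\log n)$, since the merge tree has height $O(\log n)$ and the output tree $U$ has height $O(\log b)$. Combining these, together with the immediate conversion of $U$ to a batch, yields the claimed bounds once the number of feedings is controlled.

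The main obstacle is precisely bounding the total number of $\f{FeedLeft}/\f{FeedRight}$ operations performed during the retrieving phase by $O\!\left(b\cdot\log\frac{n}{b}+b\right)$ (equivalently, $O(1)$ amortized per marked node) — this is the ``more careful analysis'' alluded to in the overview, and the naive bound from \nameref{thm:joining-work} only gives $O(\log n)$ feedings per $\f{Join}$, hence $O(b\log n)$. The approach I would take is to refine that feeding count: at a branching marked node $v$ with recursive results $L=\f{Retrieve}(v\a{left})$ and $R=\f{Retrieve}(v\a{right})$, the $\f{Join}$ feeds the shorter of $L,R$ down the spine of the taller at most $\bigl|L\a{height}-R\a{height}\bigr|+O(1)$ times, and $L\a{height}$, $R\a{height}$ are each within a constant factor of the logarithm of the number of marked leaves below $v\a{left}$, $v\a{right}$ respectively. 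Summing these telescoping height-gaps over the merge tree, grouped by the height of the branching node $v$ in $T$, and exploiting that $T$ is a BBT (so the marked subtree has height $O(\log n)$ and, at each level, the marked-leaf counts below distinct marked nodes sum to at most $b$, in the style of \nameref{lem:bbt-log-sum}), should collapse the total to $O(1)$ per marked node, hence to $O\!\left(b\cdot\log\frac{n}{b}+b\right)$ via \ref{lem:subtree-size-bound}. Making this charging argument airtight — in particular reconciling the ``$\log m(v)$ feedings per node'' bound of the joining-phase proof, which relied on the merge tree itself being a BBT, with the present possibly-unbalanced merge tree — is the crux; everything else is a routine re-run of the sorted-batch-access proofs.
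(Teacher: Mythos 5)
Your overall structure matches the paper's: tracing-phase work charged at $O(1)$ per marked node via \ref{lem:tracing-prop}, correctness inherited from \ref{thm:23tree-join-correct} exactly as in \ref{thm:sorted-access-correct}, the marked-node count $m\in O\left(b\cdot\log\frac{n}{b}+b\right)$ from \ref{lem:subtree-size-bound}, and the span bound recycled from \nameref{thm:joining-span}. But the step you yourself flag as the crux — showing the total pipelined-joining work is $O(m)$ — is genuinely left open in your write-up, and the route you sketch is shakier than it needs to be. Your per-branching-node bound of $\left|L\a{height}-R\a{height}\right|+O(1)$ feedings is correct, but the claim that a level-by-level, \nameref{lem:bbt-log-sum}-style argument "collapses" the sum to $O(1)$ per marked node is unsupported: that lemma requires the tree being summed over to be balanced, and the contracted merge tree of branching marked nodes can be an arbitrary (e.g.\ path-like) shape, so there is no per-level mass bound to feed into Jensen. (For a caterpillar of LCAs the sum of height-gaps is $\Theta(b\log b)$, which happens to still be $O(m)$ there, but your proposed machinery does not establish this in general.)

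The paper closes this gap with a simpler charge that avoids the merge-tree geometry entirely. Since $L=\f{Retrieve}(v\a{left})$ contains only the marked leaves under $v\a{left}$, induction with \ref{thm:23tree-join-correct} gives $L\a{height}\le v\a{left}\a{height}$, and likewise for $R$; hence the pipelined processing triggered by the $\f{Join}$ at a branching node $v$ costs $O(\max(L\a{height},R\a{height})+1)\wi O(v\a{height})$ work — charged against $v$'s height \emph{in $T$}, not against the height gap. Then, because $T$ is a BBT, $v\a{height}$ is at most twice the length of the shortest path from $v$ to a leaf, and every branching node has two children in the marked subtree, so \nameref{lem:tree-path-sum} (\ref{lem:tree-path-sum}) applied to the marked subtree gives $\sum_{v\in C}v\a{height}\in O(m)$ directly via edge-disjoint paths. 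Substituting \ref{lem:subtree-size-bound} finishes the proof. So the missing ingredient in your argument is precisely this pairing of the $O(v\a{height})$-per-join charge with \ref{lem:tree-path-sum}; with it, no telescoping or balance assumption on the merge tree is needed.
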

\begin{proof}
Clearly spawning the calls to $\f{Trace}(X\a{node})$ for each DirectPointer
$X$ in $P$ takes $O(b)$ work and $O(\log b)\wi O(\log n)$ span.
And by \ref{lem:tracing-prop} Property~2, there are at most $2$
calls to $\f{Trace}(v)$ for each node $v$ in $T$, so $\f{TryLock}(v\a{marked})$
takes $O(1)$ time, and hence the calls to $\f{Trace}$ take $O(m)$
total work where $m$ is the number of marked nodes after the tracing
phase.

After that, the call to $\f{Retrieve}(T\a{root})$ takes $O(m)$ work
plus the work taken by the calls to $\f{Join}$, and takes $O(\log n)$
span. Note that $\f{Retrieve}(v)$ calls $\f{Join}$ only if $v$
has two marked children, and always returns a 23Tree with height at
most $v\a{height}$ by \nameref{thm:23tree-join-correct} (\ref{thm:23tree-join-correct})
and induction. Thus the pipelined processing triggered by each call
from $\f{Retrieve}(v)$ to $\f{Join}(L,R,v\a{done})$ (i.e.~pushing
$L$ down $R$ or vice versa) takes $O(\max(L\a{height},R\a{height})+1)\wi O(v\a{height})$
work by the same reasoning as in the proof of \nameref{thm:joining-work}
(\ref{thm:joining-work}), and hence the calls to $\f{Join}$ take
$O\left(\sum_{v\in C}v\a{height}\right)$ work in total, which is
$O(m)$ work by the \nameref{lem:tree-path-sum} (\ref{lem:tree-path-sum})
since $v\a{height}$ is at most twice the length of the shortest path
from $v$ to a leaf. Finally, $\f{RFinalize}(T\a{root})$ takes $O(m)$
work, and takes $O(\log n)$ span by the same reasoning as in the
proof of the \nameref{thm:joining-span} (\ref{thm:joining-span}).

Therefore the total work taken is $O(m)$, and $m\in O\left(b\cdot\log\frac{n}{b}+b\right)$
by the \nameref{lem:subtree-size-bound} (\ref{lem:subtree-size-bound}),
so we are done.
\end{proof}

\subsection{Unsorted Batch Access}

\label{sub:P23T-full-access}

We now explain how to implement \textbf{unsorted access} on an input
batch $B$ of $b$ accesses. How we perform $B$ depends on the current
size $n$ of $T$:
\begin{itemize}
\item If $b\le n$:

\begin{enumerate}
\item Parallel sort (\ref{def:par-msort}) $B$, taking $O(b\cdot\log b)$
work and $O\left((\log b)^{2}\right)$ span.
\item Use \nameref{sub:P23T-normal} (\ref{sub:P23T-normal}) to perform
the now item-sorted $B$ on $T$, taking $O(b\cdot\log n)$ work and
$O(\log b+\log n)$ span.
\end{enumerate}
\item If $b>n$:

\begin{enumerate}
\item Use \nameref{sub:P23T-unsorted} (\ref{sub:P23T-unsorted}) to perform
all search/updates in $B$ on $T$, including insertions on existing
items (which are treated as updates), and to perform all deletions
on non-existent items in $T$, taking $O(b\cdot\log n)$ work and
$O\left(\log b\cdot\log n\right)$ span.
\item Parallel filter (\ref{sub:par-filter}) out all the operations completed
in the previous step from $B$, taking $O(b)$ work and $O(\log b)$
span.
\item Parallel entropy-sort (\ref{def:par-esort}) the leftover batch $B$,
combining operations on the same item, taking $O\left(b\cdot\log m+b\right)$
work and $O\left((\log b)^{2}\right)$ span where $m$ is the final
size of $B$. Let $c$ be the final number of insertions in $B$,
and $d$ be the final number of deletions in $B$. Then the new size
$n'$ of $T$ is at least $c$, and $d\le n$ since there can only
be $n$ successful deletions. Thus $m=c+d\le n+n'$, and so $O\left(b\cdot\log m\right)\wi O(b\cdot\log(n+n'))\wi O(b\cdot\log\max(n,n'))$.
\item Use \nameref{sub:P23T-normal} (\ref{sub:P23T-normal}) to perform
the now item-sorted $B$ on $T$, taking $O(m\cdot\log n)$ work and
$O(\log m+\log n)$ span.
\end{enumerate}
\end{itemize}
In both cases, performing the batch $B$ takes $O\left(b\cdot\log\max(n,n')\right)$
work and $O\left(\left(\log b\right)^{2}+\log n\right)$ span, where
$n'$ is the size of $\tr$ after this batch operation.

\subsection{Batch Joining}

\label{sub:P23T-join}

We have finished describing how to implement batch operations on \textbf{\textit{sorted}}
instances of $\tr$, namely those whose items are in sorted order
(when listed according to the order of the leaves from left to right).
But we can also consider \textbf{\textit{unsorted}} instances of $\tr$,
namely those whose items are not required to be in sorted order. It
is easy to see that the \textbf{\textit{joining phase}} of the \nameref{sub:P23T-normal}
(\ref{sub:P23T-normal}) does not depend on the item ordering at all,
and so we can perform a \textbf{batch joining} of any batch $B$ of
$b$ unsorted instances of $\tr$ with total size $n$ by the same
\textbf{\textit{pipelined joining scheme}} (\ref{sub:P23T-normal}
joining phase). This takes $O\left(b\cdot\log\left(\frac{n}{b}+1\right)+b\right)$
work and $O(\log b+\log n)$ span, by the same proof as for \ref{thm:joining-work}
and \ref{thm:joining-span}.

\section{Optimal Parallel Sorted-Set}

\label{sec:PSS}

The red-black tree encoding the 2-3 tree in $\tr$ is a BBT, so it
can be used as an input batch on another instance of $\tr$. In particular,
given (sorted) instances $X,Y$ of $\tr$ with $m,n$ items respectively
such that $m\ge n$, to compute $X\cap Y$, $X\cup Y$ or $X\less Y$
we can treat $Y$ as an input batch (of searches, insertions or deletions
respectively) for $X$, taking $O\left(n\cdot\log\left(\frac{m}{n}+1\right)\right)$
work and $O(\log m+\log n)$ span. We can also compute $Y\less X=Y\less(X\cap Y)$
within the same bounds, since $i\cdot\log\left(\frac{n}{i}+1\right)<n+i$
where $i=\#(X\cap Y)\le n$. The work bound is \textbf{\textit{information-theoretically
optimal}} (in the comparison model), and the span bound is optimal
in the PPM model.

As written, the sorted batch insertions and deletions (\ref{sub:P23T-normal})
are destructive, and so $\tr$ is not persistent. However, if we want
to use $\tr$ only as a sorted-set data structure supporting intersections,
unions and difference, then we do not need the parent pointers used
in reverse-indexing (\ref{sub:P23T-reverse}), and hence it is not
hard to make $\tr$ \textbf{\textit{persistent}}. To do so, we modify
$\f{Collate}$ (\ref{def:sorted-access}) to perform the sequential
joins and execute the access non-destructively when computing the
resulting 2-3 tree $X$ at each leaf of the input batch $I$, and
then replace $X$ by a copy with just the left and right spine deep-copied.
This deep-copying is necessary because $\f{Join}$ (\ref{def:23tree-join})
may modify the spine nodes of $X$.

\section{Conclusions}

This paper presents a batch-parallel 2-3 tree for the QRMW PPM model
that is essentially optimal and can be trivially used to implement
an optimal sorted-set data structure. It can be seen that clever pipelining
can be used to achieve optimal work and span bounds that do not rely
on any `tricks' such as $O(1)$-time prefix-sums over all $p$ processors,
or $O(1)$ concurrent memory accesses over all $p$ processors, or
even pointer arithmetic. It raises the interesting question of just
how much can be done in the QRMW PPM model, which can be argued to
capture the intrinsic abstract costs of the problem.

Also, it is intriguing that the parallel sorted-set data structure
described in \cite{blelloch2019forkjoinalgo}, which is information-theoretically
optimal under a different computation model, namely the binary-forking
model with test-and-set, seems to crucially rely on concurrent reads
taking $O(1)$ time even if the contention is arbitrarily high, and
so cannot be translated to the QRMW model. However, that data structure
does not use any RMW operations besides test-and-set, whereas the
parallel sorted-set data structure in this paper (\ref{sec:PSS})
implicitly uses fetch-and-increment and fetch-and-decrement (in the
reactivation wrapper). Is such a trade-off necessary, or can we have
optimal parallel sorted-sets in the QRMW model that only use read,
write and test-and-set for memory accesses?

Lastly, this batch-parallel 2-3 tree supports sorted batch access
much more efficiently than unsorted batch access, and there seems
to be an intrinsic disparity. But is it possible to implement unsorted
batch access with better performance bounds?

\clearpage{}

\setlength{\baselineskip}{1em}

\phantomsection\bibliographystyle{plain}
\addcontentsline{toc}{section}{\refname}\bibliography{P23T}

\begin{thebibliography}{10}

\bibitem{javaconcurrency}
{Lesson: Concurrency (The Java{\texttrademark} Tutorials > Essential Classes)}.
\newblock \url{https://docs.oracle.com/javase/tutorial/essential/concurrency}.
\newblock Accessed on 12 Sep 2019.

\bibitem{OPWM}
Kunal Agrawal, Seth Gilbert, and Wei~Quan Lim.
\newblock Parallel working-set search structures.
\newblock In {\em Proceedings of the 30th ACM symposium on Parallelism in
  algorithms and architectures}, pages 321--332. ACM, 2018.

\bibitem{AkhremtsevS16}
Yaroslav Akhremtsev and Peter Sanders.
\newblock Fast parallel operations on search trees.
\newblock In {\em 23rd {IEEE} International Conference on High Performance
  Computing, HiPC 2016}, pages 291--300, 2016.

\bibitem{BlellochFS16}
Guy~E. Blelloch, Daniel Ferizovic, and Yihan Sun.
\newblock Just join for parallel ordered sets.
\newblock In {\em Proceedings of the 28th ACM Symposium on Parallelism in
  Algorithms and Architectures}, pages 253--264, 2016.

\bibitem{blelloch2019forkjoinalgo}
Guy~E Blelloch, Jeremy~T Fineman, Yan Gu, and Yihan Sun.
\newblock Optimal parallel algorithms in the binary-forking model.
\newblock In {\em Proceedings of the 32nd ACM Symposium on Parallelism in
  Algorithms and Architectures}, pages 89--102. ACM, 2020.

\bibitem{BlellochRe97}
Guy~E. Blelloch and Margaret Reid-Miller.
\newblock Pipelining with futures.
\newblock In {\em Proceedings of the ninth annual ACM symposium on Parallel
  algorithms and architectures}, SPAA '97, pages 249--259, New York, NY, USA,
  1997. ACM.

\bibitem{CormenLeRi09}
Thomas~H. Cormen, Charles~E. Leiserson, Ronald~L. Rivest, and Clifford Stein.
\newblock {\em Introduction to Algorithms}.
\newblock The MIT Press, third edition, 2009.

\bibitem{clrs2009algointro}
Thomas~H Cormen, Charles~E Leiserson, Ronald~L Rivest, and Clifford Stein.
\newblock {\em Introduction to algorithms}.
\newblock MIT press, 2009.

\bibitem{dwork1997contention}
Cynthia Dwork, Maurice Herlihy, and Orli Waarts.
\newblock Contention in shared memory algorithms.
\newblock {\em Journal of the ACM (JACM)}, 44(6):779--805, 1997.

\bibitem{Cilk}
Matteo Frigo, Charles~E. Leiserson, and Keith~H. Randall.
\newblock The implementation of the {Cilk}-5 multithreaded language.
\newblock In {\em Proceedings of the ACM SIGPLAN Conference on Programming
  Language Design and Implementation (PLDI)}, pages 212--223, 1998.

\bibitem{gibbons1996asyncqrqw}
Phillip~B Gibbons, Yossi Matias, and Vijaya Ramachandran.
\newblock The queue-read queue-write asynchronous {PRAM} model.
\newblock In {\em European Conference on Parallel Processing}, pages 277--292.
  Springer, 1996.

\bibitem{gibbons1998qrqw}
Phillip~B Gibbons, Yossi Matias, and Vijaya Ramachandran.
\newblock The queue-read queue-write {PRAM} model: Accounting for contention in
  parallel algorithms.
\newblock {\em SIAM Journal on Computing}, 28(2):733--769, 1998.

\bibitem{OPFS}
Seth Gilbert and Wei~Quan Lim.
\newblock Parallel finger search structures.
\newblock In {\em 33rd International Symposium on Distributed Computing (DISC
  2019)}. Schloss Dagstuhl-Leibniz-Zentrum fuer Informatik, 2019.

\bibitem{goodrich1996parallelsort}
Michael~T Goodrich and S~Rao Kosaraju.
\newblock Sorting on a parallel pointer machine with applications to set
  expression evaluation.
\newblock {\em Journal of the ACM (JACM)}, 43(2):331--361, 1996.

\bibitem{higham1991parallelbtree}
Lisa Higham and Eric Schenk.
\newblock Maintaining b-trees on an erew pram.
\newblock 1991.

\bibitem{IntelCilkPlus13}
Intel Corporation.
\newblock {\em Intel {Cilk Plus} Language Extension Specification, Version
  1.1}, 2013.
\newblock Document 324396-002US. Available from
  \url{http://cilkplus.org/sites/default/files/open_specifications/Intel_Cilk_plus_lang_spec_2.htm}.

\bibitem{OpenMP}
{OpenMP Architecture Review Board}.
\newblock {OpenMP} application program interface, version 4.0.
\newblock Available from
  \url{http://www.openmp.org/mp-documents/OpenMP4.0.0.pdf}, July 2013.

\bibitem{park2001parallelrbtree}
Heejin Park and Kunsoo Park.
\newblock Parallel algorithms for red-black trees.
\newblock {\em Theoretical Computer Science}, 262(1-2):415--435, 2001.

\bibitem{paul1983paralleldict}
Wolfgang Paul, Uzi Vishkin, and Hubert Wagener.
\newblock Parallel dictionaries on 2--3 trees.
\newblock {\em Automata, Languages and Programming}, pages 597--609, 1983.

\bibitem{rauber2013parallel}
Thomas Rauber and Gudula R{\"u}nger.
\newblock {\em Parallel programming: For multicore and cluster systems}.
\newblock Springer Science \& Business Media, 2013.

\bibitem{TBB}
James Reinders.
\newblock {\em Intel Threading Building Blocks: Outfitting C++ for Multi-Core
  Processor Parallelism}.
\newblock O'Reilly, 2007.

\bibitem{secchi2011contentionxmt}
Simone Secchi, Antonino Tumeo, and Oreste Villa.
\newblock Contention modeling for multithreaded distributed shared memory
  machines: The {Cray XMT}.
\newblock In {\em Proceedings of the 2011 11th IEEE/ACM International Symposium
  on Cluster, Cloud and Grid Computing}, pages 275--284. IEEE Computer Society,
  2011.

\bibitem{TPL}
{The Task Parallel Library}.
\newblock \url{http://msdn.microsoft.com/en-us/magazine/cc163340.aspx}, October
  2007.

\end{thebibliography}

\end{document}